\title{{\scshape\textbf{Existence of Travelling Wave Solutions to the Maxwell-Pauli and Maxwell-Schrödinger Systems}}}
\author{Kim Petersen and Jan Philip Solovej}
\date{\small Department of Mathematical Sciences, University of Copenhagen,\\
Universitetsparken 5, DK-2100 Copenhagen, Denmark,\\
Email adresses: kp@math.ku.dk and solovej@math.ku.dk\\
\vspace{0.2cm}\copyright\ 2013 by the authors}
\theoremstyle{plain}
\newtheorem{saetning}{Theorem}
\newtheorem{lemma}[saetning]{Lemma}
\newtheorem{proposition}[saetning]{Proposition}
\theoremstyle{plain}
\newtheorem{definition}[saetning]{Definition}
\newtheorem{bemaerkning}[saetning]{Remark}
\theoremstyle{nonumberplain}
\newtheorem{proof}{Proof}
\begin{document}

\maketitle

\begin{abstract}
We study two mathematical descriptions of a charged particle interacting with it's self-generated electromagnetic field. The first model is the one-body Maxwell-Schrödinger system where the interaction of the spin with the magnetic field is neglected and the second model is the related one-body Maxwell-Pauli system where the spin-field interaction is included. We prove that there exist travelling wave solutions to both of these systems provided that the speed $|\bm{v}|$ of the wave is not too large. Moreover, we observe that the energies of these solutions behave like $\frac{m\bm{v}^2}{2}$ for small velocities of the particle, which may be interpreted as saying that the effective mass of the particle is the same as it's bare mass.
\\ \\
Mathematics Subject Classification 2010: 35Q51, 35Q40, 35Q61\\
\end{abstract}
\chapter{Introduction}
Consider a single spin-$\frac{1}{2}$ particle of mass $m>0$ and charge $Q\in\mathbb{R}\setminus\{0\}$ interacting with it's self-generated electromagnetic field -- the Maxwell-Schrödinger system in Coulomb gauge says\footnote{We use Gaussian units.} that
\begin{align}
\begin{split}\label{MS}
\Box \bm{A}_{\mathrm{t}}&=\frac{4\pi}{c} P\bm{J}_{\mathrm{S}}[\psi_{\mathrm{t}},\bm{A}_{\mathrm{t}}],\\
i\hbar\partial_t\psi_{\mathrm{t}}&=\Bigl(\frac{1}{2m}\nabla_{\mathrm{S},\bm{A}_{\mathrm{t}}}^2+\mathscr{E}_{\mathrm{EM}}[\bm{A}_{\mathrm{t}},\partial_t\bm{A}_{\mathrm{t}}]\Bigr)\psi_{\mathrm{t}},\\
\mathrm{div}\bm{A}_{\mathrm{t}}&=0.
\end{split}
\end{align}
where $\hbar>0$ is the reduced Planck constant, $\psi_{\mathrm{t}}(t):\mathbb{R}^3\to \mathbb{C}^2$ is the quantum mechanical wave function describing the particle, $\bm{A}_{\mathrm{t}}(t):\mathbb{R}^3\to\mathbb{R}^3$ is the classical magnetic vector potential induced by the particle, with $c>0$ denoting the speed of light we let $\nabla_{\mathrm{S},\bm{A}_{\mathrm{t}}}=i\hbar\nabla+\frac{Q}{c}\bm{A}_{\mathrm{t}}$ denote the covariant derivative with respect to $\bm{A}_{\mathrm{t}}$, $\Box=\frac{1}{c^2}\partial_t^2-\Delta$ is the d'Alembertian, $P=1-\nabla\mathrm{div}\Delta^{-1}$ is the Helmholtz projection onto the solenoidal subspace of divergence free vector fields, $\mathscr{E}_{\mathrm{EM}}[\bm{A}_{\mathrm{t}},\partial_t\bm{A}_{\mathrm{t}}](t)$ is the energy
\begin{align*}
\mathscr{E}_{\mathrm{EM}}[\bm{A}_{\mathrm{t}},\partial_t\bm{A}_{\mathrm{t}}](t)=\frac{1}{8\pi}\int_{\mathbb{R}^3}\Bigl(|\nabla\times \bm{A}_{\mathrm{t}}(t)(\bm{y})|^2+\Bigl|\frac{1}{c}\partial_t\bm{A}_{\mathrm{t}}(t)(\bm{y})\Bigr|^2\Bigr)\,\mathrm{d}\bm{y}
\end{align*}
associated with the electromagnetic field and $\bm{J}_{\mathrm{S}}[\psi_{\mathrm{t}},\bm{A}_{\mathrm{t}}](t):\mathbb{R}^3\to\mathbb{R}^3$ denotes the probability current density given by
\begin{align*}
\bm{J}_{\mathrm{S}}[\psi_{\mathrm{t}},\bm{A}_{\mathrm{t}}](t)(\bm{x})=-\frac{Q}{m}\mathrm{Re}\bigl\langle\psi_{\mathrm{t}}(t)(\bm{x}),\nabla_{\mathrm{S},\bm{A}_{\mathrm{t}}}\psi_{\mathrm{t}}(t)(\bm{x})\bigr\rangle.
\end{align*}
Here, $\langle\cdot,\cdot\rangle$ is the usual inner product in $\mathbb{C}^2$ and $|\cdot|$ denotes the norm induced by this inner product. We will also study an alternative (more accurate) description of the physical system's time evolution that takes the interactions between the magnetic field and the quantum mechanical spin of the particle into account. By letting $\bm{\sigma}$ denote the $3$-vector with the Pauli matrices
\begin{align*}
\sigma^1=\begin{pmatrix}0&1\\1&0\end{pmatrix},\quad \sigma^2=\begin{pmatrix}0&-i\\i&0\end{pmatrix}\quad\textrm{and}\quad \sigma^3=\begin{pmatrix}1&0\\0&-1\end{pmatrix}
\end{align*}
as components we can write the Maxwell-Pauli system in Coulomb gauge as
\begin{align}
\begin{split}\label{MP}
\Box \bm{A}_{\mathrm{t}}&=\frac{4\pi}{c} P\bm{J}_{\mathrm{P}}[\psi_{\mathrm{t}},\bm{A}_{\mathrm{t}}],\\
i\hbar\partial_t\psi_{\mathrm{t}}&=\Bigl(\frac{1}{2m}\nabla_{\mathrm{P},\bm{A}_{\mathrm{t}}}^2+\mathscr{E}_{\mathrm{EM}}[\bm{A}_{\mathrm{t}},\partial_t\bm{A}_{\mathrm{t}}]\Bigr)\psi_{\mathrm{t}},\\
\mathrm{div}\bm{A}_{\mathrm{t}}&=0.
\end{split}
\end{align}
Here, $\nabla_{\mathrm{P},\bm{A}_{\mathrm{t}}}$ is short for $\bm{\sigma}\cdot\nabla_{\mathrm{S},\bm{A}_{\mathrm{t}}}$ whose square by definition is the Pauli operator -- the Lichnerowicz formula says that this operator can alternatively be expressed as
\begin{align}
\nabla_{\mathrm{P},\bm{A}_{\mathrm{t}}}^2=\nabla_{\mathrm{S},\bm{A}_{\mathrm{t}}}^2-\frac{\hbar Q}{c}\bm{\sigma}\cdot\nabla\times\bm{A}_{\mathrm{t}}.\label{Pauli}
\end{align}
The probability current density $\bm{J}_{\mathrm{P}}[\psi_{\mathrm{t}},\bm{A}_{\mathrm{t}}](t):\mathbb{R}^3\to\mathbb{R}^3$ is given by
\begin{align*}
\bm{J}_{\mathrm{P}}[\psi_{\mathrm{t}},\bm{A}_{\mathrm{t}}](t)(\bm{x})=-\frac{Q}{m}\mathrm{Re}\bigl\langle \psi_{\mathrm{t}}(t)(\bm{x}),\bm{\sigma}\nabla_{\mathrm{P},\bm{A}_{\mathrm{t}}}\psi_{\mathrm{t}}(t)(\bm{x})\bigr\rangle.
\end{align*}

In the literature, the Maxwell-Schrödinger system often refers to the following equations in $\psi_{\mathrm{t}}(t):\mathbb{R}^3\to \mathbb{C}^2$, $\bm{A}_{\mathrm{t}}(t):\mathbb{R}^3\to \mathbb{R}^3$ and $\varphi_{\mathrm{t}}(t):\mathbb{R}^3\to \mathbb{R}$
\begin{align}
\begin{split}\label{literature}
-\Delta\varphi_{\mathrm{t}}-\frac{1}{c}\partial_t\mathrm{div}\bm{A}_{\mathrm{t}}&=4\pi Q|\psi_{\mathrm{t}}|^2,\\
\Box\bm{A}_{\mathrm{t}}+\nabla \Bigl(\frac{1}{c}\partial_t\varphi_{\mathrm{t}}+\mathrm{div}\bm{A}_{\mathrm{t}}\Bigr)&=\frac{4\pi}{c}\bm{J}_{\mathrm{S}}[\psi_{\mathrm{t}},\bm{A}_{\mathrm{t}}],\\
i\hbar\partial_t\psi_{\mathrm{t}}&=\Bigl(\frac{1}{2m}\nabla_{\mathrm{S},\bm{A}_{\mathrm{t}}}^2+Q\varphi_{\mathrm{t}}\Bigr)\psi_{\mathrm{t}}.
\end{split}
\end{align}
This system approximates the quantum field equations for an electrodynamical nonrelativistic many-body system. When expressed in Coulomb gauge it reads
\begin{align}
\begin{split}\label{literatureCoulomb}
\Box\bm{A}_{\mathrm{t}}&=\frac{4\pi}{c}P\bm{J}_{\mathrm{S}}[\psi_{\mathrm{t}},\bm{A}_{\mathrm{t}}],\\
i\hbar\partial_t\psi_{\mathrm{t}}&=\Bigl(\frac{1}{2m}\nabla_{\mathrm{S},\bm{A}_{\mathrm{t}}}^2+Q^2\bigl(|\bm{x}|^{-1}*|\psi_{\mathrm{t}}|^2\bigr)\Bigr)\psi_{\mathrm{t}},\\
\mathrm{div}\bm{A}_{\mathrm{t}}&=0,
\end{split}
\end{align}
which only deviates from \eqref{MS} by the absence of the term $\mathscr{E}_{\mathrm{EM}}[\bm{A}_{\mathrm{t}},\partial_t\bm{A}_{\mathrm{t}}]\psi_{\mathrm{t}}$ (making no difference for the existence question studied in this paper) and by the presence of $Q^2\bigl(|\bm{x}|^{-1}*|\psi_{\mathrm{t}}|^2\bigr)\psi_{\mathrm{t}}$ on the right hand side of the Schrödinger equation. The nonlinear term $Q^2\bigl(|\bm{x}|^{-1}*|\psi_{\mathrm{t}}|^2\bigr)\psi_{\mathrm{t}}$ should be perceived as a mean field originating from the Coulomb interactions between the particles present in the many-body system -- when the system only consists of a single particle there simply are no other particles to interact with and so \eqref{MS} is a better decription than \eqref{literatureCoulomb} of the one-body system. In \cite{Coclite}, Coclite and Georgiev observe that there do not exist any nontrivial solutions in the form $(\psi_{\mathrm{t}},\bm{A}_{\mathrm{t}},\varphi_{\mathrm{t}})(t)(\bm{x})=(\mathrm{e}^{-i\omega t}\psi(\bm{x}),\bm{0},\varphi(\bm{x}))$ to the system \eqref{literature} expressed in Lorenz gauge -- they also prove that such solutions do exist when one adds an attractive potential of Coulomb type to the Schrödinger equation. The analogous problem in a bounded space region has been studied by Benci and Fortunato \cite{Benci}. Several authors have studied the existence of solitary solutions to other systems than \eqref{MS} and \eqref{MP}. For example Esteban, Georgiev and Séré \cite{Esteban} prove the existence of stationary solutions to the Maxwell-Dirac system in Lorenz gauge -- in the same paper they also treat the Klein-Gordon-Dirac system. The existence of travelling wave solutions to a certain nonlinear equation describing the dynamics of pseudo-relativistic boson stars in the mean field limit has been proven by Fröhlich, Jonsson and Lenzmann \cite{Enno} and also the existence of solitary water waves has been studied extensively -- let us mention the recent paper by Buffoni, Groves, Sun and Wahlén \cite{Wahlen}. Finally, we mention that the well-posedness of the initial value problem associated with \eqref{literature} expressed in different gauges has been subject to a lot of research -- see \cite{Ta,GNS,NT,NWG} and references therein. In \cite{KP1}, the unique existence of a local solution to the many-body Maxwell-Schrödinger initial value problem expressed in Coulomb gauge is proven. For $j\in\{\mathrm{S},\mathrm{P}\}$ the aim of the present paper is to show that
\begin{align}
\begin{split}\label{MSP}
\Box \bm{A}_{\mathrm{t}}&=\frac{4\pi}{c} P\bm{J}_{j}[\psi_{\mathrm{t}},\bm{A}_{\mathrm{t}}],\\
i\hbar\partial_t\psi_{\mathrm{t}}&=\Bigl(\frac{1}{2m}\nabla_{j,\bm{A}_{\mathrm{t}}}^2+\mathscr{E}_{\mathrm{EM}}[\bm{A}_{\mathrm{t}},\partial_t\bm{A}_{\mathrm{t}}]\Bigr)\psi_{\mathrm{t}},\\
\mathrm{div}\bm{A}_{\mathrm{t}}&=0
\end{split}
\end{align}
admits solutions $(\psi_{\mathrm{t}},\bm{A}_{\mathrm{t}})$ in the form
\begin{align}
\begin{split}\label{travelform}
\psi_{\mathrm{t}}(t)(\bm{x})&=\mathrm{e}^{-i\omega t}\psi(\bm{x}-\bm{v}t),\\
\bm{A}_{\mathrm{t}}(t)(\bm{x})&=\bm{A}(\bm{x}-\bm{v}t),
\end{split}
\end{align}
with $\omega\in\mathbb{R}$, $\bm{v}\in\mathbb{R}^3$ and both of the functions $\psi$ and $\bm{A}$ defined on $\mathbb{R}^3$. As time evolves the shapes of these functions do not change -- the initial states $\psi$ and $\bm{A}$ are simply translated in space with constant velocity $\bm{v}$ (and in case $\omega\neq 0$ the phase of the wave function oscillates too). For this reason solutions in the form \eqref{travelform} are often called \emph{travelling waves}. 
\begin{figure}[h!]
	\centering
		\includegraphics[width=0.17\textwidth]{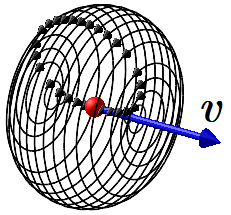}
	\caption{A travelling wave solution models the situation where a particle travels in space at a constant velocity $\bm{v}$ and it's self-generated electromagnetic field travels along with it.}
	\label{fig:TravellingWave}
\end{figure}
To formulate our main theorem ensuring the existence of travelling wave solutions to \eqref{MSP} we let $H^1$ denote the usual Sobolev space of order $1$ and introduce the space $D^1$ of locally integrable functions $A$ on $\mathbb{R}^3$ that have distributional first order derivatives in $L^2$ and vanish at infinity, in the sense that the (Lebesgue-)measure of the set
\begin{align*}
\{\bm{x}\in\mathbb{R}^3\,\mid\, |A(\bm{x})|>t\}
\end{align*}
is finite for all $t>0$. The elements in the space $D^1$ satisfy the Sobolev inequality $\|A\|_{L^6}\leq K_S\|\nabla A\|_{L^2}$ and by equipping $D^1$ with the inner product $(A,B)\mapsto(\nabla A,\nabla B)_{L^2}$ we obtain a Hilbert space in which $C_0^\infty$ is a dense subspace. Also, for $\lambda>0$ we define the quantities
\begin{align*}
\Theta_{j,\pm}^{\lambda}=\begin{cases}\pm c&\textrm{if }j=\mathrm{S},\\-\frac{8\pi K_S^3 Q^2\lambda}{\hbar}\pm\sqrt{\frac{(8\pi)^2K_S^6Q^4\lambda^2}{\hbar^2}+c^2}&\textrm{if }j=\mathrm{P}\end{cases}.
\end{align*}
Our main theorem then asserts the following.
\begin{saetning}\label{main}
For all $j\in\{\mathrm{S},\mathrm{P}\}$, $\lambda>0$ and $\bm{v}\in\mathbb{R}^3$ with $0<|\bm{v}|<\Theta_{j,+}^{\lambda}$ there exist $\omega\in\mathbb{R}$ and functions $(\psi,\bm{A})\in H^1\times D^1$ satisfying $\|\psi\|_{L^2}^2=\lambda$ such that $(\psi_{\mathrm{t}},\bm{A}_{\mathrm{t}})(t)(\bm{x})=\bigl(\mathrm{e}^{-i\omega t}\psi(\bm{x}-\bm{v}t),\bm{A}(\bm{x}-\bm{v}t)\bigr)$ solves \eqref{MSP}.
\end{saetning}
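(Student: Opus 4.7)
The plan is to reduce \eqref{MSP} under the ansatz \eqref{travelform} to a stationary elliptic system, recognise it as the Euler-Lagrange equations of a constrained minimization problem on $H^1\times D^1_\sigma$ (where $D^1_\sigma$ denotes the divergence-free subspace of $D^1$), and produce a minimizer via Lions' concentration-compactness method; the Lagrange multiplier will then furnish $\omega$. Substituting \eqref{travelform} into \eqref{MSP} with $\partial_t\psi_{\mathrm t}=\mathrm e^{-i\omega t}(-i\omega-\bm v\cdot\nabla)\psi$ and $\partial_t\bm A_{\mathrm t}=-\bm v\cdot\nabla\bm A$ reduces the electromagnetic energy to the $(t,\bm x)$-independent constant $\mathscr E_{\bm v}[\bm A]:=\tfrac{1}{8\pi}\!\int(|\nabla\times\bm A|^2+c^{-2}|\bm v\cdot\nabla\bm A|^2)\,\mathrm d\bm x$ and yields the stationary system
\begin{align*}
-\Delta\bm A+\tfrac{1}{c^2}(\bm v\cdot\nabla)^2\bm A&=\tfrac{4\pi}{c}P\bm J_j[\psi,\bm A],\qquad \mathrm{div}\,\bm A=0,\\
\tfrac{1}{2m}\nabla_{j,\bm A}^2\psi+i\hbar\bm v\cdot\nabla\psi&=\hbar\omega'\psi,
\end{align*}
with $\omega'=\omega-\mathscr E_{\bm v}[\bm A]/\hbar$. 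Using $\nabla\times\nabla\times\bm A=-\Delta\bm A$ on $D^1_\sigma$, this is exactly the Euler-Lagrange system of
\begin{align*}
\mathcal E_{\bm v}(\psi,\bm A):=\tfrac{1}{2m}\!\int\!|\nabla_{j,\bm A}\psi|^2\,\mathrm d\bm x+\hbar\!\int\!\mathrm{Re}\langle\psi,i\bm v\cdot\nabla\psi\rangle\,\mathrm d\bm x+\tfrac{1}{8\pi}\!\int\!\bigl(|\nabla\times\bm A|^2-\tfrac{1}{c^2}|\bm v\cdot\nabla\bm A|^2\bigr)\,\mathrm d\bm x
\end{align*}
under the constraint $\|\psi\|_{L^2}^2=\lambda$ with multiplier $\hbar\omega'$.

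The critical estimate is coercivity $\mu(\lambda,\bm v):=\inf\mathcal E_{\bm v}>-\infty$ whenever $|\bm v|<\Theta_{j,+}^\lambda$. For the field, Plancherel together with $\|\nabla\bm A\|_{L^2}=\|\nabla\times\bm A\|_{L^2}$ on $D^1_\sigma$ gives $\mathcal E_{\bm v,\mathrm{field}}\ge\tfrac{1}{8\pi}(1-|\bm v|^2/c^2)\|\nabla\times\bm A\|_{L^2}^2$. For the particle, completing the square yields
\begin{align*}
\tfrac{1}{2m}\|\nabla_{\mathrm S,\bm A}\psi\|^2+\hbar\,\mathrm{Re}\langle\psi,i\bm v\cdot\nabla\psi\rangle=\tfrac{1}{2m}\|(\nabla_{\mathrm S,\bm A}+m\bm v)\psi\|^2-\tfrac{Q}{c}\!\int\!\bm v\cdot\bm A|\psi|^2\,\mathrm d\bm x-\tfrac{m|\bm v|^2\lambda}{2},
\end{align*}
where the first term dominates $\tfrac{\hbar^2}{2m}\|\nabla|\psi|\|_{L^2}^2$ by the diamagnetic inequality applied to the gauge-shifted field $\tfrac{Q}{c}\bm A+m\bm v$; the interaction $\int\bm v\cdot\bm A|\psi|^2$ is controlled by H\"older, the Sobolev embedding for $D^1$, and Young's inequality, giving coercivity for $|\bm v|<c=\Theta_{\mathrm S,+}^\lambda$. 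In the Pauli case the Lichnerowicz formula \eqref{Pauli} adds the indefinite Zeeman term $-\tfrac{\hbar Q}{2mc}\!\int\!\langle\psi,\bm\sigma\cdot(\nabla\times\bm A)\psi\rangle\,\mathrm d\bm x$, which I bound via Cauchy-Schwarz, the interpolation $\|\psi\|_{L^4}^2\le K_S^{3/2}\lambda^{1/4}\|\nabla|\psi|\|_{L^2}^{3/2}$, the diamagnetic inequality, and Young's inequality; the resulting sharp quadratic inequality in $|\bm v|$ is solvable precisely when $|\bm v|<\Theta_{\mathrm P,+}^\lambda$.

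Given coercivity, any minimizing sequence $(\psi_n,\bm A_n)$ is bounded in $H^1\times D^1_\sigma$, and I would apply Lions' concentration-compactness on $\mathbb R^3$: vanishing is excluded by $\|\psi_n\|_{L^2}^2=\lambda>0$ combined with the coercivity estimates, and dichotomy is ruled out by the strict subadditivity $\mu(\lambda,\bm v)<\mu(\lambda_1,\bm v)+\mu(\lambda-\lambda_1,\bm v)$ for $0<\lambda_1<\lambda$, established through a scaling/concentration test function. After passing to translates and extracting a weak limit $(\psi,\bm A)\in H^1\times D^1_\sigma$, the Br\'{e}zis-Lieb lemma and weak lower semicontinuity upgrade this to a genuine minimizer with $\|\psi\|_{L^2}^2=\lambda$, which solves the stationary system with some $\hbar\omega'$; setting $\omega=\omega'+\mathscr E_{\bm v}[\bm A]/\hbar$ and undoing the ansatz \eqref{travelform} produces the required travelling wave. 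The main difficulties will be (i) attaining the exact threshold $\Theta_{j,+}^\lambda$ in the coercivity estimate, which for Pauli requires a delicate balance between the indefinite spin term, the covariant kinetic energy, and the field energy through the sharp Sobolev constant $K_S$, and (ii) the strict subadditivity of $\mu(\lambda,\bm v)$, since the translation invariance of \eqref{MSP} together with the anisotropy introduced by $\bm v$ make loss of mass to infinity and splitting of minimizing sequences nontrivial to rule out.
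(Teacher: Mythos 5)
Your roadmap matches the paper's: reduce to the stationary system \eqref{internMS2}, cast it as the Euler--Lagrange system of $\mathscr E_j^{\bm v}$ on $\mathcal S_\lambda$, prove boundedness from below for $|\bm v|<\Theta_{j,+}^\lambda$ (completing the square, diamagnetic plus Sobolev/Young, Lichnerowicz for the Zeeman term --- compare \eqref{Eexp2} and Proposition~\ref{lowerb}), then obtain a minimizer by concentration--compactness. Two points, however, leave genuine gaps if taken at face value.

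Vanishing is \emph{not} excluded by $\|\psi_n\|_{L^2}^2=\lambda$ together with coercivity: vanishing in Lions' sense, $\sup_{\bm y}\int_{\mathcal B(\bm y,R)}|\psi_n|^2\to 0$ for every fixed $R$, is perfectly compatible with constant $L^2$-mass (the mass simply spreads out), and coercivity only furnishes a lower bound, not a lower bound that is violated by spreading. What one actually needs is the strict inequality $I_j^\lambda<-\tfrac{m\bm v^2}{2}\lambda$ of Lemma~\ref{negativeI} (obtained by evaluating the functional on a suitably scaled one-parameter family of test states), together with a lower bound on $\mathscr E_j^{\bm v}(\psi_n,\bm A_n)$ that is blind to $\bm A_n$. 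The latter is not automatic; the paper's Lemma~\ref{not0} obtains it by \emph{minimizing} the field part over all $\bm D\in D^1$, solving the resulting Poisson equation and invoking uniqueness of $D^1$-solutions, to get a Coulomb-type lower bound depending only on $|\psi_n|^2$, which then collapses to $-\tfrac{m\bm v^2}{2}\lambda$ under vanishing --- the contradiction. Your sketch needs a substitute for this auxiliary field minimization.

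Similarly, passing from strict subadditivity $I_j^\lambda<I_j^\mu+I_j^{\lambda-\mu}$ to exclusion of dichotomy requires cutting a given pair $(\psi_n,\bm A_n)$ into inner and outer pieces that are themselves admissible, and in particular the cut vector potentials must remain divergence-free. Naive multiplication $\chi\bm A_n$ destroys $\mathrm{div}\,\bm A_n=0$; the paper repairs this with a gauge term $\bm A_n^\ell=\chi_n^{\ell,\bm A}\bm A_n+\nabla u_n^\ell$, where $u_n^\ell$ solves a Poisson equation, and simultaneously multiplies $\psi_n$ by the compensating phase $\mathrm e^{iQu_n^\ell/\hbar c}$ so that the covariant derivative transforms correctly. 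One must also locate the cut radius in a dyadic annulus where $\|\bm A_n\|_{L^6}$ is small, by a pigeonhole argument. None of this is visible in ``a scaling/concentration test function''; without it, $(\psi_n^{\mathrm i},\bm A_n^{\mathrm i})$ and $(\psi_n^{\mathrm o},\bm A_n^{\mathrm o})$ do not lie in the constraint class and $I_j^\mu+I_j^{\lambda-\mu}$ cannot be invoked.
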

\begin{bemaerkning}
In quantum mechanics the quantity $\|\psi\|_{L^2}^2$ is interpreted as the total probability of the particle being located somewhere in space. Therefore $\lambda=1$ is the physically interesting case.
\end{bemaerkning}
We do not prove any uniqueness results concerning the travelling wave solutions, but in Theorem \ref{energyact} we show that the energies of the solutions produced by the proof of Theorem \ref{main} behave like $\frac{m\bm{v}^2}{2}\lambda$ for small $|\bm{v}|$, meaning that the effective mass of the particle equals it's bare mass. Here, the energy of a (sufficiently nice) solution $(\psi_{\mathrm{t}},\bm{A}_{\mathrm{t}})$ to \eqref{MSP} refers to the inner product $\bigl(\psi_{\mathrm{t}},\mathscr{H}_j\bigl(\bm{A}_{\mathrm{t}},\frac{\partial_t\bm{A}_{\mathrm{t}}}{4\pi c^2}\bigr)\psi_{\mathrm{t}}\bigr)_{L^2}$, 
where
\begin{align}
\mathscr{H}_j\bigl(\bm{A},-\tfrac{P\bm{E}}{4\pi}\bigr)=\frac{1}{2m}\nabla_{j,\bm{A}}^2+\mathscr{E}_{\mathrm{EM}}[\bm{A},-c^2P\bm{E}].\label{HamiltonianQM}
\end{align}
is the quantum mechanical (electromagnetic potential-dependent) Hamiltonian of the system. In \cite{KP1,KP3}, we have motivated the expression for \eqref{HamiltonianQM} in the case $j=\mathrm{S}$. For any given normalized state $\psi$ the Hamilton equations associated with the classical Hamiltonian $\bigl(\bm{A},-\tfrac{P\bm{E}}{4\pi}\bigr)\mapsto \bigl(\psi,\mathscr{H}_j\bigl(\bm{A},-\tfrac{P\bm{E}}{4\pi}\bigr)\psi\bigr)_{L^2}$ defined on the symplectic manifold $PH^1\times PL^2$ say that
\begin{align}
\frac{1}{c^2}\partial_t\bm{A}_{\mathrm{t}}(t)=-P\bm{E}_{\mathrm{t}}(t)\textrm{ and }-\partial_tP\bm{E}_{\mathrm{t}}(t)=\Delta\bm{A}_{\mathrm{t}}(t)+\frac{4\pi}{c}P\bm{J}_j[\psi,\bm{A}_{\mathrm{t}}(t)].\label{Hamiltoneq}
\end{align}
In light of \eqref{Hamiltoneq}'s first equation it is natural to represent the energy of a given solution $(\psi_{\mathrm{t}},\bm{A}_{\mathrm{t}})$ by the average of $\mathscr{H}_j$ evaluated at the point $\bigl(\bm{A}_{\mathrm{t}},\frac{\partial_t\bm{A}_{\mathrm{t}}}{4\pi c^2}\bigr)$. Observe also that the operator acting on the right hand side of \eqref{MSP}'s second equation is exactly $\mathscr{H}_j\bigl(\bm{A}_{\mathrm{t}},\frac{\partial_t\bm{A}_{\mathrm{t}}}{4\pi c^2}\bigr)$ and that replacing $\psi$ in \eqref{Hamiltoneq} by the time-dependent wave function $\psi_{\mathrm{t}}$ produces the first equation in \eqref{MSP}. Note that the energy of any solution $(\psi_{\mathrm{t}},\bm{A}_{\mathrm{t}})$ to \eqref{MSP} with $\|\psi_{\mathrm{t}}\|_{L^2}=1$ is a conserved quantity -- in particular, the energy of a travelling wave solution as in theorem \ref{main} is given by
\begin{align}
E_j(\bm{v},\psi,\bm{A})=\frac{1}{2m}\|\nabla_{j,\bm{A}}\psi\|_{L^2}^2+\frac{1}{8\pi}\int_{\mathbb{R}^3}\Bigl(|\nabla\times\bm{A}|^2+\Bigl|\Bigl(\frac{\bm{v}}{c}\cdot\nabla\Bigr)\bm{A}\Bigr|^2\Bigr)\,\mathrm{d}\bm{x}\lambda.
\end{align}

The paper is organized as follows: In Section \ref{variation} we show that Theorem \ref{main} can be proven by minimizing a certain functional. This functional is shown to be bounded from below under suitable conditions in Section \ref{bddnessbelow}, whereby it is meaningful to consider the functional's infimum under those conditions. In Section \ref{propertiesofIsection} we investigate the properties of the infimum and in Section \ref{existenceofminim} the infimum is shown to be attained by proving a variant of the concentration-compactness principle of Lions \cite{Lions1,Lions2}. Finally, in Section \ref{behaviorofenergy} we consider the behavior of the physical system's energy for small velocities of the particle.

\section*{Acknowledgements}
JPS thanks Jakob Juul Stubgaard for discussions about the effective mass.

\chapter{Formulation as a Variational Problem}\label{variation}
As a natural step towards proving Theorem \ref{main} we plug the travelling wave expressions \eqref{travelform} into \eqref{MSP}, resulting in the system of equations
\begin{align}
\begin{split}\label{internMS2}
\Bigl(\frac{1}{c^2}(\bm{v}\cdot\nabla)^2-\Delta\Bigr)\bm{A}&=\frac{4\pi}{c} P\bm{J}_j[\psi,\bm{A}],\\
-\hbar(\theta+i\bm{v}\cdot\nabla)\psi&=\frac{1}{2m}\nabla_{j,\bm{A}}^2\psi,\\
\mathrm{div}\bm{A}&=0
\end{split}
\end{align}
on $\mathbb{R}^3$, where we have set $\theta=\frac{1}{\hbar}\mathscr{E}_{\mathrm{EM}}[\bm{A},(\bm{v}\cdot\nabla)\bm{A}]-\omega$. The existence of a solution to \eqref{internMS2} can be proven by finding a minimum point -- or any other type of stationary point for that matter -- of the functional
\begin{align}
\mathscr{E}^{\bm{v}}_{j}(\psi,\bm{A})=\frac{1}{2m}\bigl\|\nabla_{j,\bm{A}}\psi\bigr\|_{L^2}^2+\frac{1}{8\pi}\Bigl(\|\nabla\otimes \bm{A}\|_{L^2}^2-\Bigl\|\Bigl(\frac{\bm{v}}{c}\cdot\nabla\Bigr)\bm{A}\Bigr\|_{L^2}^2\Bigr)\nonumber\\
+\bigl(\psi,i\hbar\bm{v}\cdot\nabla\psi\bigr)_{L^2},\label{Eexp1}
\end{align}
on the set
\begin{align*}
\mathcal{S}_\lambda=\bigl\{(\psi,\bm{A})\in H^1\times D^1 \, \big| \, \|\psi\|_{L^2}^2=\lambda,\mathrm{div}\bm{A}=0\bigr\},
\end{align*}
where $\nabla\otimes \bm{A}$ denotes a $9$-vector with the first derivatives $\partial_{x^j}A^k$ as components ($j,k\in\{1,2,3\}$). To prove this we will use the boundedness of $P$ as an operator on $L^p$ for all $p\in (1,\infty)$, which follows from the Mikhlin multiplier theorem \cite{Mikhlin} since any function $\bm{p}\mapsto \frac{\bm{p}^\beta}{\bm{p}^2}$ with $|\beta|=2$ is contained in $C^\infty(\mathbb{R}^3\setminus\{\bm{0}\})$ with
\begin{align*}
\Bigl|\partial^\alpha\Bigl(\frac{\bm{p}^\beta}{\bm{p}^2}\Bigr)\Bigr|\leq \frac{C_{\alpha,\beta}}{|\bm{p}|^{|\alpha|}}
\end{align*}
for any multi index $\alpha$ and all $\bm{p}\in\mathbb{R}^3\setminus\{\bm{0}\}$.
\begin{lemma}\label{Mintosolve}
Let $\bm{v}\in\mathbb{R}^3$, $\lambda>0$ and $j\in\{\mathrm{S},\mathrm{P}\}$ be given. Then any minimizer $(\psi,\bm{A})$ of $\mathscr{E}_j^{\bm{v}}$ on $\mathcal{S}_{\lambda}$ solves \eqref{internMS2} for some $\theta\in \mathbb{R}$.
\end{lemma}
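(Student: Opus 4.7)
The plan is the standard Lagrange multiplier argument for a constrained minimization on $\mathcal{S}_\lambda$. One verifies first that $\mathscr{E}^{\bm{v}}_j$ is Fréchet differentiable on $H^1\times D^1$: all terms are continuous quadratic forms provided the relevant products lie in $L^1$, which is guaranteed by the Sobolev embeddings $H^1\hookrightarrow L^p$ for $p\in[2,6]$ and $D^1\hookrightarrow L^6$ (together with the $L^p$-boundedness of $P$ cited in the excerpt to handle $P\bm{J}_j$). The tangent space to $\mathcal{S}_\lambda$ at $(\psi,\bm{A})$ consists of those $(\delta\psi,\delta\bm{A})\in H^1\times D^1$ with $\mathrm{Re}(\psi,\delta\psi)_{L^2}=0$ and $\mathrm{div}\,\delta\bm{A}=0$. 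A minimizer therefore satisfies
\begin{equation*}
\mathrm{d}\mathscr{E}^{\bm{v}}_j(\psi,\bm{A})[\delta\psi,\delta\bm{A}]=0
\end{equation*}
for all such variations, and I would use this to extract the two equations of \eqref{internMS2}.

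For the variation in $\psi$, using that $i\hbar\bm{v}\cdot\nabla$ and the components of $\nabla_{j,\bm{A}}$ are formally self-adjoint on $L^2$, one obtains
\begin{equation*}
\mathrm{d}_\psi\mathscr{E}^{\bm{v}}_j[\delta\psi]=2\,\mathrm{Re}\Bigl(\tfrac{1}{2m}\nabla_{j,\bm{A}}^2\psi+i\hbar\bm{v}\cdot\nabla\psi,\delta\psi\Bigr)_{L^2}.
\end{equation*}
Since this vanishes on the real hyperplane $\{\mathrm{Re}(\psi,\delta\psi)_{L^2}=0\}$, and the orthogonal direction is spanned by $\psi$ itself, there exists a real Lagrange multiplier $\mu\in\mathbb{R}$ with $\tfrac{1}{2m}\nabla_{j,\bm{A}}^2\psi+i\hbar\bm{v}\cdot\nabla\psi=\mu\psi$. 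Setting $\theta=-\mu/\hbar$ yields the second equation of \eqref{internMS2}. The reality of $\mu$ is automatic because the constraint $\|\psi\|_{L^2}^2=\lambda$ is a real scalar equation; equivalently, testing with $\delta\psi=i\psi$ is consistent with any $\mu\in\mathbb{R}$ by the gauge invariance $\mathrm{Re}(\psi,i\psi)_{L^2}=0$.

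For the variation in $\bm{A}$, a direct computation gives
\begin{equation*}
\mathrm{d}_{\bm{A}}\mathscr{E}^{\bm{v}}_j[\delta\bm{A}]=\Bigl(-\tfrac{1}{4\pi}\Delta\bm{A}+\tfrac{1}{4\pi c^2}(\bm{v}\cdot\nabla)^2\bm{A}-\tfrac{1}{c}\bm{J}_j[\psi,\bm{A}],\delta\bm{A}\Bigr)_{L^2},
\end{equation*}
where the current term arises from differentiating $\tfrac{1}{2m}\|\nabla_{j,\bm{A}}\psi\|_{L^2}^2$; in the Pauli case the extra contribution $-\tfrac{\hbar Q}{c}(\psi,\bm{\sigma}\cdot\nabla\times\bm{A}\,\psi)$ from the Lichnerowicz formula \eqref{Pauli}, after an integration by parts on $\nabla\times$, combines with the Schrödinger current to reproduce $\bm{J}_{\mathrm{P}}$. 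Because the equation must hold only for $\delta\bm{A}\in D^1$ with $\mathrm{div}\,\delta\bm{A}=0$, one can either test against $P\bm{\Phi}$ for arbitrary $\bm{\Phi}\in C_0^\infty$ or observe that the Euler–Lagrange relation holds modulo a gradient and apply $P$ to obtain
\begin{equation*}
\Bigl(\tfrac{1}{c^2}(\bm{v}\cdot\nabla)^2-\Delta\Bigr)\bm{A}=\tfrac{4\pi}{c}P\bm{J}_j[\psi,\bm{A}],
\end{equation*}
using $P\bm{A}=\bm{A}$ and $P\Delta=\Delta P$ on divergence-free fields. The third equation of \eqref{internMS2} is just the defining condition of $\mathcal{S}_\lambda$.

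The main technical hurdle is justifying Fréchet differentiability of the kinetic term with $\bm{A}\in D^1$ (only $\nabla\bm{A}\in L^2$ is assumed, not $\bm{A}\in L^2$): one must control $\||\bm{A}|^2|\psi|^2\|_{L^1}$ and $\|\bm{A}\psi\cdot\nabla\psi\|_{L^1}$, which follows from Hölder combined with $\bm{A}\in L^6$, $\psi\in L^3\cap L^2$, $\nabla\psi\in L^2$; in the Pauli case one additionally needs $\nabla\times\bm{A}\in L^2$, which is immediate from $\nabla\otimes\bm{A}\in L^2$. Once these bounds are in place the variational derivatives make sense as distributions paired against $C_0^\infty$ test functions (dense in $H^1$, respectively $PD^1$), and the Euler–Lagrange equations hold in the distributional sense stated in \eqref{internMS2}.
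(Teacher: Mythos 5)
Your proof is correct and follows essentially the same variational route as the paper. The only differences are stylistic: the paper avoids invoking the abstract Lagrange multiplier theorem by using the explicit normalization $\varepsilon\mapsto\mathscr{E}_j^{\bm{v}}\bigl(\frac{(\psi+\varepsilon\overline\Psi)\sqrt\lambda}{\|\psi+\varepsilon\overline\Psi\|_{L^2}},\bm{A}\bigr)$, which keeps the variation on $\mathcal{S}_\lambda$ and produces the multiplier $\theta$ (and its explicit value) directly as a byproduct, and it obtains the current $\bm{J}_j$ for both $j=\mathrm{S}$ and $j=\mathrm{P}$ by differentiating $\|\nabla_{j,\bm{A}}\psi\|_{L^2}^2$ in one stroke rather than routing the Pauli case through the Lichnerowicz decomposition; your passage through the tangent space condition $\mathrm{Re}(\psi,\delta\psi)_{L^2}=0$ in the real Hilbert space and your testing against $P\bm\Phi$ with $\bm\Phi\in C_0^\infty$ amount to the same computations, and your remarks on integrability of the relevant products and on reading the equations distributionally supply the justification the paper also uses.
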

\begin{proof}
Suppose that $\mathscr{E}_j^{\bm{v}}$ has a minimum point $(\psi,\bm{A})$ on $\mathcal{S}_{\lambda}$. Consider also some function $\Psi\in C_0^\infty$ as well as an arbitrary real valued $C_0^\infty$-vector field $\bm{a}$. Then $P\bm{a}$ is divergence free and contained in $D^1$ (in fact, in all positive exponent Sobolev spaces), so the functions $f_{\Psi}$ and $g_{\bm{a}}$ given on an open interval containing $0$ by
\begin{align*}
f_{\Psi}:\varepsilon\mapsto\mathscr{E}_j^{\bm{v}}\left(\frac{(\psi+\varepsilon\overline{\Psi})\sqrt{\lambda}}{\|\psi+\varepsilon\overline{\Psi}\|_{L^2}},\bm{A}\right)\quad\textrm{respectively}\quad g_{\bm{a}}:\varepsilon\mapsto\mathscr{E}_j^{\bm{v}}(\psi,\bm{A}+\varepsilon P\bm{a})
\end{align*}
have local minima at $\varepsilon=0$. Now, set $\theta=-\frac{\|\nabla_{j,\bm{A}}\psi\|_{L^2}^2+2m(\psi,i\hbar\bm{v}\cdot\nabla\psi)_{L^2}}{2m\hbar\lambda}$ and observe that the mappings $f_{\Psi}$ and $g_{\bm{a}}$ are both differentiable at $0$ with derivatives
\begin{align}
\frac{\mathrm{d}f_{\Psi}}{\mathrm{d}\varepsilon}(0)=2\mathrm{Re}\Bigl\langle \frac{1}{2m}\nabla_{j,\bm{A}}^2\psi+\hbar\theta\psi+i\hbar\bm{v}\cdot\nabla\psi,\Psi\Bigr\rangle_{\mathscr{D}'}\label{fjdq}
\end{align}
and
\begin{align}
&\frac{\mathrm{d}g_{\bm{a}}}{\mathrm{d}\varepsilon}(0)\nonumber\\
&=\!\int_{\mathbb{R}^3}\!\Bigl(\!-\frac{1}{c}P\bm{a}\cdot \bm{J}_j[\psi,\bm{A}]+\!\frac{1}{4\pi}\sum_{k=1}^3\partial_kP\bm{a}\cdot \partial_k\bm{A}\!-\!\frac{1}{4\pi c^2}(\bm{v}\cdot\nabla)P\bm{a}\cdot(\bm{v}\cdot\nabla)\bm{A}\Bigr)\mathrm{d}\bm{x}\nonumber\\
&=\Bigl\langle -\frac{1}{c}P\bm{J}_j[\psi,\bm{A}]-\frac{1}{4\pi}\Delta\bm{A}+\frac{1}{4\pi c^2}(\bm{v}\cdot\nabla)^2\bm{A},\bm{a}\Bigr\rangle_{\mathscr{D}'}.\label{gjdq}
\end{align}
To obtain the expression for $\frac{\mathrm{d}g_j^{\bm{a}}}{\mathrm{d}\varepsilon}(0)$ we have here used the fact that
\begin{align*}
\int_{\mathbb{R}^3}(1-P)\bm{b}\cdot P\bm{K}\,\mathrm{d}\bm{x}=\int_{\mathbb{R}^3}P\bm{b}\cdot(1-P)\bm{K}\,\mathrm{d}\bm{x}=0
\end{align*}
for any choice of fields $\bm{b}\in C_0^\infty$ and $\bm{K}\in L^p$ with $p\in (1,2]$. Since the functions $f_{\Psi}$, $f_{i\Psi}$ and $g_{\bm{a}}$ have local minima at $\varepsilon=0$ we are in position to conclude that $(\psi,\bm{A})$ solves \eqref{internMS2}.
\end{proof}
We will finish this section by making two important observations concerning the functional $\mathscr{E}_j^{\bm{v}}$. First of all, it is sometimes useful to rewrite the expression \eqref{Eexp1} by using the Hermiticity of the Pauli matrices and the general matrix identity
\begin{align}
(\bm{\sigma}\cdot \bm{F})(\bm{\sigma}\cdot \bm{G})=(\bm{F}\cdot\bm{G})\mathrm{id}_{2\times 2}+i\bm{\sigma}\cdot(\bm{F}\times\bm{G})\label{sigmavector}
\end{align}
to obtain
\begin{align}
\mathscr{E}^{\bm{v}}_{j}(\psi,\bm{A})=\frac{1}{2m}\bigl\|\nabla_{j,\bm{A}+\frac{mc}{Q}\bm{v}}\psi\bigr\|_{L^2}^2-\frac{Q}{c}\bigl(\psi,\bm{v}\cdot\bm{A}\psi\bigr)_{L^2}-\frac{m\bm{v}^2}{2}\lambda\nonumber\\
+\frac{1}{8\pi}\Bigl(\|\nabla\otimes \bm{A}\|_{L^2}^2-\Bigl\|\Bigl(\frac{\bm{v}}{c}\cdot\nabla\Bigr)\bm{A}\Bigr\|_{L^2}^2\Bigr)\label{Eexp2}
\end{align}
for all $(\psi,\bm{A})\in \mathcal{S}_{\lambda}$. Secondly, any element $O$ in the rotation group $\mathcal{SO}(3)$ gives rise to the identity
\begin{align*}
\mathscr{E}^{\bm{v}}_{j}(\psi,\bm{A})=\mathscr{E}^{O\bm{v}}_{j}\bigl(U_O\circ\psi\circ O^{-1},O\circ \bm{A}\circ O^{-1}\bigr)\textrm{ for }(\psi,\bm{A})\in\mathcal{S}_{\lambda},
\end{align*}
where $U_O$ is one of the two elements in the preimage of $\{O\}$ under the double cover $\mathcal{SU}(2)\to \mathcal{SO}(3)$ defined by mapping $U\in \mathcal{SU}(2)$ to the matrix representation with respect to the basis $(\sigma^1,\sigma^2,\sigma^3)$ of the endomorphism $M\mapsto UMU^*$ on the space of Hermitean, traceless matrices. Hence, we can without loss of generality think of $\bm{v}$ as pointing, say, in the $x^1$-direction.

\chapter{Boundedness from below}\label{bddnessbelow}
At this point we have defined our main goal, namely to minimize the functional $\mathscr{E}_j^{\bm{v}}$ on the set $\mathcal{S}_{\lambda}$. In order for this task to even make sense $\mathscr{E}_j^{\bm{v}}$ of course has to be bounded from below on $\mathcal{S}_{\lambda}$. In special cases -- e.g. for $\bm{v}=\bm{0}$ -- the question about boundedness from below is trivially answered affirmatively, but it turns out that $\mathscr{E}_j^{\bm{v}}$ is \emph{not} in general bounded from below on $\mathcal{S}_{\lambda}$. 
\begin{proposition}\label{lowerb}
For all $j\in\{\mathrm{S},\mathrm{P}\}$, $\lambda>0$ and $\bm{v}\in\mathbb{R}^3$ with sufficiently large length the functional $\mathscr{E}_j^{\bm{v}}$ is unbounded from below on $\mathcal{S}_{\lambda}$. On the other hand for any $j\in\{\mathrm{S},\mathrm{P}\}$, $\lambda>0$ and $\bm{v}\in\mathbb{R}^3$ with $0<|\bm{v}|<\Theta_{j,+}^\lambda$ the functional $\mathscr{E}_j^{\bm{v}}$ is bounded from below on $\mathcal{S}_{\lambda}$.
\end{proposition}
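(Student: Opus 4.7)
My plan for the unboundedness part is to exploit the fact that the quadratic form $q(\bm{A}) := \|\nabla\otimes\bm{A}\|_{L^2}^2 - \|(\bm{v}/c\cdot\nabla)\bm{A}\|_{L^2}^2$ becomes indefinite as soon as $|\bm{v}|>c$: its Fourier symbol $|\bm{\xi}|^2-(\bm{v}\cdot\bm{\xi}/c)^2$ is negative on the cone of frequencies parallel to $\bm{v}$, and the divergence-free constraint $\bm{\xi}\cdot\hat{\bm{A}}(\bm{\xi})=0$ still leaves two transverse polarizations available there. I would pick $\bm{A}_0\in C_c^\infty(\mathbb{R}^3;\mathbb{R}^3)\cap D^1$ divergence-free with $q(\bm{A}_0)<0$, fix any $\psi\in H^1$ with $\|\psi\|_{L^2}^2=\lambda$, and form $\bm{A}_n(\bm{x}):=n\bm{A}_0(\bm{x}-\bm{w}_n)$ with translations $\bm{w}_n$ taken large enough that $\mathrm{supp}\,\bm{A}_n\cap\mathrm{supp}\,\psi=\emptyset$. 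Then $\|\nabla_{j,\bm{A}_n}\psi\|_{L^2}^2=\hbar^2\|\nabla\psi\|_{L^2}^2$ (the Pauli correction also vanishes because $\nabla\times\bm{A}_n=0$ on $\mathrm{supp}\,\psi$) and $(\psi,\bm{v}\cdot\bm{A}_n\psi)_{L^2}=0$, while the field energy equals $\tfrac{n^2}{8\pi}q(\bm{A}_0)\to-\infty$, giving $\mathscr{E}_j^{\bm{v}}(\psi,\bm{A}_n)\to-\infty$.

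For the boundedness part, since $\Theta_{j,+}^\lambda\leq c$, the assumption $|\bm{v}|<\Theta_{j,+}^\lambda$ forces $|\bm{v}|<c$ and hence $q(\bm{A})\geq (1-|\bm{v}|^2/c^2)\|\nabla\otimes\bm{A}\|_{L^2}^2\geq 0$. I would start from \eqref{Eexp2} with $\bm{A}':=\bm{A}+\tfrac{mc}{Q}\bm{v}$: the term $\tfrac{1}{2m}\|\nabla_{j,\bm{A}'}\psi\|_{L^2}^2$ is nonnegative and $-\tfrac{m\bm{v}^2}{2}\lambda$ is harmless, so it remains to dominate the cross-term $\tfrac{Q}{c}|(\psi,\bm{v}\cdot\bm{A}\psi)_{L^2}|$. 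Applying H\"older, Sobolev ($\|\bm{A}\|_{L^6}\leq K_S\|\nabla\otimes\bm{A}\|_{L^2}$) and the interpolation $\|\psi\|_{L^{12/5}}^2\leq\lambda^{3/4}\|\psi\|_{L^6}^{1/2}$ yields
\[
\tfrac{Q}{c}\bigl|(\psi,\bm{v}\cdot\bm{A}\psi)_{L^2}\bigr|\leq\tfrac{|Q||\bm{v}|K_S\lambda^{3/4}}{c}\|\nabla\otimes\bm{A}\|_{L^2}\|\psi\|_{L^6}^{1/2}.
\]

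In the Schr\"odinger case the diamagnetic inequality $\|\psi\|_{L^6}\leq K_S\|\nabla_{S,\bm{A}'}\psi\|_{L^2}/\hbar$ is available, and two Young-type inequalities then split the right-hand side into arbitrarily small fractions of $\tfrac{1}{2m}\|\nabla_{S,\bm{A}'}\psi\|_{L^2}^2$ and $\tfrac{1-|\bm{v}|^2/c^2}{8\pi}\|\nabla\otimes\bm{A}\|_{L^2}^2$ plus a finite constant, so $\mathscr{E}_S^{\bm{v}}$ is bounded below whenever $|\bm{v}|<c=\Theta_{S,+}^\lambda$.

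In the Pauli case the Lichnerowicz identity $\|\nabla_{P,\bm{A}'}\psi\|_{L^2}^2=\|\nabla_{S,\bm{A}'}\psi\|_{L^2}^2-\tfrac{\hbar Q}{c}(\psi,\bm{\sigma}\cdot\bm{B}\psi)_{L^2}$, with $\bm{B}=\nabla\times\bm{A}$, blocks the direct use of the diamagnetic inequality. Instead, I would combine the pointwise estimate $|(\psi,\bm{\sigma}\cdot\bm{B}\psi)_{L^2}|\leq K_S^{3/2}\|\nabla\otimes\bm{A}\|_{L^2}\lambda^{1/4}\|\nabla|\psi|\|_{L^2}^{3/2}$ (obtained from $\|\bm{\sigma}\cdot\bm{B}|\psi|^2\|_{L^1}\leq\|\bm{B}\|_{L^2}\|\psi\|_{L^4}^2$ and $\|\psi\|_{L^4}^2\leq\lambda^{1/4}\|\psi\|_{L^6}^{3/2}$) with the ordinary diamagnetic bound $\hbar^2\|\nabla|\psi|\|_{L^2}^2\leq\|\nabla_{S,\bm{A}'}\psi\|_{L^2}^2$ to reach the self-consistent inequality $\hbar^2 U^2\leq M^2+\tfrac{\hbar|Q|K_S^{3/2}\lambda^{1/4}}{c}YU^{3/2}$, where $U:=\|\nabla|\psi|\|_{L^2}$, $M:=\|\nabla_{P,\bm{A}'}\psi\|_{L^2}$, $Y:=\|\nabla\otimes\bm{A}\|_{L^2}$. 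A suitably weighted Young inequality then delivers $U^2\lesssim M^2+Y^4$; substituting $\|\psi\|_{L^6}^{1/2}\leq K_S^{1/2}U^{1/2}$ back into the cross-term bound above produces a contribution to the $Y^2$-coefficient proportional to $Q^2|\bm{v}|K_S^3\lambda/(c^2\hbar)$, and the constants will be arranged so that this stays strictly below the field-energy coefficient $(1-|\bm{v}|^2/c^2)/(8\pi)$ exactly when $|\bm{v}|^2+\tfrac{16\pi K_S^3 Q^2\lambda}{\hbar}|\bm{v}|<c^2$, i.e., when $|\bm{v}|<\Theta_{P,+}^\lambda$. The hardest step will be this Pauli one: the Lichnerowicz correction forces the self-consistent bound on $\|\nabla|\psi|\|_{L^2}$, and the numerical book-keeping must be tight enough to reproduce the explicit threshold $\Theta_{P,+}^\lambda$.
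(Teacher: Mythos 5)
Your plan is correct but differs from the paper's proof at two genuine junctures. For the unboundedness claim, the paper scales a fixed, real pair $(\psi_0,\bm A_0)$ in both space (by $R$) and amplitude (by $a$), with a magnetic phase $\mathrm e^{i m\bm v\cdot\bm x/\hbar}$ attached to $\psi_0$; see \eqref{scaledpsiA} and \eqref{particularE}. The divergence then comes from the $R$\,-\,linear field-energy term. Your route instead decouples the two fields entirely: move $\bm A_0$ so far away that $\bm A_n\psi\equiv 0$, and scale only its amplitude, so the kinetic and cross terms freeze and $\tfrac{n^2}{8\pi}q(\bm A_0)\to-\infty$. This is cleaner for the present proposition, provided you actually exhibit a real, compactly supported, divergence-free $\bm A_0$ with $q(\bm A_0)<0$; the Fourier-cone heuristic you give is correct, and a concrete choice such as $\bm A_0=(\partial_2\Xi,-\partial_1\Xi,0)$ with $\Xi(\bm x)=\eta(\bm x)\cos(k x^1)$ and $k$ large does the job once $|\bm v|>c$. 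One should note, however, that your disjoint-support family cannot be reused for Lemma~\ref{negativeI} (where the coupling through $(\psi,\bm v\cdot\bm A\psi)_{L^2}$ is precisely what drives $I_j^\lambda$ strictly below $-\tfrac{m\bm v^2}{2}\lambda$), whereas the paper's family \eqref{scaledpsiA} serves both purposes.

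For boundedness, your Schr\"odinger argument is the paper's argument. Your Pauli argument, by contrast, replaces the paper's dichotomy \eqref{FirstcasePaulilow}/\eqref{SecondPaulilow} on the ratio $\|\nabla\otimes\bm A\|_{L^2}/\|\psi\|_{L^6}^{1/2}$ with the self-consistent bound $\hbar^2U^2\le M^2+\gamma YU^{3/2}$ derived from Lichnerowicz plus the diamagnetic inequality, and then a weighted Young inequality to extract $U^{1/2}\lesssim M^{1/2}/\hbar^{1/2}+\gamma Y/\hbar^2$. This is a legitimate and arguably tighter scheme: carrying out the optimization (the optimal Young weight is $\eta=\hbar^2$, and one uses $(a+b)^{1/4}\le a^{1/4}+b^{1/4}$) produces the $Y^2$-coefficient $\tfrac{Q^2|\bm v|K_S^3\lambda}{c^2\hbar}$, so the stability condition you land on is $c^2-\bm v^2>\tfrac{8\pi K_S^3Q^2\lambda}{\hbar}|\bm v|$, which is \emph{weaker} (i.e.\ admits more $\bm v$) than the paper's $c^2-\bm v^2>\tfrac{16\pi K_S^3Q^2\lambda}{\hbar}|\bm v|$ that defines $\Theta_{\mathrm P,+}^\lambda$. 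Your closing sentence, claiming the bookkeeping must be arranged to land exactly on $\Theta_{\mathrm P,+}^\lambda$, is therefore a slight mischaracterization: the self-consistent route naturally beats the paper's constant, which is consistent with the authors' own remark that $\Theta_{\mathrm P,+}^\lambda$ is not optimal. Since the proposition only requires boundedness for $|\bm v|<\Theta_{\mathrm P,+}^\lambda$, this is harmless. What the paper's case distinction buys, and what you would lose, are the explicit quantitative estimates \eqref{boundonA}, \eqref{boundonpsifirst}, \eqref{boundonpsisecond} in exactly the form they are reused later (Lemma~\ref{minbdd}, Lemma~\ref{Icontinuous}, Theorem~\ref{energyact}); an analogous set of a priori bounds would have to be re-derived from your scheme.
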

\begin{proof}
Let $j\in\{\mathrm{S},\mathrm{P}\}$, $\lambda>0$ and $\bm{v}\in\mathbb{R}^3\setminus\{\bm{0}\}$ be given. Choose arbitrary real functions $(\psi_0,\bm{A}_0)\in \mathcal{S}_{\lambda}$ satisfying $\|(\bm{v}\cdot\nabla)\bm{A}_0\|_{L^2}>0$ and $(\psi_0,\bm{v}\cdot\bm{A}_0\psi_0)_{L^2}>0$; if we think of $\bm{v}$ as pointing in the $x^1$-direction we can set $\bm{A}_0=(\partial_2\Xi,-\partial_1\Xi,0)$ for some standard cut-off function $\Xi\in C_0^\infty$ and let the components of $\psi_0$ be some other cut-off function with appropriate $L^2$-norm which is supported on $\bigl\{\bm{x}\in\mathbb{R}^3\,\big|\, \partial_2\Xi(\bm{x})>0\bigr\}$. We will show that if $|\bm{v}|$ is so large that the quantity $c^2\|\nabla\otimes \bm{A}_0\|_{L^2}^2-\bm{v}^2\bigl\|\bigl(\frac{\bm{v}}{|\bm{v}|}\cdot\nabla\bigr)\bm{A}_0\bigr\|_{L^2}^2$ is negative then $\mathscr{E}^{\bm{v}}_{j}$ can not be bounded from below on $\mathcal{S}_{\lambda}$. For this purpose define
\begin{align}
\psi^{\bm{v}}_R(\bm{x})=R^{-\frac{3}{2}}\mathrm{e}^{i\frac{m\bm{v}}{\hbar}\cdot \bm{x}}\psi_0\Bigl(\frac{\bm{x}}{R}\Bigr)\textrm{ and }\bm{A}^a_{R}(\bm{x})=\frac{ac}{Q}\bm{A}_0\Bigl(\frac{\bm{x}}{R}\Bigr)\label{scaledpsiA}
\end{align}
for $a,R>0$. Then $(\psi^{\bm{v}}_{R},\bm{A}^a_{R})\in\mathcal{S}_{\lambda}$ and by simply calculating each of the terms on the right hand side of \eqref{Eexp2} we get
\begin{align}
\mathscr{E}^{\bm{v}}_{j}(\psi^{\bm{v}}_R,\bm{A}^a_{R})
=\frac{\hbar^2}{2mR^2}\|\nabla\psi_0\|_{L^2}^2+\frac{a^2}{2m}\|\bm{A}_0\psi_0\|_{L^2}^2-a\bigl(\psi_0,\bm{v}\cdot\bm{A}_0\psi_0\bigr)_{L^2}-\frac{m\bm{v}^2}{2}\lambda\nonumber\\
+\frac{Ra^2}{8\pi Q^2}\Bigl(c^2\|\nabla\otimes\bm{A}_0\|_{L^2}^2-\bm{v}^2\Bigl\|\Bigl(\frac{\bm{v}}{|\bm{v}|}\cdot\nabla\Bigr)\bm{A}_0\Bigr\|_{L^2}^2\Bigr)\nonumber\\
-\frac{\hbar a1_{\{\mathrm{P}\}}(j)}{2mR}\int_{\mathbb{R}^3}\langle \psi_0(\bm{x}),\bm{\sigma}\cdot\nabla\times \bm{A}_0(\bm{x})\psi_0(\bm{x})\rangle\,\mathrm{d}\bm{x};\label{particularE}
\end{align}
Here, we explicitly use that $\psi_0$ and $\bm{A}_0$ are chosen to be real. From \eqref{particularE} we clearly see that when $|\bm{v}|$ is as described above then for any $a>0$ we have
\begin{align*}
\lim_{R\to\infty}\mathscr{E}^{\bm{v}}_{j}(\psi^{\bm{v}}_R,\bm{A}^a_{R})=-\infty
\end{align*}
and consequently $\mathscr{E}^{\bm{v}}_{j}$ is not bounded from below on $\mathcal{S}_{\lambda}$ in this case.

We now let $j\in\{\mathrm{S},\mathrm{P}\}$, $\lambda>0$ as well as $\bm{v}\in\mathbb{R}^3$ with $0<|\bm{v}|<\Theta_{j,+}^{\lambda}$ be arbitrary and consider as a first step the case where some given $(\psi,\bm{A})\in\mathcal{S}_{\lambda}$ satisfies
\begin{align}
\|\nabla\otimes\bm{A}\|_{L^2}<16\pi K_S c|Q|\lambda^{\frac{3}{4}}\frac{|\bm{v}|}{c^2-\bm{v}^2}\|\psi\|_{L^6}^{\frac{1}{2}}.\label{FirstcasePaulilow}
\end{align}
The Lichnerowicz formula \eqref{Pauli} and approximation of $\psi$ in $H^1$ by $C_0^\infty$-functions make it possible to write the quantity $\|\nabla_{j,\bm{A}+\frac{mc}{Q}\bm{v}}\psi\|_{L^2}^2$ appearing on the right hand side of \eqref{Eexp2} as $\|\nabla_{\mathrm{S},\bm{A}+\frac{mc}{Q}\bm{v}}\psi\|_{L^2}^2-1_{\{\mathrm{P}\}}(j)\frac{\hbar Q}{c}\int_{\mathbb{R}^3}\bigl\langle\psi,\bm{\sigma}\cdot\nabla\times\bm{A}\psi\bigr\rangle\,\mathrm{d}\bm{x}$. By using the diamagnetic inequality, the Hölder inequality, the Sobolev inequality and \eqref{FirstcasePaulilow} we therefore get
\begin{align*}
\|\nabla_{j,\bm{A}+\frac{mc}{Q}\bm{v}}\psi\|_{L^2}^2 &\geq \hbar^2\|\nabla|\psi|\|_{L^2}^2-1_{\{\mathrm{P}\}}(j)\frac{\hbar |Q|\lambda^{\frac{1}{4}}}{c}\|\nabla\otimes \bm{A}\|_{L^2}\|\psi\|_{L^6}^{\frac{3}{2}}\\
&\geq \frac{\hbar^2}{K_S^2}\frac{(\Theta_{j,+}^{\lambda}-|\bm{v}|)(|\bm{v}|-\Theta_{j,-}^{\lambda})}{c^2-\bm{v}^2}\|\psi\|_{L^6}^2
\end{align*}
In addition, we apply Young's inequality for products $\bigl(ab\leq \frac{a^p}{p}+\frac{b^{p'}}{p'}$ where $\frac{1}{p}+\frac{1}{p'}=1\bigr)$ and Sobolev's inequality to the term $-\frac{Q}{c}(\psi,\bm{v}\cdot\bm{A}\psi)_{L^2}$ and obtain
\begin{align}
\mathscr{E}_{j}^{\bm{v}}(\psi,\bm{A})\geq \frac{\hbar^2}{4mK_S^2}\frac{(\Theta_{j,+}^{\lambda}-|\bm{v}|)(|\bm{v}|-\Theta_{j,-}^{\lambda})}{c^2-\bm{v}^2}\|\psi\|_{L^6}^2+\frac{1}{8\pi}\Bigl(1-\frac{\bm{v}^2}{c^2}\Bigr)\|\nabla\otimes\bm{A}\|_{L^2}^2\nonumber\\
-\frac{3K_S^2|Q\bm{v}|^{\frac{4}{3}}m^{\frac{1}{3}}\lambda}{4\hbar^{\frac{2}{3}}c^{\frac{4}{3}}}\left(\frac{c^2-\bm{v}^2}{(\Theta_{j,+}^{\lambda}-|\bm{v}|)(|\bm{v}|-\Theta_{j,-}^{\lambda})}\right)^{\frac{1}{3}}\|\nabla\otimes\bm{A}\|_{L^2}^{\frac{4}{3}}-\frac{m\bm{v}^2}{2}\lambda.\label{halfway}
\end{align}
Another application of Young's inequality for products reveals that
\begin{align}
\mathscr{E}_{j}^{\bm{v}}(\psi,\bm{A})\geq \frac{\hbar^2}{4mK_S^2}\frac{(\Theta_{j,+}^{\lambda}-|\bm{v}|)(|\bm{v}|-\Theta_{j,-}^{\lambda})}{c^2-\bm{v}^2}\|\psi\|_{L^6}^2+\frac{1}{16\pi}\Bigl(1-\frac{\bm{v}^2}{c^2}\Bigr)\|\nabla\otimes\bm{A}\|_{L^2}^2\nonumber\\
-\frac{(4\pi)^2K_S^6Q^4m\lambda^3}{\hbar^2}\frac{\bm{v}^4}{(c^2-\bm{v}^2)(\Theta_{j,+}^{\lambda}-|\bm{v}|)(|\bm{v}|-\Theta_{j,-}^{\lambda})}-\frac{m\bm{v}^2}{2}\lambda\label{estimateonenergyFCase}
\end{align}
so for pairs $(\psi,\bm{A})\in\mathcal{S}_{\lambda}$ satisfying \eqref{FirstcasePaulilow} there is indeed a lower bound on the possible values of $\mathscr{E}_j^{\bm{v}}(\psi,\bm{A})$. Consider now the scenario where the given pair $(\psi,\bm{A})\in \mathcal{S}_{\lambda}$ satisfies the inequality
\begin{align}
\|\nabla\otimes\bm{A}\|_{L^2}\geq16\pi K_S c|Q|\lambda^{\frac{3}{4}}\frac{|\bm{v}|}{c^2-\bm{v}^2}\|\psi\|_{L^6}^{\frac{1}{2}}.\label{SecondPaulilow}
\end{align}
In this case we simply use the nonnegativity of the kinetic energy term in \eqref{Eexp2} to get
\begin{align}
\mathscr{E}_{j}^{\bm{v}}(\psi,\bm{A})&\!\geq \!-\frac{K_S|Q||\bm{v}|\lambda^{\frac{3}{4}}}{c}\|\nabla\otimes\bm{A}\|_{L^2}\|\psi\|_{L^6}^{\frac{1}{2}}\!-\frac{m\bm{v}^2}{2}\lambda+\!\frac{1}{8\pi}\Bigl(1-\frac{\bm{v}^2}{c^2}\Bigr)\|\nabla\otimes\bm{A}\|_{L^2}^2\nonumber\\
&\geq \frac{1}{16\pi}\Bigl(1-\frac{\bm{v}^2}{c^2}\Bigr)\|\nabla\otimes\bm{A}\|_{L^2}^2-\frac{m\bm{v}^2}{2}\lambda\label{estimateonenergySCase}
\end{align}
where the assumption \eqref{SecondPaulilow} is applied at the final step. Consequently, the values of $\mathscr{E}_j^{\bm{v}}(\psi,\bm{A})$ are bounded below by $-\frac{m\bm{v}^2}{2}$ for pairs $(\psi,\bm{A})\in\mathcal{S}_{\lambda}$ satisfying \eqref{SecondPaulilow}.
\end{proof}
\begin{bemaerkning}
For $j\in\{\mathrm{S},\mathrm{P}\}$, $\lambda>0$, $\bm{v}\in \mathbb{R}^3$ with $0<|\bm{v}|<\Theta_{j,+}^{\lambda}$ as well as $(\psi,\bm{A})\in\mathcal{S}_{\lambda}$ we can bound the quantities $\|\psi\|_{L^6}$ and $\|\nabla\otimes\bm{A}\|_{L^2}$ from above in terms of $\mathscr{E}_j^{\bm{v}}(\psi,\bm{A})$. More precisely, \eqref{estimateonenergyFCase} and \eqref{estimateonenergySCase} give that
\begin{align}
\|\nabla\otimes\bm{A}\|_{L^2}^2\leq \frac{2^8\pi^3K_S^6c^2Q^4m\lambda^3}{\hbar^2}\frac{\bm{v}^4}{(c^2-\bm{v}^2)^2(\Theta_{j,+}^{\lambda}-|\bm{v}|)(|\bm{v}|-\Theta_{j,-}^{\lambda})}\nonumber\\
+\frac{16\pi c^2}{c^2-\bm{v}^2}\Bigl(\mathscr{E}_j^{\bm{v}}(\psi,\bm{A})+\frac{m\bm{v}^2}{2}\lambda\Bigr).\label{boundonA}
\end{align}
Moreover, if $(\psi,\bm{A})$ satisfies \eqref{FirstcasePaulilow} we obtain from \eqref{estimateonenergyFCase} that
\begin{align}
\|\psi\|_{L^6}^2\leq \frac{4mK_S^2}{\hbar^2}\frac{c^2-\bm{v}^2}{(\Theta_{j,+}^{\lambda}-|\bm{v}|)(|\bm{v}|-\Theta_{j,-}^{\lambda})}\Bigl(\mathscr{E}_j^{\bm{v}}(\psi,\bm{A})+\frac{m\bm{v}^2}{2}\lambda\Bigr)\nonumber\\
+\frac{2^6\pi^2K_S^8Q^4m^2\lambda^3}{\hbar^4}\frac{\bm{v}^4}{(\Theta_{j,+}^{\lambda}-|\bm{v}|)^2(|\bm{v}|-\Theta_{j,-}^{\lambda})^2}\label{boundonpsifirst}
\end{align}
and if $(\psi,\bm{A})$ on the other hand satisfies \eqref{SecondPaulilow} then by \eqref{estimateonenergySCase} we have
\begin{align}
\|\psi\|_{L^6}\leq \frac{c^2-\bm{v}^2}{16\pi K_S^2Q^2\lambda^{\frac{3}{2}}\bm{v}^2}\Bigl(\mathscr{E}_j^{\bm{v}}(\psi,\bm{A})+\frac{m\bm{v}^2}{2}\lambda\Bigr).\label{boundonpsisecond}
\end{align}
\end{bemaerkning}
By Proposition \ref{lowerb} it is impossible for the functional $\mathscr{E}_j^{\bm{v}}$ to attain a minimum on $\mathcal{S}_{\lambda}$ for sufficiently large values of $|\bm{v}|$. Of course this does not rule out the existence of solutions to \eqref{internMS2}, but the nonexistence of such solutions for large $|\bm{v}|$ would in fact be perfectly compatible with our understanding from the theory of special relativity that a particle with rest mass can not travel faster than light. We therefore guess that the value $\Theta_{\mathrm{S},+}^{\lambda}$ is optimal in the sense that $\mathscr{E}_{\mathrm{S}}^{\bm{v}}$ can not be shown to be bounded from below on $\mathcal{S}_{\lambda}$ for $|\bm{v}|>c$. On the other hand, the value of $\Theta_{\mathrm{P},+}^{\lambda}$ is not optimal.

\chapter{Properties of the infimum}\label{propertiesofIsection}
For any given $\bm{v}\in\mathbb{R}^3$ consider the set
\begin{align*}
\Lambda_j^{\bm{v}}=\bigl\{\lambda>0\,\mid\, |\bm{v}|<\Theta_{j,+}^\lambda\bigr\}.
\end{align*}
\begin{figure}[ht]
	\centering
		\includegraphics[width=0.45\textwidth]{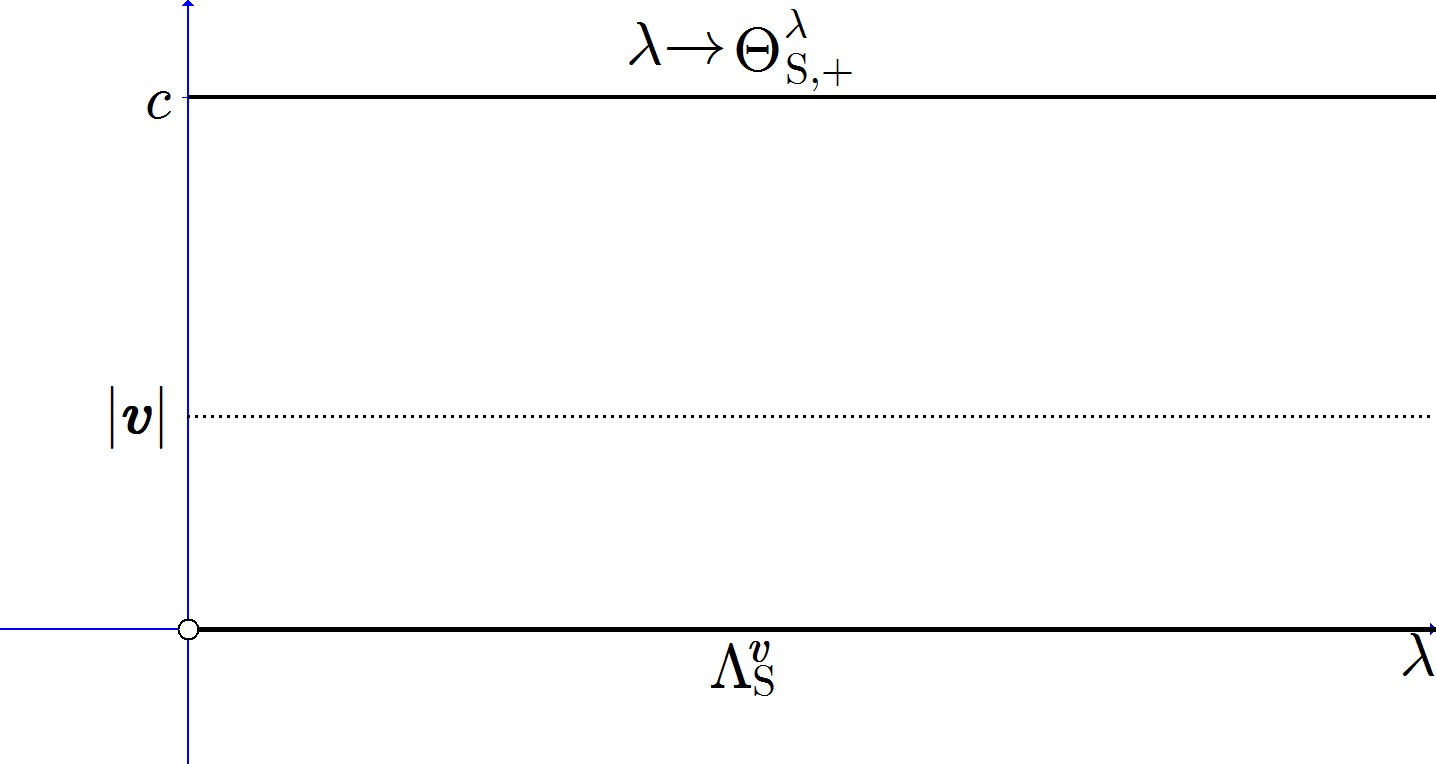}\hspace{1cm}
		\includegraphics[width=0.45\textwidth]{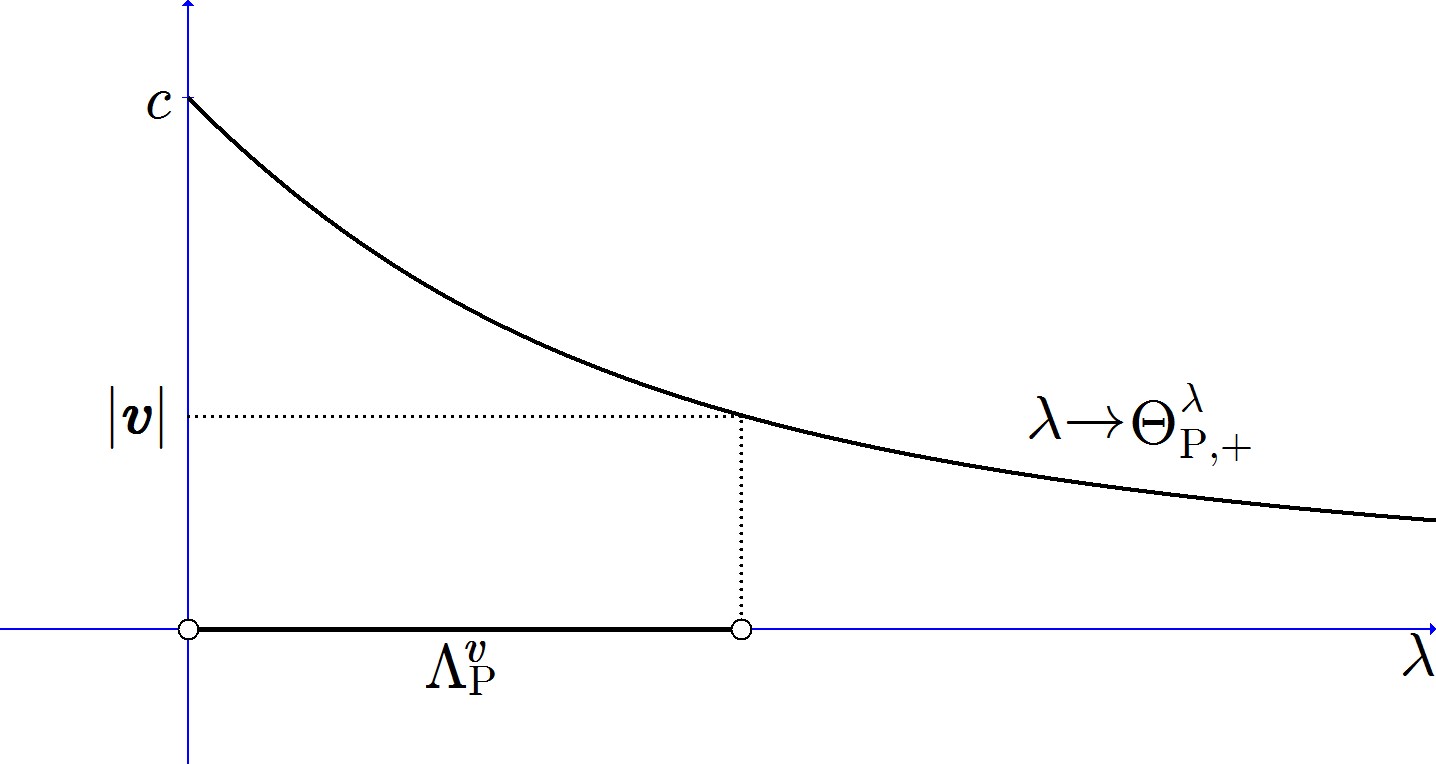}
	\caption{The set $\Lambda_j^{\bm{v}}$ is an open interval since $\lambda\mapsto\Theta_{j,+}^\lambda$ is decreasing and continuous. In fact, $\Lambda_{\mathrm{S}}^{\bm{v}}=(0,\infty)$ regardless of the choice of $\bm{v}$ with $|\bm{v}|<c$.}
	\label{fig:ThetaS}
\end{figure}
We have just seen that given $\bm{v}\in\mathbb{R}^3$ with $0<|\bm{v}|<c$ and $\lambda\in \Lambda_j^{\bm{v}}$ it makes sense to define
\begin{align*}
I^{\lambda}_j=\inf\bigl\{\mathscr{E}_j^{\bm{v}}(\psi,\bm{A})\,\big|\, (\psi,\bm{A})\in \mathcal{S}_{\lambda}\bigr\}
\end{align*}
and we aim to show that this infimum is attained. Imagine that the functional $\mathscr{E}_j^{\bm{v}}$ indeed does take the value $I_j^\lambda$ in some point. It follows from the following simple observation that for such a minimizing point $(\psi,\bm{A})\in \mathcal{S}_{\lambda}$ neither the wave function $\psi$ nor the magnetic vector potential $\bm{A}$ can be identically equal to zero.
\begin{lemma}\label{negativeI}
Let $j\in\{\mathrm{S},\mathrm{P}\}$ and $\bm{v}\in\mathbb{R}^3$ with $0<|\bm{v}|<c$ be given. Then
\begin{align*}
I_j^\lambda<-\frac{m\bm{v}^2}{2}\lambda
\end{align*}
for any $\lambda\in \Lambda_j^{\bm{v}}$.
\end{lemma}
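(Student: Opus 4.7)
The plan is to exhibit a one-parameter family of elements of $\mathcal{S}_{\lambda}$ along which $\mathscr{E}_j^{\bm{v}}$ dips strictly below $-\frac{m\bm{v}^2}{2}\lambda$. The natural candidates are the scaled trial functions $(\psi^{\bm{v}}_R,\bm{A}^a_R)$ already introduced in \eqref{scaledpsiA}, built from a real pair $(\psi_0,\bm{A}_0)\in\mathcal{S}_{\lambda}$ with $(\psi_0,\bm{v}\cdot\bm{A}_0\psi_0)_{L^2}>0$ and $\|(\bm{v}\cdot\nabla)\bm{A}_0\|_{L^2}>0$, exactly as constructed in the proof of Proposition~\ref{lowerb}. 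Since $\lambda\in\Lambda_j^{\bm{v}}$ forces $|\bm{v}|<\Theta_{j,+}^{\lambda}\leq c$ in both cases $j\in\{\mathrm{S},\mathrm{P}\}$, the quantity
\[
D:=c^2\|\nabla\otimes\bm{A}_0\|_{L^2}^2-\|(\bm{v}\cdot\nabla)\bm{A}_0\|_{L^2}^2
\]
is strictly positive, since $\|(\bm{v}\cdot\nabla)\bm{A}_0\|_{L^2}^2\leq\bm{v}^2\|\nabla\otimes\bm{A}_0\|_{L^2}^2$.

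With the specific choice $a=\epsilon/R$ in formula \eqref{particularE}, the plan is to collect terms according to powers of $R^{-1}$; a direct inspection shows that the leading contribution is
\[
\frac{\epsilon}{R}\Bigl(\frac{\epsilon D}{8\pi Q^2}-(\psi_0,\bm{v}\cdot\bm{A}_0\psi_0)_{L^2}\Bigr),
\]
while the remaining terms (involving $\|\nabla\psi_0\|_{L^2}^2$, $\|\bm{A}_0\psi_0\|_{L^2}^2$, and, for $j=\mathrm{P}$, the spin--field coupling) are all of order $R^{-2}$. I would therefore first fix $\epsilon>0$ small enough that $\epsilon D/(8\pi Q^2)<(\psi_0,\bm{v}\cdot\bm{A}_0\psi_0)_{L^2}$, which makes the displayed bracket strictly negative, and then take $R$ large enough that the $O(R^{-1})$ contribution dominates the $O(R^{-2})$ remainder. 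This yields $\mathscr{E}_j^{\bm{v}}(\psi^{\bm{v}}_R,\bm{A}^{\epsilon/R}_R)<-\tfrac{m\bm{v}^2}{2}\lambda$ and hence $I_j^{\lambda}<-\tfrac{m\bm{v}^2}{2}\lambda$.

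The engine of the argument is the cross term $-\frac{Q}{c}(\psi,\bm{v}\cdot\bm{A}\psi)_{L^2}$ in the alternative representation \eqref{Eexp2}, which is made negative by the alignment of $\psi_0$ and $\bm{A}_0$. The main obstacle to exploiting it is the competing positive field term $\frac{1}{8\pi}(\|\nabla\otimes\bm{A}\|_{L^2}^2-\|(\bm{v}/c)\cdot\nabla\bm{A}\|_{L^2}^2)$, which grows like $Ra^2$ under $\bm{A}\mapsto\bm{A}^a_R$; the scaling $a=\epsilon/R$ is chosen precisely so that both competing quantities share the common decay rate $R^{-1}$, leaving $\epsilon$ as a free small parameter with which the balance can be tipped. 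Every other contribution to \eqref{particularE} decays at least as $R^{-2}$, so identifying this joint scaling is the only real subtlety in the proof.
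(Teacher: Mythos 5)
Your proof is correct and takes essentially the same approach as the paper: both use the trial functions $(\psi_R^{\bm{v}},\bm{A}_R^a)$ from \eqref{scaledpsiA}, compute $\mathscr{E}_j^{\bm{v}}$ via \eqref{particularE}, and exploit the strict negativity of the cross term $-a(\psi_0,\bm{v}\cdot\bm{A}_0\psi_0)_{L^2}$ in a coupled $a$--$R$ scaling. The only difference is the exponent of the coupling: the paper takes $R=R_a\propto a^{-2/3}$ (which makes the cross term dominate all others unconditionally as $a\to 0^+$, giving the ``negative derivative at $a=0$'' picture), whereas you take $a=\epsilon/R$, which places the cross term and the field-energy term at the same order $R^{-1}$ and therefore requires the extra step of fixing $\epsilon$ small enough to tip the balance -- both variants are valid.
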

\begin{proof}
Choose the pair $(\psi_0,\bm{A}_0)\in \mathcal{S}_{\lambda}$ as in the beginning of the proof of Proposition \ref{lowerb} and define $(\psi_R^{\bm{v}},\bm{A}_R^a)\in \mathcal{S}_{\lambda}$ for $R,a>0$ as prescribed in \eqref{scaledpsiA}. According to \eqref{particularE} we can let $R$ take the specific value
\begin{align*}
R_a=\left(\frac{2\hbar Q}{a}\sqrt{\frac{\pi}{m}}\|\nabla\psi_0\|_{L^2}\right)^{\frac{2}{3}}\bigl(c^2\|\nabla\otimes\bm{A}_0\|_{L^2}^2-\|(\bm{v}\cdot\nabla)\bm{A}_0\|_{L^2}^2\bigr)^{-\frac{1}{3}},
\end{align*}
and get
\begin{align*}
\mathscr{E}^{\bm{v}}_{j}(\psi^{\bm{v}}_{R_a},\bm{A}^a_{R_a})
=\Bigl(\tfrac{\hbar^2}{16\pi^2 Q^4m}\Bigr)^{\frac{1}{3}}\bigl(c^2\|\nabla\otimes\bm{A}_0\|_{L^2}^2-\|(\bm{v}\cdot\nabla)\bm{A}_0\|_{L^2}^2\bigr)^{\frac{2}{3}}\|\nabla\psi_0\|_{L^2}^{\frac{2}{3}}a^{\frac{4}{3}}\\
-\Bigl(\tfrac{1_{\{\mathrm{P}\}}(j)\hbar(c^2\|\nabla\otimes\bm{A}_0\|_{L^2}^2-\|(\bm{v}\cdot\nabla)\bm{A}_0\|_{L^2}^2)}{2^5m^2Q^2\pi\|\nabla\psi_0\|_{L^2}^2}\Bigr)^{\frac{1}{3}}\!\!\!\int_{\mathbb{R}^3}\!\!\!\langle \psi_0,\bm{\sigma}\cdot\nabla\times \bm{A}_0\psi_0\rangle\mathrm{d}\bm{x} a^{\frac{5}{3}}\\
+\tfrac{1}{2m}\|\bm{A}_0\psi_0\|_{L^2}^2a^2-(\psi_0,\bm{v}\cdot\bm{A}_0\psi_0)_{L^2}a-\tfrac{m\bm{v}^2}{2}\lambda.
\end{align*}
Thus, $\mathscr{E}^{\bm{v}}_j(\psi^{\bm{v}}_{R_a},\bm{A}^a_{R_a})$ can be extended to a continuously differentiable function of $a$ on the entire real line -- moreover, the extension takes the value $-\frac{m\bm{v}^2}{2}\lambda$ and has a negative derivative at $a=0$. For sufficiently small $a>0$ the values of $\mathscr{E}^{\bm{v}}_j(\psi^{\bm{v}}_{R_a},\bm{A}^a_{R_a})$ must therefore be strictly less than $-\frac{m\bm{v}^2}{2}\lambda$.
\end{proof}
In the following proposition we investigate $I_j^\lambda$'s dependence on $\lambda$.
\begin{lemma}
Let $j\in\{\mathrm{S},\mathrm{P}\}$ and $\bm{v}\in\mathbb{R}^3$ with $0<|\bm{v}|<c$ be given. Then
\begin{align}
I_j^{s\nu}<sI_j^{\nu}\label{contI}
\end{align}
for all $\nu\in \Lambda_j^{\bm{v}}$ and $s>1$ with $s\nu\in\Lambda_j^{\bm{v}}$. Moreover,
\begin{align}
I_j^{\lambda}< I_j^{\mu}+ I_j^{\lambda-\mu}\label{subadditivity}
\end{align}
for $\mu,\lambda\in\Lambda_j^{\bm{v}}$ with $\mu<\lambda$.
\end{lemma}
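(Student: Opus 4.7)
The plan is to establish the strict scaling inequality (\ref{contI}) first and then derive the subadditivity (\ref{subadditivity}) from it by elementary algebra. For (\ref{contI}), I would test $\mathscr{E}_j^{\bm{v}}$ on the rescaled pair $(\sqrt{s}\psi,\bm{A})\in\mathcal{S}_{s\nu}$ where $(\psi,\bm{A})\in\mathcal{S}_\nu$. Plugging into (\ref{Eexp1}) and exploiting that the kinetic and drift terms are each homogeneous of degree one in $\|\psi\|_{L^2}^2$ while the electromagnetic piece does not depend on $\psi$, one obtains the identity
\[\mathscr{E}_j^{\bm{v}}(\sqrt{s}\psi,\bm{A})=s\,\mathscr{E}_j^{\bm{v}}(\psi,\bm{A})-\frac{s-1}{8\pi}\Bigl(\|\nabla\otimes\bm{A}\|_{L^2}^2-\bigl\|\bigl(\tfrac{\bm{v}}{c}\cdot\nabla\bigr)\bm{A}\bigr\|_{L^2}^2\Bigr).\]
Since $\|(\tfrac{\bm{v}}{c}\cdot\nabla)\bm{A}\|_{L^2}^2\leq(\bm{v}^2/c^2)\|\nabla\otimes\bm{A}\|_{L^2}^2$ by Plancherel, the correction is nonpositive when $|\bm{v}|<c$ and $s>1$, and infimizing gives the weak bound $I_j^{s\nu}\leq sI_j^\nu$.

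The strict version follows once I show that every minimizing sequence $(\psi_n,\bm{A}_n)\subset\mathcal{S}_\nu$ satisfies $\liminf_n\|\nabla\otimes\bm{A}_n\|_{L^2}^2>0$. Suppose, for contradiction, that a subsequence has $\|\nabla\otimes\bm{A}_n\|_{L^2}\to 0$, so Sobolev gives $\bm{A}_n\to 0$ in $L^6$. The a priori estimates (\ref{boundonA})--(\ref{boundonpsisecond}) make $\|\psi_n\|_{L^6}$ bounded; together with the boundedness of $\|\nabla_{j,\bm{A}_n}\psi_n\|_{L^2}$ (for $j=\mathrm{P}$ first reduced to the Schr\"odinger kinetic energy via the Lichnerowicz identity (\ref{Pauli})) and the H\"older estimate $\|\bm{A}_n\psi_n\|_{L^2}\leq\|\bm{A}_n\|_{L^6}\|\psi_n\|_{L^3}$, this yields boundedness of $\|\nabla\psi_n\|_{L^2}$ as well. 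All $\bm{A}_n$-dependent cross terms in $\mathscr{E}_j^{\bm{v}}(\psi_n,\bm{A}_n)-\mathscr{E}_j^{\bm{v}}(\psi_n,0)$ are then controlled by $\|\bm{A}_n\|_{L^6}$ times bounded quantities and tend to zero. Completing the square in Fourier, using the identity $\|\nabla_{j,0}\psi\|_{L^2}^2=\hbar^2\|\nabla\psi\|_{L^2}^2$ valid for both $j=\mathrm{S}$ and $j=\mathrm{P}$ (via $(\bm{\sigma}\cdot\bm{p})^2=|\bm{p}|^2$), gives $\mathscr{E}_j^{\bm{v}}(\psi_n,0)\geq-\tfrac{m\bm{v}^2}{2}\nu$. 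Hence $\liminf_n\mathscr{E}_j^{\bm{v}}(\psi_n,\bm{A}_n)\geq-\tfrac{m\bm{v}^2}{2}\nu$, contradicting the strict bound $I_j^\nu<-\tfrac{m\bm{v}^2}{2}\nu$ of Lemma~\ref{negativeI}. Combining this positive liminf with the identity above and letting $n\to\infty$ yields $I_j^{s\nu}<sI_j^\nu$.

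With (\ref{contI}) in hand, (\ref{subadditivity}) is pure arithmetic. By the symmetry $\mu\leftrightarrow\lambda-\mu$ I may assume $\mu\geq\lambda-\mu$. Setting $s=\lambda/\mu\in(1,2]$, strict scaling gives $I_j^\lambda<sI_j^\mu=I_j^\mu+\tfrac{\lambda-\mu}{\mu}I_j^\mu$. If $\mu=\lambda-\mu$ this already closes the argument. Otherwise $\tau=\mu/(\lambda-\mu)>1$, and strict scaling applied to $I_j^\mu=I_j^{\tau(\lambda-\mu)}<\tau I_j^{\lambda-\mu}$, combined with $I_j^{\lambda-\mu}<0$ from Lemma~\ref{negativeI}, gives (dividing by $\tau$) $\tfrac{\lambda-\mu}{\mu}I_j^\mu<I_j^{\lambda-\mu}$, and chaining the two inequalities delivers $I_j^\lambda<I_j^\mu+I_j^{\lambda-\mu}$.

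The main obstacle I anticipate is the liminf step: verifying rigorously that $\mathscr{E}_j^{\bm{v}}(\psi_n,\bm{A}_n)-\mathscr{E}_j^{\bm{v}}(\psi_n,0)\to 0$ requires simultaneous uniform control of $\|\psi_n\|_{L^6}$, $\|\nabla\psi_n\|_{L^2}$ and, in the Pauli case, the spin-curl term $\int\langle\psi_n,\bm{\sigma}\cdot(\nabla\times\bm{A}_n)\psi_n\rangle\,\mathrm{d}\bm{x}$. Bootstrapping from the Remark's explicit bounds and from the boundedness of $\|\nabla_{j,\bm{A}_n}\psi_n\|_{L^2}$ through H\"older and Sobolev is the delicate technical ingredient on which the whole argument rests.
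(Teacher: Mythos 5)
Your proposal is correct, and the half dealing with \eqref{subadditivity} follows the same two-case algebra as the paper. Where you diverge is in the proof of \eqref{contI}: the paper makes the lower bound on the field energy \emph{quantitative}. It fixes a constant $k\in\bigl(0,-\frac{m\bm{v}^2}{2}\nu-I_j^\nu\bigr)$, chooses $\varepsilon$ small relative to $k$, selects a near-minimizer $(\psi,\bm{A})\in\mathcal{S}_\nu$ with $\mathscr{E}_j^{\bm{v}}(\psi,\bm{A})\leq I_j^\nu+\varepsilon$, and then extracts from the inequalities \eqref{halfway} and \eqref{estimateonenergySCase} an explicit positive lower bound on $\|\nabla\otimes\bm{A}\|_{L^2}^2$ in terms of $k$, $\nu$ and $\bm{v}$; feeding this into the scaling identity and tuning $\varepsilon$ finishes the job in one pass, with no compactness. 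You instead argue \emph{qualitatively} by contradiction: if some minimizing sequence had $\|\nabla\otimes\bm{A}_n\|_{L^2}\to 0$, then $\bm{A}_n\to 0$ in $L^6$, the $\bm{A}_n$-dependent terms in the energy wash out using the a priori bounds (Remark after Proposition~\ref{lowerb} and the $\|\nabla\psi_n\|_{L^2}$-bound proved via the same manipulation as in Lemma~\ref{minbdd}), and the residual free energy satisfies $\mathscr{E}_j^{\bm{v}}(\psi_n,\bm{0})\geq -\frac{m\bm{v}^2}{2}\nu$ by completing the square in Fourier, contradicting Lemma~\ref{negativeI}; hence $\liminf_n\|\nabla\otimes\bm{A}_n\|_{L^2}^2>0$ and the scaling identity gives strictness. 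Both arguments rest on the same underlying fact, namely that $I_j^\nu<-\frac{m\bm{v}^2}{2}\nu$ forces a nontrivial magnetic field for (near-)minimizers. Your route is conceptually cleaner and avoids the paper's delicate tuning of $k$ and $\varepsilon$, but it is nonconstructive; the paper's route is heavier on bookkeeping but produces an explicit quantitative gap $sI_j^\nu-I_j^{s\nu}>0$, which is the kind of estimate that can be reused downstream. One small remark: in the chaining step for \eqref{subadditivity}, dividing $I_j^\mu<\tau I_j^{\lambda-\mu}$ by $\tau>0$ already gives $\frac{\lambda-\mu}{\mu}I_j^\mu<I_j^{\lambda-\mu}$; you do not actually need the sign information $I_j^{\lambda-\mu}<0$ from Lemma~\ref{negativeI} there.
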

\begin{proof}
Let $\nu\in\Lambda_{j}^{\bm{v}}$ and $s>1$ with $s\nu\in\Lambda_{j}^{\bm{v}}$ be given and choose (by means of Lemma \ref{negativeI}) some constant $k\in \bigl(0,-\frac{m\bm{v}^2}{2}\nu-I_{j}^{\nu}\bigr)$. Given a positive $\varepsilon$ satisfying
\begin{align*}
\varepsilon<\min\left\{-\frac{m\bm{v}^2}{2}\nu-I^\nu_{j}-k,\frac{s-1}{s}\frac{k^{\frac{3}{2}}\hbar\sqrt{\bigl(c^2-\bm{v}^2\bigr)\bigl(\Theta_{j,+}^\nu-|\bm{v}|\bigr)\bigl(|\bm{v}|-\Theta_{j,-}^\nu\bigr)}}{3^{\frac{3}{2}}\pi K_S^3Q^2\sqrt{m}\nu^{\frac{3}{2}}\bm{v}^2}\right\}
\end{align*}
we can then choose a pair $(\psi,\bm{A})\in \mathcal{S}_{\nu}$ such that
\begin{align}
\mathscr{E}_{j}^{\bm{v}}(\psi,\bm{A})\leq I^\nu_{j}+\varepsilon,\label{EpsiA}
\end{align}
which together with \eqref{halfway}, \eqref{estimateonenergySCase} and the assumption $\varepsilon<-\frac{m\bm{v}^2}{2}\nu-I^\nu_{j}-k$ gives
\begin{align}
\|\nabla\otimes \bm{A}\|_{L^2}^2>\frac{8k^{\frac{3}{2}}\hbar c^2}{3^{\frac{3}{2}}K_S^3 Q^2\sqrt{m}\nu^{\frac{3}{2}}\bm{v}^2}\left(\frac{\bigl(\Theta_{j,+}^\nu-|\bm{v}|\bigr)\bigl(|\bm{v}|-\Theta_{j,-}^\nu\bigr)}{c^2-\bm{v}^2}\right)^{\frac{1}{2}}.\label{lowerboundonA}
\end{align}
Then \eqref{EpsiA} and \eqref{lowerboundonA} imply that
\begin{align}
I^{s\nu}_{j}&\leq \mathscr{E}_{j}^{\bm{v}}(\sqrt{s}\psi,\bm{A})\nonumber\\
&= s\mathscr{E}_{j}^{\bm{v}}(\psi,\bm{A})+\frac{1-s}{8\pi}\Bigl(\|\nabla\otimes\bm{A}\|_{L^2}^2-\Bigl\|\Bigl(\frac{\bm{v}}{c}\cdot\nabla\Bigr)\bm{A}\Bigr\|_{L^2}^2\Bigr)\label{ssqrt}\\
&< s I^\nu_{j} + s\varepsilon + (1-s)\frac{k^{\frac{3}{2}}\hbar\sqrt{\bigl(c^2-\bm{v}^2\bigr)\bigl(\Theta_{j,+}^\nu-|\bm{v}|\bigr)\bigl(|\bm{v}|-\Theta_{j,-}^\nu\bigr)}}{3^{\frac{3}{2}}\pi K_S^3Q^2\sqrt{m}\nu^{\frac{3}{2}}\bm{v}^2}\nonumber\\
&<s I^\nu_{j},\nonumber
\end{align}
proving that \eqref{contI} indeed does hold true.

This enables us to prove \eqref{subadditivity}, so let $\mu,\lambda\in\Lambda_j^{\bm{v}}$ with $\mu<\lambda$ be given. If $\mu>\lambda-\mu$ is satisfied we can use \eqref{contI} twice $\bigl($with $(s,\nu)=\bigl(\frac{\lambda}{\mu},\mu\bigr)$ respectively $(s,\nu)=\bigl(\frac{\mu}{\lambda-\mu},\lambda-\mu\bigr)\bigr)$ and obtain
\begin{align*}
I^\lambda_{j}=I^{\frac{\lambda}{\mu}\mu}_{j}<\frac{\lambda}{\mu}I^\mu_{j}=I^\mu_{j}+\frac{\lambda-\mu}{\mu}I^{\frac{\mu}{\lambda-\mu}(\lambda-\mu)}_{j}< I^\mu_{j}+I^{\lambda-\mu}_{j}
\end{align*}
and if on the other hand $\mu\leq\lambda-\mu$ we can likewise apply \eqref{contI} to get
\begin{align*}
I^\lambda_j= I^{\frac{\lambda}{\lambda-\mu}(\lambda-\mu)}_j<\frac{\lambda}{\lambda-\mu}I^{\lambda-\mu}_j=\frac{\mu}{\lambda-\mu}I^{\frac{\lambda-\mu}{\mu}\mu}_j+I^{\lambda-\mu}_j\leq I^{\mu}_j+I^{\lambda-\mu}_j,
\end{align*}
so \eqref{subadditivity} also holds true.
\end{proof}
\begin{bemaerkning}\label{Idecreasing}
The strict subadditivity expressed in \eqref{subadditivity} implies that $\nu\mapsto I_j^\nu$ is strictly decreasing on $\Lambda_j^{\bm{v}}$ since the term $I_j^{\lambda-\mu}$ is negative by Lemma \ref{negativeI}.
\end{bemaerkning}
As a consequence of Remark \ref{Idecreasing} the function $\nu\mapsto I_j^\nu$ has limits from the left as well as from the right in all points of $\Lambda_j^{\bm{v}}$. In fact, we can show the following result.
\begin{lemma}\label{Icontinuous}
Given $j\in\{\mathrm{S},\mathrm{P}\}$ and $\bm{v}\in\mathbb{R}^3$ with $0<|\bm{v}|<c$ the mapping $\nu\mapsto I_j^{\nu}$ is continuous on $\Lambda_j^{\bm{v}}$.
\end{lemma}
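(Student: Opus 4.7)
By Remark~\ref{Idecreasing} the mapping $\nu\mapsto I_j^\nu$ is monotone, so the one-sided limits exist at every $\nu_0\in \Lambda_j^{\bm{v}}$; continuity will follow once we sandwich $\limsup_{\nu\to\nu_0}I_j^\nu\leq I_j^{\nu_0}\leq\liminf_{\nu\to\nu_0}I_j^\nu$. The key observation driving both bounds is that the two $\psi$-dependent terms $\frac{1}{2m}\|\nabla_{j,\bm{A}}\psi\|_{L^2}^2$ and $(\psi,i\hbar\bm{v}\cdot\nabla\psi)_{L^2}$ in $\mathscr{E}_j^{\bm{v}}(\psi,\bm{A})$ are quadratic in $\psi$, while the field part $F(\bm{A}):=\frac{1}{8\pi}\bigl(\|\nabla\otimes\bm{A}\|_{L^2}^2-\|(\bm{v}/c)\cdot\nabla\bm{A}\|_{L^2}^2\bigr)$ is independent of $\psi$. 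Thus if we write $K(\psi,\bm{A}):=\mathscr{E}_j^{\bm{v}}(\psi,\bm{A})-F(\bm{A})$, rescaling the wave function by a factor $\sqrt{t}$ produces
\begin{align*}
\mathscr{E}_j^{\bm{v}}(\sqrt{t}\,\psi,\bm{A})=tK(\psi,\bm{A})+F(\bm{A})=\mathscr{E}_j^{\bm{v}}(\psi,\bm{A})+(t-1)K(\psi,\bm{A}).
\end{align*}

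For the upper estimate, given $\varepsilon>0$ I would choose $(\psi,\bm{A})\in\mathcal{S}_{\nu_0}$ with $\mathscr{E}_j^{\bm{v}}(\psi,\bm{A})<I_j^{\nu_0}+\varepsilon$ and, for each $\nu$ near $\nu_0$, test the definition of $I_j^\nu$ against the admissible pair $(\sqrt{\nu/\nu_0}\,\psi,\bm{A})\in\mathcal{S}_\nu$. Taking $t=\nu/\nu_0$ in the identity above gives $I_j^\nu\leq \mathscr{E}_j^{\bm{v}}(\psi,\bm{A})+(\nu/\nu_0-1)K(\psi,\bm{A})$, and letting $\nu\to\nu_0$ (with $(\psi,\bm{A})$ fixed) yields $\limsup_{\nu\to\nu_0}I_j^\nu\leq I_j^{\nu_0}+\varepsilon$, hence the desired one-sided bound since $\varepsilon$ was arbitrary.

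For the matching lower estimate, I pick any sequence $\nu_n\to\nu_0$ in $\Lambda_j^{\bm{v}}$ and near-minimizers $(\psi_n,\bm{A}_n)\in\mathcal{S}_{\nu_n}$ with $\mathscr{E}_j^{\bm{v}}(\psi_n,\bm{A}_n)<I_j^{\nu_n}+1/n$, and test the infimum $I_j^{\nu_0}$ against the rescaled pair $(\sqrt{\nu_0/\nu_n}\,\psi_n,\bm{A}_n)\in\mathcal{S}_{\nu_0}$. The same identity now delivers
\begin{align*}
I_j^{\nu_0}\leq \mathscr{E}_j^{\bm{v}}(\psi_n,\bm{A}_n)+\bigl(\nu_0/\nu_n-1\bigr)K(\psi_n,\bm{A}_n),
\end{align*}
so I need uniform control of $K(\psi_n,\bm{A}_n)$ to conclude $I_j^{\nu_0}\leq\liminf_n I_j^{\nu_n}$. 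This is the one non-automatic step, and I would obtain it as follows: $\nu\mapsto I_j^\nu$ is monotone and $\nu\mapsto\Theta_{j,\pm}^\nu$ is continuous, so on some closed interval around $\nu_0$ inside $\Lambda_j^{\bm{v}}$ the values $I_j^{\nu_n}$, and hence $\mathscr{E}_j^{\bm{v}}(\psi_n,\bm{A}_n)$, stay uniformly bounded; the estimate \eqref{boundonA} then gives a uniform bound on $\|\nabla\otimes\bm{A}_n\|_{L^2}$, which in turn bounds $|F(\bm{A}_n)|$ uniformly, and finally $|K(\psi_n,\bm{A}_n)|\leq|\mathscr{E}_j^{\bm{v}}(\psi_n,\bm{A}_n)|+|F(\bm{A}_n)|$ is bounded uniformly in $n$. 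Since $\nu_0/\nu_n-1\to 0$, this closes the argument.

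The main obstacle is precisely this uniform control of $K(\psi_n,\bm{A}_n)$: the drift term $(\psi,i\hbar\bm{v}\cdot\nabla\psi)_{L^2}$ is sign-indefinite, so $K$ is not a priori bounded just from the fact that $\mathscr{E}_j^{\bm{v}}$ is bounded above, and it is the a priori bound \eqref{boundonA} (itself a consequence of Proposition~\ref{lowerb}) that makes the whole continuity argument go through.
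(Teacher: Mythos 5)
Your argument is correct and is essentially the one given in the paper: both proofs exploit the quadratic dependence of $\mathscr{E}_j^{\bm{v}}(\psi,\bm{A})$ on $\psi$ via the rescaling $\psi\mapsto\sqrt{t}\,\psi$ (this is exactly the identity used in \eqref{ssqrt}) and then invoke the a priori bound \eqref{boundonA} together with Lemma~\ref{negativeI} to control the field term for near-minimizers at nearby masses. The only difference is bookkeeping — the paper proves left and right continuity separately, whereas you sandwich $\limsup$ and $\liminf$ — and the content of the two non-automatic steps (fixing a competitor for the upper bound, the a priori bound for the lower bound) is identical.
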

\begin{proof}
Let us begin by proving that $\nu\mapsto I_j^{\nu}$ is left continuous: Given $\nu\in \Lambda_j^{\bm{v}}$, $\varepsilon>0$ and $0<s<1$ we choose $(\psi,\bm{A})\in\mathcal{S}_{\nu}$ such that $\mathscr{E}_j^{\bm{v}}(\psi,\bm{A})\leq I_j^\nu+\varepsilon$ and proceed just as in \eqref{ssqrt} to obtain
\begin{align*}
I_j^{s\nu}&\leq sI_j^\nu+s\varepsilon+\frac{1-s}{8\pi}\Bigl(\|\nabla\otimes\bm{A}\|_{L^2}^2-\Bigl\|\Bigl(\frac{\bm{v}}{c}\cdot\nabla\Bigr)\bm{A}\Bigr\|_{L^2}^2\Bigr).
\end{align*}
Letting $s\to 1^-$ therefore gives $\lim_{s\to 1^-}I_j^{s\nu}\leq I_j^\nu+\varepsilon$ and the fact that this holds true for any $\varepsilon>0$ implies that $\lim_{s\to 1^-}I_j^{s\nu}\leq I_j^\nu$. Since $\nu\mapsto I_j^\nu$ is decreasing the opposite inequality also holds true, whereby
\begin{align*}
\lim_{s\to 1^-}I_j^{s\nu}=I_j^\nu.
\end{align*}

To prove right continuity of $\nu\mapsto I_j^\nu$ we let $\nu\in \Lambda_j^{\bm{v}}$, $\varepsilon>0$ as well as $s>1$ with $s\nu\in \Lambda_j^{\bm{v}}$ be arbitrary and choose a pair $(\psi,\bm{A})\in \mathcal{S}_{s\nu}$ such that $\mathscr{E}_j^{\bm{v}}(\psi,\bm{A})\leq I_j^{s\nu}+\varepsilon$. Then \eqref{boundonA} and Lemma \ref{negativeI} give that
\begin{align*}
I_j^{s\nu}+\varepsilon&\geq s\mathscr{E}_j^{\bm{v}}\bigl(\tfrac{\psi}{\sqrt{s}},\bm{A}\bigr)+\frac{1-s}{8\pi}\Bigl(\|\nabla\otimes\bm{A}\|_{L^2}^2-\Bigl\|\Bigl(\frac{\bm{v}}{c}\cdot\nabla\Bigr)\bm{A}\Bigr\|_{L^2}^2\Bigr)\\
&\geq sI_j^\nu+\frac{2c^2(1-s)}{c^2-\bm{v}^2}\left(\varepsilon+\frac{(4\pi)^2K_S^6Q^4m(s\nu)^3\bm{v}^4}{\hbar^2(c^2-\bm{v}^2)(\Theta_{j,+}^{s\nu}-|\bm{v}|)(|\bm{v}|-\Theta_{j,-}^{s\nu})}\right),
\end{align*}
whereby $\lim_{s\to 1^+}I_j^{s\nu}+\varepsilon\geq I_j^\nu$. By letting $\varepsilon\to 0^+$ we thus obtain the inequality $\lim_{s\to 1^+}I_j^{s\nu}\geq I_j^\nu$ and the opposite inequality follows immediately from \eqref{contI}, which leaves us in position to conclude that the identity
\begin{align*}
\lim_{s\to 1^+}I_j^{s\nu}=I_j^\nu
\end{align*}
holds true.
\end{proof}

\chapter{Existence of a Minimizer}\label{existenceofminim}
We will now consider a strategy that is frequently used for approaching minimization problems such as ours -- it is often called the direct method in the calculus of variations and was introduced by Zaremba and Hilbert around the year 1900. Here, one first considers a minimizing sequence for the functional at hand.
\begin{definition}
Let $j\in\{\mathrm{S},\mathrm{P}\}$, $\bm{v}\in\mathbb{R}^3$ with $0<|\bm{v}|<c$ and $\lambda\in \Lambda_j^{\bm{v}}$ be given. By a \emph{minimizing sequence} for $\mathscr{E}_j^{\bm{v}}$ we mean a sequence of points $(\psi_n,\bm{A}_n)\in\mathcal{S}_{\lambda}$ such that $\bigl(\mathscr{E}_j^{\bm{v}}(\psi_n,\bm{A}_n)\bigr)_{n\in\mathbb{N}}$ converges to $I_j^\lambda$ in $\mathbb{R}$.
\end{definition}
The philosophy of the direct method in the calculus of variations is to first argue that a given minimizing sequence must have a subsequence converging weakly to some point $(\psi,\bm{A})$ and then as a second step one hopes to show lower semicontinuity properties of $\mathscr{E}_j^{\bm{v}}$ ensuring that the identity $\mathscr{E}_j^{\bm{v}}(\psi,\bm{A})=I_j^\lambda$ holds true. However, our specific functional $\mathscr{E}_j^{\bm{v}}$ is translation invariant -- meaning that any translation $\tau_{\bm{y}}:\bm{x}\mapsto(\bm{x}+\bm{y})$ in space gives rise to the identity $\mathscr{E}_j^{\bm{v}}(\psi\circ\tau_{\bm{y}},\bm{A}\circ\tau_{\bm{y}})=\mathscr{E}_j^{\bm{v}}(\psi,\bm{A})$. Thus, even if $\mathscr{E}_j^{\bm{v}}$ indeed does have a minimizer, there will exist lots of minimizing sequences whose $\psi$-part converges weakly in $L^2$ to the zero function and the possible limit of (any subsequence of) such a minimizing sequence can clearly not serve as a minimizer for $\mathscr{E}_j^{\bm{v}}$. In other words, we have to break the translation invariance in some way and to do this we will prove a variant of the concentration-compactness principle by Pierre-Louis Lions (see \cite{Lions1,Lions2}). We can not just apply the result of Lions to our problem since this result concerns a sequence of $H^1$- (or $L^1$-)functions $\psi_n$ whereas we are dealing with a sequence of $\mathcal{S}_{\lambda}$-pairs $(\psi_n,\bm{A}_n)$. Let us begin by proving the following simple -- but important -- lemma that provides us with some control over any given minimizing sequence for $\mathscr{E}_j^{\bm{v}}$.
\begin{lemma}\label{minbdd}
Let $j\in\{\mathrm{S},\mathrm{P}\}$, $\bm{v}\in\mathbb{R}^3$ with $0<|\bm{v}|<c$ as well as $\lambda\in\Lambda_j^{\bm{v}}$ be given and consider a minimizing sequence $\bigl((\psi_n,\bm{A}_n)\bigr)_{n\in\mathbb{N}}\subset \mathcal{S}_{\lambda}$ for $\mathscr{E}_j^{\bm{v}}$. Then $(\psi_n)_{n\in\mathbb{N}}$ is bounded in $H^1$ and $(\bm{A}_n)_{n\in\mathbb{N}}$ is bounded in $D^1$.
\end{lemma}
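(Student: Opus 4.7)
The plan is to read everything out of the two already-proven inequalities \eqref{boundonA}, \eqref{boundonpsifirst} and \eqref{boundonpsisecond} in the Remark following Proposition \ref{lowerb}, since those bounds are already phrased in terms of $\mathscr{E}_j^{\bm{v}}(\psi,\bm{A})+\frac{m\bm{v}^2}{2}\lambda$. Since $(\mathscr{E}_j^{\bm{v}}(\psi_n,\bm{A}_n))_{n\in\mathbb{N}}$ converges to $I_j^\lambda\in\mathbb{R}$, the quantity $\mathscr{E}_j^{\bm{v}}(\psi_n,\bm{A}_n)+\frac{m\bm{v}^2}{2}\lambda$ is bounded, and \eqref{boundonA} therefore delivers a uniform bound on $\|\nabla\otimes\bm{A}_n\|_{L^2}$, which is exactly the statement that $(\bm{A}_n)_{n\in\mathbb{N}}$ is bounded in $D^1$. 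Splitting the sequence according to whether \eqref{FirstcasePaulilow} or \eqref{SecondPaulilow} holds and applying the corresponding bound \eqref{boundonpsifirst} or \eqref{boundonpsisecond} then yields a uniform bound on $\|\psi_n\|_{L^6}$. Together with the constraint $\|\psi_n\|_{L^2}^2=\lambda$ this places $(\psi_n)_{n\in\mathbb{N}}$ in a bounded subset of $L^2\cap L^6$.

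To upgrade this to $H^1$-boundedness I would isolate the kinetic term from \eqref{Eexp2}, writing
\begin{align*}
\tfrac{1}{2m}\|\nabla_{j,\bm{A}_n+\frac{mc}{Q}\bm{v}}\psi_n\|_{L^2}^2
=\mathscr{E}_j^{\bm{v}}(\psi_n,\bm{A}_n)+\tfrac{Q}{c}(\psi_n,\bm{v}\cdot\bm{A}_n\psi_n)_{L^2}+\tfrac{m\bm{v}^2}{2}\lambda-\tfrac{1}{8\pi}\bigl(\|\nabla\otimes\bm{A}_n\|_{L^2}^2-\|(\tfrac{\bm{v}}{c}\cdot\nabla)\bm{A}_n\|_{L^2}^2\bigr),
\end{align*}
and observing that every term on the right is uniformly bounded: the energy converges, the gradient-tensor norms are already controlled, and $|(\psi_n,\bm{v}\cdot\bm{A}_n\psi_n)_{L^2}|\leq|\bm{v}|\,\|\bm{A}_n\|_{L^6}\|\psi_n\|_{L^{12/5}}^2$ is handled by Sobolev on $\bm{A}_n$ and by interpolation of $\|\psi_n\|_{L^{12/5}}$ between the already-bounded norms $\|\psi_n\|_{L^2}$ and $\|\psi_n\|_{L^6}$. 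Thus $\|\nabla_{j,\bm{A}_n+\frac{mc}{Q}\bm{v}}\psi_n\|_{L^2}$ is uniformly bounded.

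The final step is to convert this covariant kinetic bound into a plain $\|\nabla\psi_n\|_{L^2}$ bound. In the Schr\"odinger case this is just the triangle inequality: $\hbar\|\nabla\psi_n\|_{L^2}\leq\|\nabla_{\mathrm{S},\bm{A}_n+\frac{mc}{Q}\bm{v}}\psi_n\|_{L^2}+\tfrac{|Q|}{c}\|\bm{A}_n\psi_n\|_{L^2}+m|\bm{v}|\sqrt{\lambda}$, with $\|\bm{A}_n\psi_n\|_{L^2}\leq\|\bm{A}_n\|_{L^6}\|\psi_n\|_{L^3}$ again bounded by Sobolev on $\bm{A}_n$ and interpolation on $\psi_n$. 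In the Pauli case I would first use the Lichnerowicz formula \eqref{Pauli} (via the $C_0^\infty$-approximation already invoked in the proof of Proposition \ref{lowerb}) to pass from $\|\nabla_{\mathrm{P},\bm{A}_n+\frac{mc}{Q}\bm{v}}\psi_n\|_{L^2}^2$ to $\|\nabla_{\mathrm{S},\bm{A}_n+\frac{mc}{Q}\bm{v}}\psi_n\|_{L^2}^2$, noting that $\nabla\times\bm{v}=\bm{0}$ so that the Lichnerowicz correction only involves $\nabla\times\bm{A}_n$, and bounding that correction by $\|\nabla\otimes\bm{A}_n\|_{L^2}\|\psi_n\|_{L^4}^2$ (H\"older plus Sobolev/interpolation). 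Then the same Schr\"odinger-case estimate applies.

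The main obstacle is really just bookkeeping: the case split between \eqref{FirstcasePaulilow} and \eqref{SecondPaulilow} has to be performed term-passagewise for each $n$, and in the Pauli case one must be careful that the Lichnerowicz correction is dominated by quantities that have already been controlled, rather than by the very kinetic norm one is trying to bound. Both are handled cleanly by the H\"older--Sobolev pipeline above, so no new analytic input beyond what has been used in Section \ref{bddnessbelow} should be required.
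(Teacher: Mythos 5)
Your proposal is correct, and it begins the same way as the paper: invoke convergence of $\mathscr{E}_j^{\bm{v}}(\psi_n,\bm{A}_n)$, feed that into \eqref{boundonA} to control $\|\nabla\otimes\bm{A}_n\|_{L^2}$, and into \eqref{boundonpsifirst}/\eqref{boundonpsisecond} (split by which of \eqref{FirstcasePaulilow}/\eqref{SecondPaulilow} holds for each $n$, taking the max of the two bounds) to control $\|\psi_n\|_{L^6}$. Where you diverge is the upgrade to an $H^1$ bound. The paper works directly from \eqref{Eexp1}: it expands $\|\nabla_{j,\bm{A}_n}\psi_n\|_{L^2}^2$, discards the manifestly nonnegative terms $\tfrac{Q^2}{2mc^2}\|\bm{A}_n\psi_n\|_{L^2}^2+\tfrac{1}{8\pi}\bigl(\|\nabla\otimes\bm{A}_n\|_{L^2}^2-\|(\tfrac{\bm{v}}{c}\cdot\nabla)\bm{A}_n\|_{L^2}^2\bigr)$, bounds the cross-terms with H\"older/Sobolev so that they are at most linear in $\|\nabla\psi_n\|_{L^2}$ (see \eqref{boundonnablapsi}), and then absorbs those $\|\nabla\psi_n\|_{L^2}$ factors into the quadratic left-hand side via Young's inequality. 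Your route goes through \eqref{Eexp2} instead: you first bound the covariant kinetic term $\|\nabla_{j,\bm{A}_n+\frac{mc}{Q}\bm{v}}\psi_n\|_{L^2}$ outright (after controlling $-\tfrac{Q}{c}(\psi_n,\bm{v}\cdot\bm{A}_n\psi_n)_{L^2}$ by H\"older, Sobolev and $L^2$--$L^6$ interpolation), and only afterwards pass to $\|\nabla\psi_n\|_{L^2}$ by the triangle inequality, adding for $j=\mathrm{P}$ a Lichnerowicz step to relate $\|\nabla_{\mathrm{P},\cdot}\psi_n\|$ to $\|\nabla_{\mathrm{S},\cdot}\psi_n\|$ using $\nabla\times\bm{v}=\bm{0}$ and the bound $\|\nabla\otimes\bm{A}_n\|_{L^2}\|\psi_n\|_{L^4}^2$ on the spin term. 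The paper's route is more economical because discarding $\tfrac{Q^2}{2mc^2}\|\bm{A}_n\psi_n\|_{L^2}^2$ avoids having to estimate $\|\bm{A}_n\psi_n\|_{L^2}$ separately, whereas your route yields an explicit uniform bound on the covariant kinetic energy $\|\nabla_{j,\bm{A}_n+\frac{mc}{Q}\bm{v}}\psi_n\|_{L^2}$ as an intermediate quantity of independent interest (indeed, this is exactly the quantity the paper subsequently tracks in Remark \ref{bdn}) and avoids the absorption step. Both arguments rest on the same ingredients and are sound; the bookkeeping you flag in the Pauli case is indeed benign, since the Lichnerowicz correction is controlled by $\|\nabla\otimes\bm{A}_n\|_{L^2}$ and $\|\psi_n\|_{L^4}$, both already under control, and not by the kinetic norm itself.
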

\begin{proof}
The sequence $\bigl(\mathscr{E}_j^{\bm{v}}(\psi_n,\bm{A}_n)\bigr)_{n\in\mathbb{N}}$ is bounded (because it is convergent) and therefore it follows from the estimates \eqref{boundonA}, \eqref{boundonpsifirst} and \eqref{boundonpsisecond} that $(\|\psi_n\|_{L^6})_{n\in\mathbb{N}}$ and $(\|\nabla\otimes\bm{A}_n\|_{L^2})_{n\in\mathbb{N}}$ are also bounded. Moreover, the sequence $(\|\psi_n\|_{L^2})_{n\in\mathbb{N}}$ is constant so all that remains to be shown is the boundedness of $(\|\nabla\psi_n\|_{L^2})_{n\in\mathbb{N}}$. For this we expand the kinetic energy in the expression \eqref{Eexp1} for $\mathscr{E}_j^{\bm{v}}(\psi_n,\bm{A}_n)$, use the nonnegativity of $\frac{Q^2}{2mc^2}\|\bm{A}_n\psi_n\|_{L^2}^2+\frac{1}{8\pi}\bigl(\|\nabla\otimes\bm{A}_n\|_{L^2}^2-\bigl\|\bigl(\frac{\bm{v}}{c}\cdot\nabla\bigr)\bm{A}_n\bigr\|^2\bigr)$ and apply Hölder's as well as Sobolev's inequalities to get
\begin{align}
\frac{\hbar^2}{2m}\|\nabla\psi_n\|_{L^2}^2\leq |\mathscr{E}_j^{\bm{v}}(\psi_n,\bm{A}_n)|+\frac{K_S\hbar |Q|\lambda^{\frac{1}{4}}}{mc}\|\nabla\psi_n\|_{L^2}\|\nabla\otimes\bm{A}_n\|_{L^2}\|\psi_n\|_{L^6}^{\frac{1}{2}}\nonumber\\
+\hbar\lambda^{\frac{1}{2}}|\bm{v}|\|\nabla\psi_n\|_{L^2}+1_{\{\mathrm{P}\}}(j)\frac{\hbar |Q|\lambda^{\frac{1}{4}}}{2mc}\|\nabla\otimes\bm{A}_n\|_{L^2}\|\psi_n\|_{L^6}^{\frac{3}{2}}.\label{boundonnablapsi}
\end{align}
Here, we can use Young's inequality for products to absorb the $\|\nabla\psi_n\|_{L^2}$'s on the right hand side of \eqref{boundonnablapsi} into the left hand side of \eqref{boundonnablapsi} and obtain an upper bound on $\|\nabla\psi_n\|_{L^2}^2$. 
\end{proof}
\begin{bemaerkning}\label{bdn}
From now on we will consider some fixed minimizing sequence $\bigl((\psi_n,\bm{A}_n)\bigr)_{n\in\mathbb{N}}$ for $\mathscr{E}_j^{\bm{v}}$. It follows from Lemma \ref{minbdd} that all of the terms appearing in the expressions \eqref{Eexp1} and \eqref{Eexp2} for $\mathscr{E}_j^{\bm{v}}(\psi_n,\bm{A}_n)$ define bounded sequences in $\mathbb{R}$. We will let $C$ denote a constant that majorizes each of the sequences $(\|\psi_n\|_{H^1})_{n\in\mathbb{N}}$, $(\|\nabla\otimes\bm{A}_n\|_{L^2})_{n\in\mathbb{N}}$ and $(\|\nabla_{j,\bm{A}_n+\frac{mc}{Q}\bm{v}}\psi_n\|_{L^2})_{n\in\mathbb{N}}$.
\end{bemaerkning}

\section{Breaking the Translation Invariance}
We hope to find a sequence $(\bm{y}_n)_{n\in\mathbb{N}}$ of points in $\mathbb{R}^3$ such that the direct method in the calculus of variations can be applied to the translated minimizing sequence $\bigl((\psi_n\circ \tau_{\bm{y}_n},\bm{A}_n\circ\tau_{\bm{y}_n})\bigr)_{n\in\mathbb{N}}$ for $\mathscr{E}_j^{\bm{v}}$. As an essential tool in our search for such a sequence we introduce for each $n\in\mathbb{N}$ the nondecreasing \emph{concentration function} $\mathscr{C}_n:(0,\infty)\to (0,\lambda]$ given by
\begin{align}
\mathscr{C}_n(r)=\sup_{\bm{y}\in\mathbb{R}^3}\int_{\mathcal{B}(\bm{y},r)}|\psi_n(\bm{x})|^2\,\mathrm{d}\bm{x}\quad\textrm{for}\quad r>0.\label{Qndef}
\end{align}
Remember that we think of the $\psi$-variable as being a quantum particle's wave function and so the physical interpretation of a large value of $\mathscr{C}_n(r)$ (compared to $\lambda$) is that the quantum particle is likely to be localized in some ball $\subset \mathbb{R}^3$ with radius $r$. In this sense $\mathscr{C}_n$ expresses how concentrated the wave function is (see Figure \ref{fig:Concentration}).
\begin{figure}[ht]
	\centering
		\includegraphics[width=0.6\textwidth]{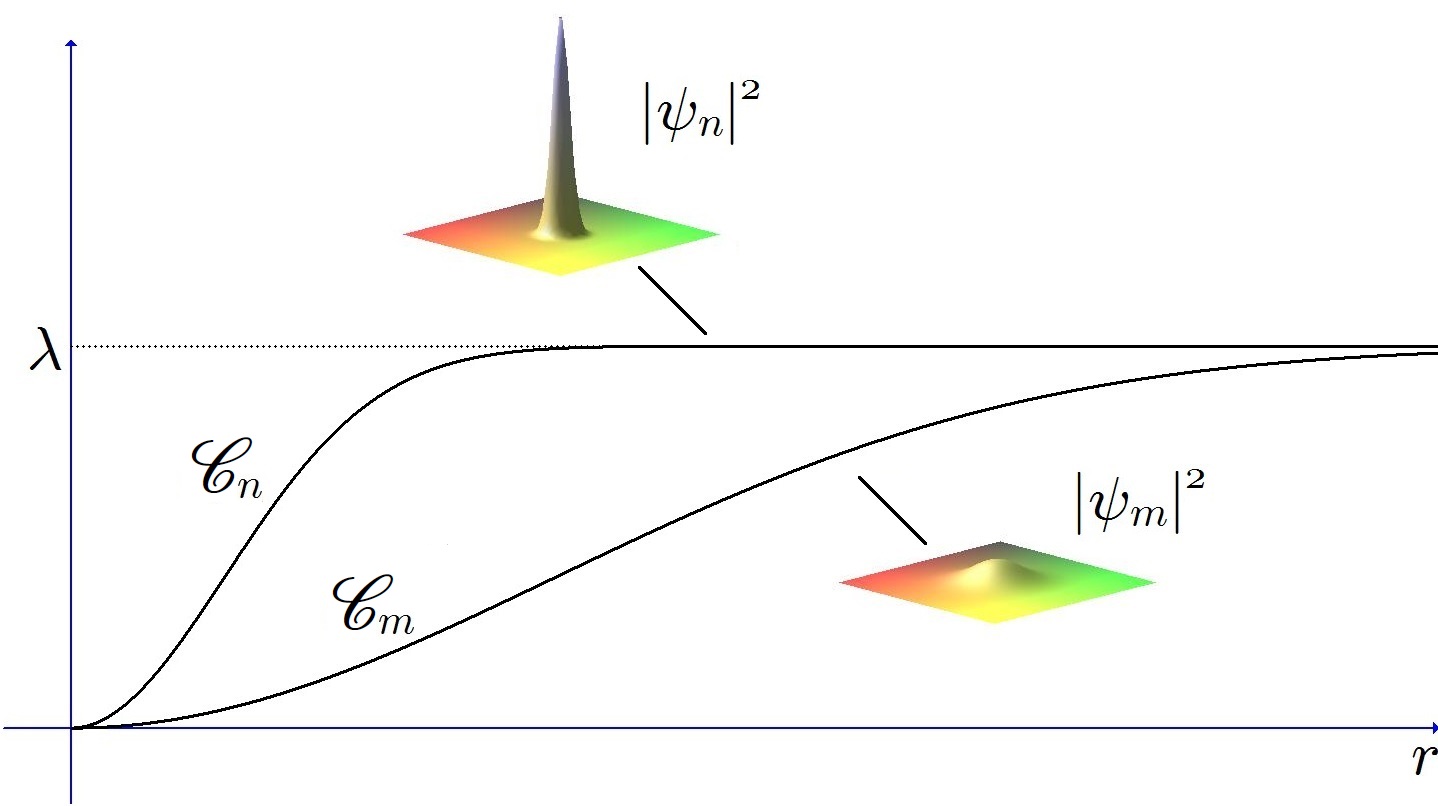}
	\caption{If $\mathscr{C}_n$ increases quickly to the value $\lambda$ then it means that the corresponding wave function $\psi_n$ is very concentrated around some point $\bm{y}$ in space, i.e. the quantum particle is with high probability positioned in close vicinity of $\bm{y}$.}
	\label{fig:Concentration}
\end{figure}
We summarize the most important properties of the functions $\mathscr{C}_n$ in the following lemma.
\begin{lemma}\label{Qnlemma}
Given $\lambda>0$ let $\psi_n\in H^1$ satisfy $\|\psi_n\|_{L^2}^2=\lambda$ and $\|\nabla\psi_n\|_{L^2}\leq C$ for all $n\in\mathbb{N}$ and define the function $\mathscr{C}_n:(0,\infty)\to (0,\lambda]$ by \eqref{Qndef}. Then $\mathscr{C}_n$ is nondecreasing with the limits $\lim_{r\to 0^+}\mathscr{C}_n(r)=0$ as well as $\lim_{r\to \infty}\mathscr{C}_n(r)=\lambda$ holding true and by passing to a subsequence $(\mathscr{C}_n)_{n\in\mathbb{N}}$ converges pointwise to some nondecreasing mapping $\mathscr{C}:(0,\infty)\to [0,\lambda]$ with $\lim_{r\to 0^+}\mathscr{C}(r)=0$.
\end{lemma}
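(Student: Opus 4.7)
The plan is to verify the three claims about each $\mathscr{C}_n$ by direct computation, establish a uniform-in-$n$ estimate $\mathscr{C}_n(r)\le Kr^2$ that controls the behaviour near $r=0$, and then invoke Helly's selection theorem to extract $\mathscr{C}$.

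First I would handle the routine properties. Monotonicity of $\mathscr{C}_n$ is immediate from the inclusion $\mathcal{B}(\bm{y},r_1)\subset\mathcal{B}(\bm{y},r_2)$ whenever $r_1\le r_2$, while $\mathscr{C}_n(r)\le\|\psi_n\|_{L^2}^2=\lambda$ is obvious. The limit $\lim_{r\to\infty}\mathscr{C}_n(r)=\lambda$ follows from dominated convergence: for each $\varepsilon>0$ pick $R_\varepsilon$ with $\int_{\mathcal{B}(\bm{0},R_\varepsilon)}|\psi_n(\bm{x})|^2\,\mathrm{d}\bm{x}\ge \lambda-\varepsilon$; then $\mathscr{C}_n(r)\ge \lambda-\varepsilon$ for all $r\ge R_\varepsilon$.

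The substantive step is a uniform-in-$n$ decay near $r=0$. Hölder's inequality with conjugate exponents $(3,\tfrac32)$ combined with the Sobolev embedding $\|\psi_n\|_{L^6}\le K_S\|\nabla\psi_n\|_{L^2}\le K_SC$ yields, for every $\bm{y}\in\mathbb{R}^3$ and $r>0$,
\begin{align*}
\int_{\mathcal{B}(\bm{y},r)}|\psi_n(\bm{x})|^2\,\mathrm{d}\bm{x}\le|\mathcal{B}(\bm{y},r)|^{\frac{2}{3}}\|\psi_n\|_{L^6}^2\le \Bigl(\tfrac{4\pi}{3}\Bigr)^{\frac{2}{3}}K_S^2C^2\,r^2.
\end{align*}
Taking the supremum over $\bm{y}$ produces $\mathscr{C}_n(r)\le Kr^2$ for a constant $K$ independent of $n$, which in particular gives $\lim_{r\to 0^+}\mathscr{C}_n(r)=0$ for each $n$.

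Finally, $(\mathscr{C}_n)$ is a sequence of nondecreasing functions on $(0,\infty)$ with values in the bounded interval $[0,\lambda]$, so by Helly's selection theorem (or equivalently a Cantor diagonal extraction along a countable dense set of radii followed by a monotone extension to all of $(0,\infty)$) one may pass to a subsequence converging pointwise on $(0,\infty)$ to some nondecreasing $\mathscr{C}:(0,\infty)\to [0,\lambda]$. Passing to the limit in the inequality $\mathscr{C}_n(r)\le Kr^2$ yields $\mathscr{C}(r)\le Kr^2$ and hence $\lim_{r\to 0^+}\mathscr{C}(r)=0$. The only point requiring any care is that the smallness of $\mathscr{C}_n(r)$ as $r\to 0^+$ must be uniform in $n$ (rather than merely pointwise), and this is precisely what the Sobolev/Hölder bound delivers.
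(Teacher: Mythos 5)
Your proof is correct and follows essentially the same route as the paper: monotonicity and $\lim_{r\to\infty}\mathscr{C}_n(r)=\lambda$ are handled in the obvious way, the Hölder--Sobolev bound $\mathscr{C}_n(r)\le\bigl(\tfrac{4\pi}{3}\bigr)^{2/3}K_S^2C^2r^2$ gives the $n$-uniform vanishing at $r=0^+$, and Helly's selection theorem produces the pointwise-convergent subsequence whose limit inherits both monotonicity and the $O(r^2)$ bound. This matches the paper's argument in all essentials, including the key observation you rightly flag that the smallness near the origin must be uniform in $n$ in order to pass to the limit function $\mathscr{C}$.
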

\begin{proof}
For an arbitrary $n\in\mathbb{N}$ the mapping $\mathscr{C}_n:(0,\infty)\to (0,\lambda]$ is obviously nondecreasing and the identity $\lim_{r\to 0^+}\mathscr{C}_n(r)=0$ holds true since
\begin{align}
\int_{\mathcal{B}(\bm{y},r)}|\psi_n(\bm{x})|^2\,\mathrm{d}\bm{x}\leq \Bigl(\frac{4}{3}\pi r^3K_S^3C^3\Bigr)^{\frac{2}{3}}\label{uniformQ}
\end{align}
for all $(r,\bm{y})\in(0,\infty)\times\mathbb{R}^3$ by Hölder's and Sobolev's inequalities. Moreover, we have $\lim_{r\to \infty}\mathscr{C}_n(r)=\lambda$ since Lebesgue's theorem on dominated convergence gives that the difference
\begin{align*}
\lambda-\mathscr{C}_n(r)\leq \lambda-\int_{\mathcal{B}(\bm{0},r)}|\psi_n(\bm{x})|^2\,\mathrm{d}\bm{x}=\int_{\mathbb{R}^3\setminus\mathcal{B}(\bm{0},r)}|\psi_n(\bm{x})|^2\,\mathrm{d}\bm{x}
\end{align*}
can be made arbitrarily small by choosing $r$ sufficiently large. Helly's selection principle \cite[Theorem 10.5]{KolF} ensures the existence of a subsequence of $(\mathscr{C}_n)_{n\in\mathbb{N}}$ converging pointwise to some function $\mathscr{C}$. The limit function $\mathscr{C}$ inherits the nondecreasingness from the $\mathscr{C}_n$'s and \eqref{uniformQ} gives that $\lim_{r\to 0^+}\mathscr{C}(r)=0$.
\end{proof}
To simplify notation we will also denote the subsequence described in Lemma \ref{Qnlemma} by $(\mathscr{C}_n)_{n\in\mathbb{N}}$. It is apparent that $\mathscr{C}$ and the $\mathscr{C}_n$-functions have almost identical properties. But even though the lemma depicts $\lim_{r\to\infty}\mathscr{C}_n(r)$ as being equal to $\lambda$ it does not at all mention the value of the limit
\begin{align}
\mu:=\lim_{r\to \infty}\mathscr{C}(r),\label{alphadef}
\end{align}
which is obviously well defined and contained in the interval $[0,\lambda]$. To determine the value of $\mu$ we first turn to our physical intuition: Remember that the points $(\psi_n,\bm{A}_n)$ form a minimizing sequence and we hope to show weak convergence (in some sense) of these points to a pair $(\psi,\bm{A})$ minimizing $\mathscr{E}_j^{\bm{v}}$. For a moment let us focus on the $\psi$-variable: It can be fruitful to think of our quantum particle as being prepared in some initial state and as time evolves we receive snapshots (corresponding to the sequence elements $\psi_1,\psi_2,\psi_3,\ldots$) of the system's intermediate states that steadily approach the limiting state $\psi$, which has the least possible energy. 
\begin{figure}[ht]
\centering
\subbottom[$\mu=0$]{\includegraphics[width=0.3\textwidth]{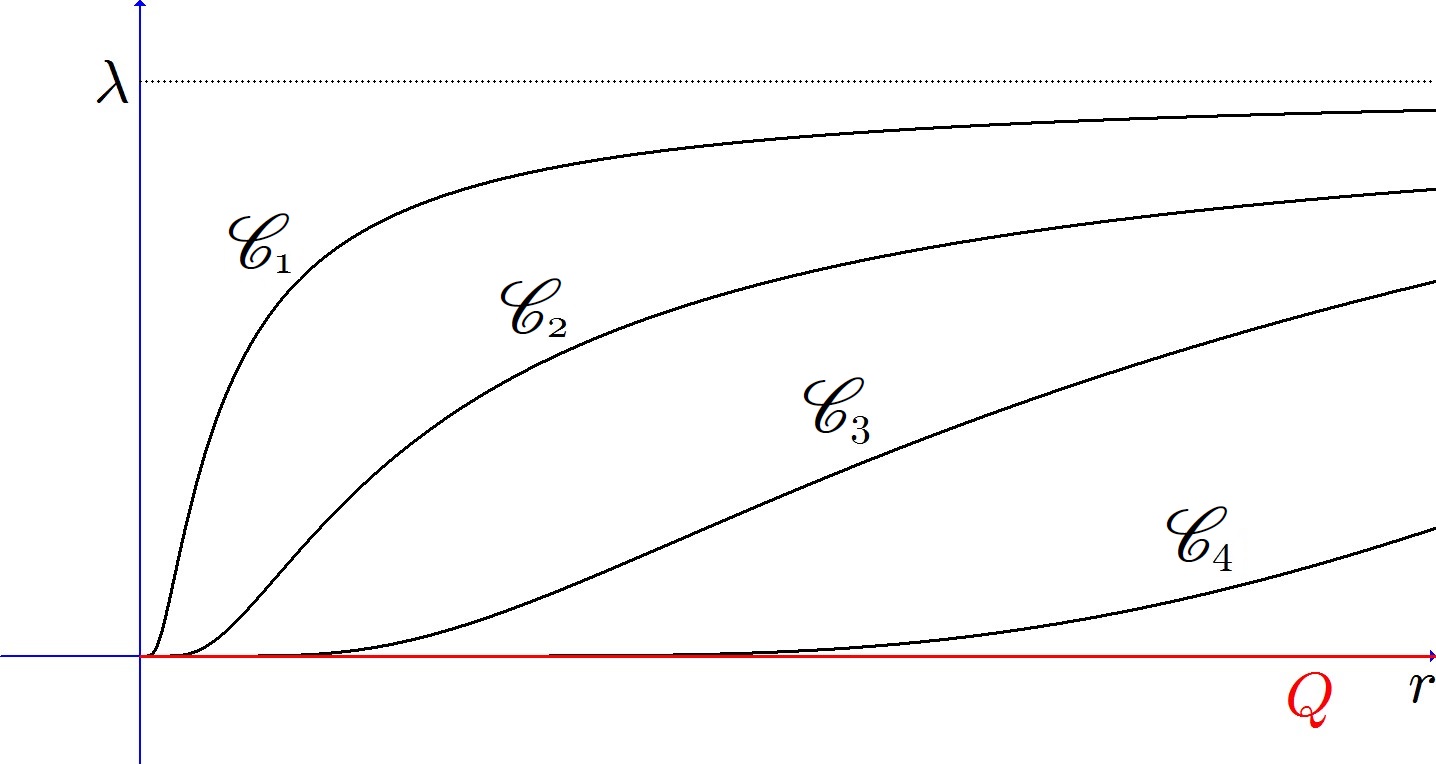}\label{fig:Qa}}
\hfill
\subbottom[$0<\mu<\lambda$]{\includegraphics[width=0.3\textwidth]{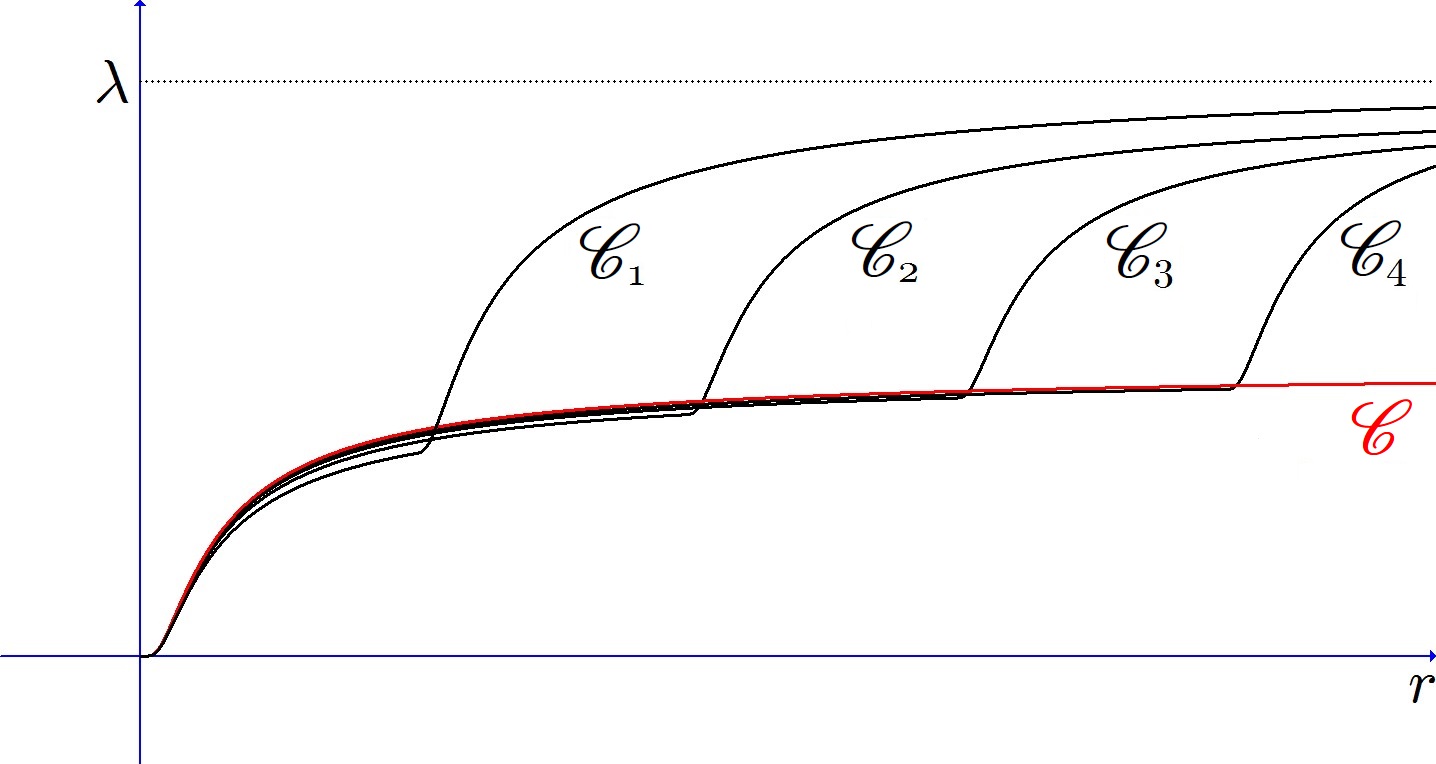}\label{fig:Qb}}
\hfill
\subbottom[$\mu=\lambda$]{\includegraphics[width=0.3\textwidth]{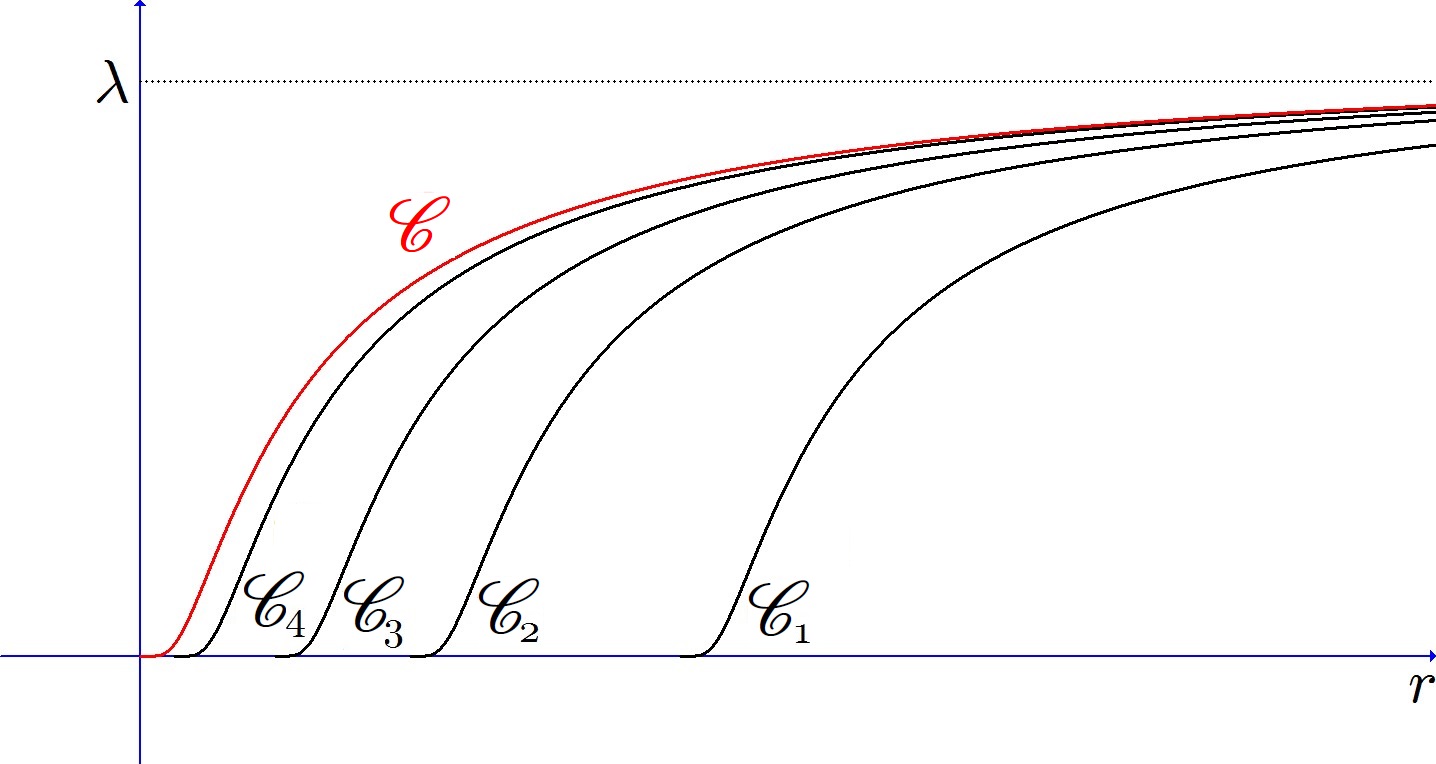}\label{fig:Qc}}
\caption{Examples of possible situations where $\mu=0$, $0<\mu<\lambda$ respectively $\mu=\lambda$. Let us consider the behaviour of $|\psi_n|^2$ in each of these cases as $n$ increases: In the (a)-case the wave function spreads out and diminishes, in the case (b) it splits up into lumps that move further and further away from each other whereas $|\psi_n|^2$ approaches a specific probability distribution in a way that conserves the total probability mass of the particle in the (c)-situation.} \label{fig:Q}
\end{figure}
Of the three scenarios illustrated on Figure \ref{fig:Q} the possibility $\mu=\lambda$ seems to be the most reasonable from a physical point of view and as we will see later the identity $\mu=\lambda$ does indeed hold true. We will basically prove this by ruling out the two other alternatives shown on Figure \ref{fig:Q}. 

We begin by proving that it is impossible for $\mu$ to be equal to $0$. This will be done by first establishing the following lower bound on  $\mathscr{E}_j^{\bm{v}}(\psi_n,\bm{A}_n)$.
\begin{align}
\mathscr{E}_j^{\bm{v}}(\psi_n,\bm{A}_n)\geq -Q^2\frac{\bm{v}^2}{c^2-\bm{v}^2}\int_{\mathbb{R}^3}\int_{\mathbb{R}^3}\frac{|\psi_n(\bm{x})|^2|\psi_n(\bm{y})|^2}{|\bm{x}-\bm{y}|}\,\mathrm{d}\bm{x}\,\mathrm{d}\bm{y}-\frac{m\bm{v}^2}{2}\lambda.\label{lowerboundfieldenergys}
\end{align}
This means that we can control $\mathscr{E}_j^{\bm{v}}(\psi_n,\bm{A}_n)$ by information on the wave functions $\psi_n$ alone -- we remember from Figure \ref{fig:Qa} that the case $\mu=0$ would morally correspond to the eventual disappearance of these wave functions. So in that case we expect the first term on the right hand side of \eqref{lowerboundfieldenergys} to disappear in the large $n$ limit. Our strategy will therefore be to show that the identity $\mu=0$ would violate the inequality in Lemma \ref{negativeI} stating that $I_j^{\lambda}$ is \emph{strictly} less than $-\frac{m\bm{v}^2}{2}$.
\begin{lemma}\label{not0}
Let $j\in\{\mathrm{S},\mathrm{P}\}$, $\bm{v}\in\mathbb{R}^3$ with $0<|\bm{v}|<c$ as well as $\lambda\in \Lambda_j^{\bm{v}}$ be given and consider a minimizing sequence $\bigl((\psi_n,\bm{A}_n)\bigr)_{n\in\mathbb{N}}\subset \mathcal{S}_\lambda$ for $\mathscr{E}_j^{\bm{v}}$. Define for $n\in\mathbb{N}$ the concentration function $\mathscr{C}_n$ by \eqref{Qndef} and let $\mathscr{C}$ be the pointwise limit of \emph{(}a subsequence of\emph{)} $(\mathscr{C}_n)_{n\in\mathbb{N}}$. Then $\mu=\lim_{r\to\infty}\mathscr{C}(r)$ is different from $0$.
\end{lemma}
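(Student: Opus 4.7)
The plan is to reduce to a contradiction with Lemma \ref{negativeI} by showing that the hypothesis $\mu=0$ forces the $\psi_n$ to vanish in the sense of Lions, which in turn forces $\liminf_n \mathscr{E}_j^{\bm{v}}(\psi_n,\bm{A}_n)\geq -\tfrac{m\bm{v}^2}{2}\lambda$. The bridge between vanishing and this energy lower bound is provided by the a priori inequality \eqref{lowerboundfieldenergys}, which controls $\mathscr{E}_j^{\bm{v}}(\psi_n,\bm{A}_n)$ from below by a nonlinear functional of $\psi_n$ alone; so I would establish \eqref{lowerboundfieldenergys} first and then execute the vanishing-plus-contradiction argument.

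To prove \eqref{lowerboundfieldenergys} I would begin from the rewriting \eqref{Eexp2}, drop the nonnegative kinetic term $\tfrac{1}{2m}\|\nabla_{j,\bm{A}_n+\frac{mc}{Q}\bm{v}}\psi_n\|_{L^2}^2$, and minimize what remains over divergence-free $\bm{A}_n$ with $\psi_n$ held fixed. The $\bm{A}$-quadratic contribution $\tfrac{1}{8\pi}\bigl(\|\nabla\otimes\bm{A}\|_{L^2}^2-\bigl\|\bigl(\tfrac{\bm{v}}{c}\cdot\nabla\bigr)\bm{A}\bigr\|_{L^2}^2\bigr)=\tfrac{1}{2}\langle\bm{A},\mathcal{L}\bm{A}\rangle_{L^2}$ corresponds to the Fourier multiplier $\mathcal{L}$ with symbol $\tfrac{1}{4\pi}\bigl(|\bm{p}|^2-(\bm{v}\cdot\bm{p}/c)^2\bigr)$; since $(\bm{v}\cdot\bm{p})^2\leq\bm{v}^2|\bm{p}|^2$ this symbol dominates $\tfrac{c^2-\bm{v}^2}{4\pi c^2}|\bm{p}|^2$, so on divergence-free fields $\mathcal{L}^{-1}\leq \tfrac{4\pi c^2}{c^2-\bm{v}^2}(-\Delta)^{-1}$. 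The linear term may be written $-\tfrac{Q}{c}(\psi_n,\bm{v}\cdot\bm{A}_n\psi_n)_{L^2}=-\tfrac{Q}{c}\langle P(\bm{v}|\psi_n|^2),\bm{A}_n\rangle$ because $\bm{A}_n$ is divergence-free, so completing the square in $\bm{A}_n$ produces a lower bound by $-\tfrac{Q^2}{2c^2}\langle P(\bm{v}|\psi_n|^2),\mathcal{L}^{-1}P(\bm{v}|\psi_n|^2)\rangle$. Combining this with the elementary inequality $\langle Pf,(-\Delta)^{-1}Pf\rangle\leq \langle f,(-\Delta)^{-1}f\rangle$ (valid because $P$ is an orthogonal projection commuting with $\Delta$) and recognizing $(-\Delta)^{-1}$ as convolution with $\tfrac{1}{4\pi|\bm{x}-\bm{y}|}$ yields an estimate of the form of \eqref{lowerboundfieldenergys}.

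With \eqref{lowerboundfieldenergys} in hand, assume toward a contradiction that $\mu=0$. Since $\mathscr{C}$ is nonnegative, nondecreasing and satisfies $\lim_{r\to\infty}\mathscr{C}(r)=0$, it is identically zero, so $\mathscr{C}_n(r)=\sup_{\bm{y}}\int_{\mathcal{B}(\bm{y},r)}|\psi_n|^2\,\mathrm{d}\bm{x}\to 0$ for every fixed $r>0$. By Lemma \ref{minbdd} the sequence $(\psi_n)$ is bounded in $H^1$, so the standard Lions vanishing lemma (cf. \cite{Lions1,Lions2}) gives $\psi_n\to 0$ strongly in $L^p$ for every $p\in(2,6)$. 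The Hardy--Littlewood--Sobolev inequality with exponent one then bounds the double integral on the right-hand side of \eqref{lowerboundfieldenergys} by a constant times $\|\psi_n\|_{L^{12/5}}^4$, which tends to zero. Passing to the $\liminf$ in \eqref{lowerboundfieldenergys} we obtain $I_j^\lambda=\lim_n\mathscr{E}_j^{\bm{v}}(\psi_n,\bm{A}_n)\geq -\tfrac{m\bm{v}^2}{2}\lambda$, contradicting Lemma \ref{negativeI}.

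The most delicate step will be the proof of \eqref{lowerboundfieldenergys} itself: turning the formal completion-of-square argument into a rigorous estimate on $D^1$ requires some care in handling the Coulomb gauge constraint and in justifying the operator inequality $\mathcal{L}^{-1}\leq \tfrac{4\pi c^2}{c^2-\bm{v}^2}(-\Delta)^{-1}$ on the divergence-free subspace. Once \eqref{lowerboundfieldenergys} is established, the vanishing-and-contradiction step is a routine adaptation of Lions' concentration-compactness method.
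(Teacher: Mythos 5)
Your proposal is correct and reaches the same contradiction as the paper, but via two genuinely different routes worth contrasting. For the a~priori bound \eqref{lowerboundfieldenergys}, the paper first proves that the auxiliary functional $\mathscr{G}_n$ (obtained after bounding $\|\nabla\otimes\bm{D}\|_{L^2}^2-\|(\bm{v}/c\cdot\nabla)\bm{D}\|_{L^2}^2$ below by $(1-\bm{v}^2/c^2)\|\nabla\otimes\bm{D}\|_{L^2}^2$) attains its infimum over \emph{all} of $D^1$, derives the Euler--Lagrange Poisson equation \eqref{Poissonw0}, and appeals to Lemma~\ref{Poissonlemma} and Remark~\ref{uniquePoisson} to identify the minimizer as the Coulomb potential of $|\psi_n|^2$; you instead diagonalize the $\bm{A}$-quadratic form via Fourier multipliers, complete the square on the divergence-free subspace, and invoke the symbol inequality $\mathcal{L}\geq\tfrac{c^2-\bm{v}^2}{4\pi c^2}(-\Delta)$ together with $\langle Pf,(-\Delta)^{-1}Pf\rangle\leq\langle f,(-\Delta)^{-1}f\rangle$. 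Your route is shorter and even gives the sharper constant $\tfrac{Q^2\bm{v}^2}{2(c^2-\bm{v}^2)}$ in place of the paper's $\tfrac{Q^2\bm{v}^2}{c^2-\bm{v}^2}$, at the cost of justifying $\mathcal{L}^{-1}f\in D^1$ (easy here since $|\psi_n|^2\in L^{6/5}$ and $\mathcal{L}^{-1}$ has symbol comparable to $|\bm{p}|^{-2}$), while the paper's route is more self-contained, reusing its Appendix machinery and avoiding any Fourier multiplier calculus beyond what is already needed for $P$. For the vanishing step, the paper estimates $\int|\psi_n(\bm{x})|^2/|\bm{x}-\bm{y}|\,\mathrm{d}\bm{x}$ explicitly by splitting the integral over $\mathcal{B}(\bm{y},r)$, $\mathcal{B}(\bm{y},R)\setminus\mathcal{B}(\bm{y},r)$ and the exterior — obtaining the bound $\bigl(\tfrac{8\pi}{3}\bigr)^{2/3}K_S^2C^2r+\tfrac{1}{r}\mathscr{C}_n(R)+\tfrac{\lambda}{R}$ directly in terms of the concentration function — and then sends $n\to\infty$, $R\to\infty$, $r\to 0^+$; you instead cite Lions' vanishing lemma to conclude $\psi_n\to 0$ in $L^{12/5}$ and then apply Hardy--Littlewood--Sobolev with $\lambda=1$, so the double integral is $O(\|\psi_n\|_{L^{12/5}}^4)\to 0$. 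Both are valid; the paper's explicit splitting keeps the argument elementary and fits the rest of the paper's self-contained style, while your invocation of the vanishing lemma plus HLS is more modular and more clearly identifies the role of $\mathscr{C}_n$. One small point to keep straight in a write-up: the hypothesis $\mu=0$ only gives $\mathscr{C}_n(r)\to 0$ along the subsequence on which $\mathscr{C}_n\to\mathscr{C}$ pointwise, but since $\mathscr{E}_j^{\bm{v}}(\psi_n,\bm{A}_n)\to I_j^\lambda$ along the full sequence it certainly does along that subsequence too, so the contradiction with Lemma~\ref{negativeI} goes through — you implicitly use this and it is fine.
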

\begin{proof}
The estimate \eqref{lowerboundfieldenergys} is actually quite rough because in the first step towards obtaining it we simply dispense with the kinetic energy term on the right hand side of \eqref{Eexp2}, resulting in
\begin{align}
\mathscr{E}_j^{\bm{v}}(\psi_n,\bm{A}_n)&\geq \mathscr{G}_n(\bm{A}_n)-\frac{m\bm{v}^2}{2}\lambda,\label{Energyestimatealphaneq0}
\end{align}
where $\mathscr{G}_n:D^1\to\mathbb{R}$ is defined by
\begin{align*}
\mathscr{G}_n(\bm{D})=\frac{1}{8\pi}\Bigl(1-\frac{\bm{v}^2}{c^2}\Bigr)\|\nabla \otimes\bm{D}\|_{L^2}^2-Q\frac{\bm{v}}{c}\cdot\int_{\mathbb{R}^3}\bm{D}(\bm{x})|\psi_n(\bm{x})|^2\,\mathrm{d}\bm{x}.
\end{align*}
This functional is bounded from below since applying the Sobolev and Hölder inequalities as well as optimizing in each of the variables $\bigl\|D^1\bigr\|_{L^6}$, $\bigl\|D^2\bigr\|_{L^6}$ and $\bigl\|D^3\bigr\|_{L^6}$ gives that for any $\bm{D}\in D^1$
\begin{align}
\mathscr{G}_n(\bm{D})&\geq \frac{1}{16\pi}\Bigl(1-\frac{\bm{v}^2}{c^2}\Bigr)\|\nabla \otimes\bm{D}\|_{L^2}^2-4\pi K_S^3Q^2 C\lambda^{\frac{3}{2}}\frac{\bm{v}^2}{c^2-\bm{v}^2}\label{Fbddfrombelow}
\end{align}
so it seems straightforward to meet our intention of obtaining a lower bound on $\mathscr{E}_j^{\bm{v}}(\psi_n,\bm{A}_n)$ only depending on $\psi_n$ -- we can simply estimate the term $\mathscr{G}_n(\bm{A}_n)$ appearing on the right hand side of \eqref{Energyestimatealphaneq0} by $\inf_{\bm{D}\in D^1}\mathscr{G}_n(\bm{D})$. Therefore it will be worthwhile for us to spend some time studying the properties of this infimum.

We first show the existence of a minimizer $\bm{D}_n$ for $\mathscr{G}_n$. This will be done by the direct method in the calculus of variations so consider a minimizing sequence for $\mathscr{G}_n$, i.e. a sequence $\bigl(\bm{D}^k_n\bigr)_{k\in\mathbb{N}}$ of $D^1$-functions such that $\bigl(\mathscr{G}_n\bigl(\bm{D}_n^k\bigr)\bigr)_{k\in\mathbb{N}}$ converges to $\inf_{\bm{D}\in D^1}\mathscr{G}_n(\bm{D})$. Then \eqref{Fbddfrombelow} together with the Sobolev inequality gives that the sequence $\bigl(\bm{D}_n^k\bigr)_{k\in\mathbb{N}}$ is bounded in the reflexive Banach space $L^6$ and in addition that $\bigl(\nabla \otimes\bm{D}_n^k\bigr)_{k\in\mathbb{N}}$ is bounded in the Hilbert space $L^2$. Thereby the Banach-Alaoglu theorem guarantees the existence of a subsequence of $\bigl(\bm{D}_n^k\bigr)_{k\in\mathbb{N}}$ converging weakly in $L^6$ to some $\bm{D}_n$ and by passing to yet another subsequence, $\bigl(\nabla \otimes\bm{D}_n^k\bigr)_{k\in\mathbb{N}}$ converges weakly in $L^2$ to some $\bm{D}_n'$. But then we have $\bm{D}_n^k\xrightarrow[k\to\infty]{} \bm{D}_n$ and $\nabla \otimes\bm{D}_n^k\xrightarrow[k\to\infty]{} \bm{D}_n'$ in the distribution sense, whereby we must have $\nabla \otimes\bm{D}_n=\bm{D}_n'$. In other words, we have (after passing to a subsequence)
\begin{align}
\bm{D}_n^k\xrightharpoonup[k\to\infty]{} \bm{D}_n\textrm{ in }L^6\quad\textrm{and}\quad \nabla\otimes \bm{D}_n^k\xrightharpoonup[k\to\infty]{}\nabla \otimes\bm{D}_n\textrm{ in }L^2.\label{toyconvergences}
\end{align}
That $|\psi_n|^2\in L^{\frac{6}{5}}$ implies together with the first convergence in \eqref{toyconvergences} that $\lim_{k\to\infty}\bm{v}\cdot\int_{\mathbb{R}^3}\bm{D}_n^k(\bm{x})|\psi_n(\bm{x})|^2\,\mathrm{d}\bm{x}$ is equal to $\bm{v}\cdot\int_{\mathbb{R}^3}\bm{D}_n(\bm{x})|\psi_n(\bm{x})|^2\,\mathrm{d}\bm{x}$ and the second convergence in \eqref{toyconvergences} gives together with the weak lower semicontinuity \cite[Theorem 2.11]{LL} of $\|\cdot\|_{L^2}$ that the quantity $\liminf_{k\to\infty}\bigl\|\nabla\otimes \bm{D}_n^k\bigr\|_{L^2}^2$ is at least $\bigl\|\nabla \otimes\bm{D}_n\bigr\|_{L^2}^2$. Thereby
\begin{align*}
\inf_{\bm{D}\in D^1}\mathscr{G}_n(\bm{D})&= \frac{1}{8\pi}\Bigl(1-\frac{\bm{v}^2}{c^2}\Bigr)\liminf_{k\to\infty}\bigl\|\nabla \otimes \bm{D}_n^k\bigr\|_{L^2}^2-Q\frac{\bm{v}}{c}\cdot\int_{\mathbb{R}^3}\!\!\bm{D}_n(\bm{x})|\psi_n(\bm{x})|^2\,\mathrm{d}\bm{x}\\
&\geq \mathscr{G}_n(\bm{D}_n)
\end{align*}
and so we must have $\inf_{\bm{D}\in D^1}\mathscr{G}_n(\bm{D})=\mathscr{G}_n(\bm{D}_n)$. 
Then the functional derivative $\frac{\delta\mathscr{G}_n}{\delta \bm{D}}$ must take the value $0$ in the point $\bm{D}_n$, which implies that $\bm{D}_n$ satisfies the Poisson equation
\begin{align}
-\Delta \bm{D}_n=4\pi Q\frac{c\bm{v}}{c^2-\bm{v}^2}|\psi_n|^2\label{Poissonw0}
\end{align}
in the distribution sense. The function on the right hand side is contained in $L^1\cap L^3$ and has gradient in $L^1\cap L^{\frac{5}{4}}$ so according to Lemma \ref{Poissonlemma} and Remark \ref{uniquePoisson} we must have
\begin{align*}
\bm{D}_n(\bm{x})=Q\frac{c\bm{v}}{c^2-\bm{v}^2}\int_{\mathbb{R}^3} \frac{|\psi_n(\bm{y})|^2}{|\bm{x}-\bm{y}|}\,\mathrm{d}\bm{y}
\end{align*}
for almost every $\bm{x}\in\mathbb{R}^3$. Consequently, we can continue the estimate \eqref{Energyestimatealphaneq0} and get \eqref{lowerboundfieldenergys}.

We now realize that for almost all $\bm{y}\in\mathbb{R}^3$ and all choices of positive numbers $r$ and $R$ satisfying $r<R$ we have
\begin{align*}
\int_{\mathbb{R}^3}\frac{|\psi_n(\bm{x})|^2}{|\bm{x}-\bm{y}|}\,\mathrm{d}\bm{x}
&\leq \|\psi_n\|_{L^6}^2\left\|1_{\mathcal{B}(\bm{0},r)}\frac{1}{|\cdot|}\right\|_{L^{\frac{3}{2}}}+\frac{1}{r}\|1_{\mathcal{B}(\bm{y},R)}\psi_n\|_{L^2}^2+\frac{1}{R}\|\psi_n\|_{L^2}^2\\
&\leq \Bigl(\frac{8\pi}{3}\Bigr)^{\frac{2}{3}}K_S^2C^2r+\frac{1}{r}\mathscr{C}_n(R)+\frac{1}{R}\lambda;
\end{align*}
this is seen by splitting the integral on the left hand side into contributions from $\mathcal{B}(\bm{y},r)$, $\mathcal{B}(\bm{y},R)\setminus\mathcal{B}(\bm{y},r)$ and $\mathbb{R}^3\setminus\mathcal{B}(\bm{y},R)$. Combining this with \eqref{lowerboundfieldenergys} gives
\begin{align*}
\mathscr{E}_j^{\bm{v}}(\psi_n,\bm{A}_n)\geq -Q^2\frac{\bm{v}^2}{c^2-\bm{v}^2}\lambda\left(\Bigl(\frac{8\pi}{3}\Bigr)^{\frac{2}{3}}K_S^2C^2r+\frac{1}{r}\mathscr{C}_n(R)+\frac{1}{R}\lambda\right)-\frac{m\bm{v}^2}{2}\lambda.\label{lowerboundfieldenergy}
\end{align*}
so sending $n$ to infinity results in
\begin{align*}
I_j^\lambda\geq -Q^2\frac{\bm{v}^2}{c^2-\bm{v}^2}\lambda\left(\Bigl(\frac{8\pi}{3}\Bigr)^{\frac{2}{3}}K_S^2C^2r+\frac{1}{r}\mathscr{C}(R)+\frac{1}{R}\lambda\right)-\frac{m\bm{v}^2}{2}\lambda.
\end{align*}
Under the assumption that $\mu=0$ we can therefore let $R\to\infty$ and get
\begin{align*}
I_j^\lambda\geq -Q^2\frac{\bm{v}^2}{c^2-\bm{v}^2}\lambda \Bigl(\frac{8\pi}{3}\Bigr)^{\frac{2}{3}}K_S^2C^2r-\frac{m\bm{v}^2}{2}\lambda,
\end{align*}
which sets us in position to take the limit $r\to 0^+$ and obtain the inequality $I_j^\lambda\geq -\frac{m\bm{v}^2}{2}\lambda$, contradicting Lemma \ref{negativeI}.
\end{proof}
We now turn to proving that $\mu\notin (0,\lambda)$, which will again be done using the method of proof by contradiction. Remember from Figure \ref{fig:Qb} that if $\mu\in (0,\lambda)$ we expect the wave function to split up into lumps that move further and further away from each other as $n$ increases. It seems reasonable that these lumps will eventually be so far apart that the interaction between them is negligible, whereby we can practically consider them as independent systems. Given a term $(\psi_n,\bm{A}_n)$ of the minimizing sequence our strategy will therefore be to construct a pair $(\psi^{\mathrm{i}}_n,\bm{A}^{\mathrm{i}}_n)$ which is `almost' an element of $\mathcal{S}_{\mu}$ and a pair $(\psi^{\mathrm{o}}_n,\bm{A}^{\mathrm{o}}_n)$ `almost' belonging to $\mathcal{S}_{\lambda-\mu}$ such that $\mathscr{E}_j^{\bm{v}}(\psi^{\mathrm{i}}_n,\bm{A}^{\mathrm{i}}_n)+\mathscr{E}_j^{\bm{v}}(\psi^{\mathrm{o}}_n,\bm{A}^{\mathrm{o}}_n)$ is at most $\mathscr{E}_j^{\bm{v}}(\psi_n,\bm{A}_n)$ (up to a small error). A limiting argument will then give a conclusion contradicting \eqref{subadditivity}. The splitting will of course be done by using cut-off functions, so let us first introduce some mappings $\chi^{\mathrm{i}}$ and $\chi^{\mathrm{o}}$ (`$\mathrm{i}$' for `inner' and `$\mathrm{o}$' for `outer') with the following properties: The supports of $\chi^{\mathrm{i}}\in C_0^\infty(\mathbb{R}^3)$ and $\chi^{\mathrm{o}}\in C^\infty(\mathbb{R}^3)$ are disjoint and
\begin{align*}
\chi^{\mathrm{i}}(\bm{x})\begin{cases}=1&\textrm{for }|\bm{x}|\leq 1,\\\in[0,1]&\textrm{for }1<|\bm{x}|<2,\\=0&\textrm{for }|\bm{x}|\geq 2,\end{cases}\quad\textrm{and}\quad \chi^{\mathrm{o}}(\bm{x})\begin{cases}=0&\textrm{for }|\bm{x}|\leq 1,\\\in[0,1]&\textrm{for }1<|\bm{x}|<2,\\=1&\textrm{for }|\bm{x}|\geq 2.\end{cases}
\end{align*}
\begin{figure}[ht]
	\centering
		\includegraphics[width=0.6\textwidth]{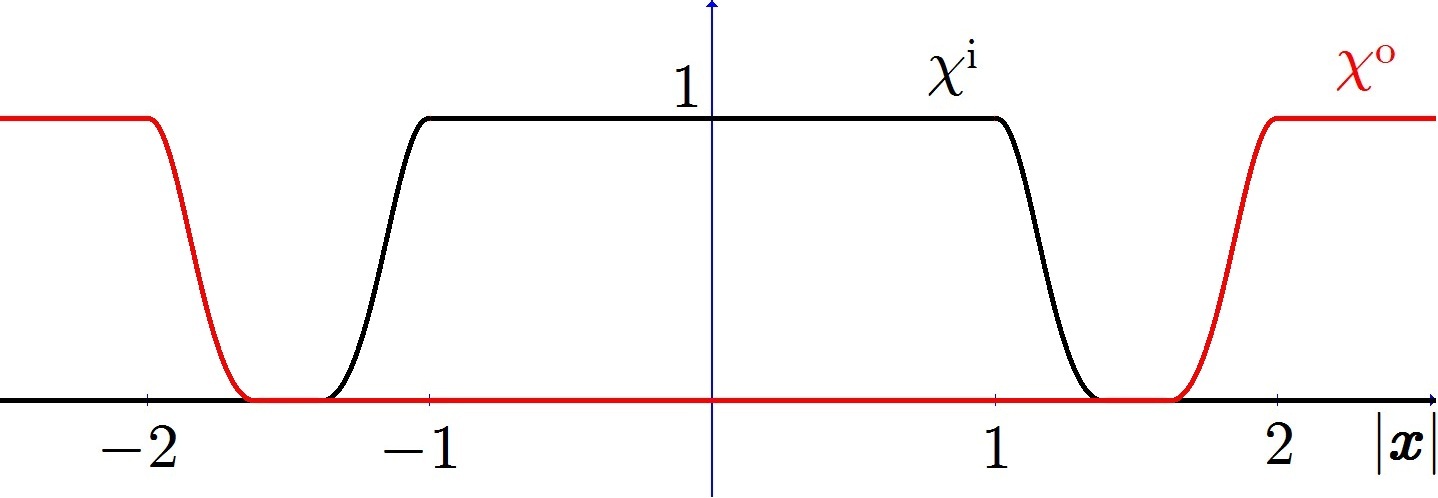}
	\caption{Possible choices for $\chi^{\mathrm{i}}$ and $\textcolor{red}{\chi^{\mathrm{o}}}$.}
	\label{fig:Chis}
\end{figure}
\begin{lemma}\label{munotin0lambda}
Consider $j\in\{\mathrm{S},\mathrm{P}\}$, $\bm{v}\in\mathbb{R}^3$ with $0<|\bm{v}|<c$ and $\lambda\in\Lambda_j^{\bm{v}}$. Let also $\bigl((\psi_n,\bm{A}_n)\bigr)_{n\in\mathbb{N}}\subset\mathcal{S}_{\lambda}$ be a minimizing sequence for $\mathscr{E}_j^{\bm{v}}$, define $\mathscr{C}_n$ by \eqref{Qndef} for $n\in\mathbb{N}$ and consider the pointwise limit $\mathscr{C}$ of \emph{(}a subsequence of\emph{)} $(\mathscr{C}_n)_{n\in\mathbb{N}}$. Then $\mu=\lim_{r\to\infty}\mathscr{C}(r)$ is not contained in $(0,\lambda)$.
\end{lemma}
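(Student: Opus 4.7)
The plan is to argue by contradiction: suppose $\mu \in (0, \lambda)$ and construct, from the minimizing sequence, pairs that essentially live in $\mathcal{S}_\mu$ and $\mathcal{S}_{\lambda - \mu}$ whose energies add up to at most $I_j^{\bm{v},\lambda} + o(1)$, contradicting the strict subadditivity \eqref{subadditivity}. The construction is the standard dichotomy step from Lions' concentration-compactness method, but adapted to handle the coupled pair $(\psi_n, \bm{A}_n)$ and, in particular, the divergence-free constraint on $\bm{A}_n$.

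First I would exploit the concentration function. Fix $\varepsilon > 0$. Since $\mathscr{C}(r) \nearrow \mu$, there is $R_0$ with $\mu - \varepsilon < \mathscr{C}(R_0) \leq \mu$, and then a $\rho > R_0$ with $\mathscr{C}(2\rho) - \mathscr{C}(\rho) < \varepsilon$. Using the pointwise convergence $\mathscr{C}_n(r) \to \mathscr{C}(r)$ and a diagonal argument, I can choose radii $R_n, \rho_n$ with $R_n \to \infty$ and points $\bm{y}_n \in \mathbb{R}^3$ such that, after replacing $(\psi_n, \bm{A}_n)$ by its translate by $\bm{y}_n$ (which is legitimate by translation invariance of $\mathscr{E}_j^{\bm{v}}$), we have
\begin{align*}
\int_{\mathcal{B}(\bm{0}, R_n)} |\psi_n|^2 \, \mathrm{d}\bm{x} \to \mu, \qquad \int_{\mathcal{B}(\bm{0}, 2 \rho_n) \setminus \mathcal{B}(\bm{0}, \rho_n)} |\psi_n|^2 \, \mathrm{d}\bm{x} \to 0, \qquad R_n \leq \rho_n.
\end{align*}
The point is that the annular region between radii $\rho_n$ and $2\rho_n$ carries negligible wave-function mass while being macroscopically thick.

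Next, set $\psi_n^{\mathrm{i}} = \chi^{\mathrm{i}}(\cdot/\rho_n)\, \psi_n$ and $\psi_n^{\mathrm{o}} = \chi^{\mathrm{o}}(\cdot/\rho_n)\, \psi_n$, so that the two pieces have disjoint supports and $L^2$-masses approaching $\mu$ and $\lambda - \mu$ respectively. For the vector potential, write $\bm{A}_n^{\mathrm{i}} = P\bigl(\chi^{\mathrm{i}}(\cdot/\rho_n) \bm{A}_n\bigr)$ and $\bm{A}_n^{\mathrm{o}} = P\bigl(\chi^{\mathrm{o}}(\cdot/\rho_n) \bm{A}_n\bigr)$ so that they are divergence-free. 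The commutator between $P$ and a cut-off is controlled by $\rho_n^{-1}$ times $L^p$ bounds on $\bm{A}_n$, which are uniform in $n$ by Lemma \ref{minbdd}, so the error one commits by replacing $\chi^{\mathrm{i}/\mathrm{o}}(\cdot/\rho_n)\bm{A}_n$ by its Helmholtz projection vanishes in $D^1$. After a harmless $L^2$-renormalization of $\psi_n^{\mathrm{i}}$ and $\psi_n^{\mathrm{o}}$ by factors tending to $1$, these pairs lie in $\mathcal{S}_\mu$ and $\mathcal{S}_{\lambda-\mu}$.

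The main work is to verify the energy splitting
\begin{align*}
\mathscr{E}_j^{\bm{v}}(\psi_n^{\mathrm{i}}, \bm{A}_n^{\mathrm{i}}) + \mathscr{E}_j^{\bm{v}}(\psi_n^{\mathrm{o}}, \bm{A}_n^{\mathrm{o}}) \leq \mathscr{E}_j^{\bm{v}}(\psi_n, \bm{A}_n) + o(1) \quad (n \to \infty).
\end{align*}
For the kinetic term I would use an IMS-type localization identity on $\nabla_{j, \bm{A}_n + \frac{mc}{Q}\bm{v}}$ from the form \eqref{Eexp2}; the localization error is bounded by $\rho_n^{-2} \|\psi_n\|_{L^2}^2$ plus a term involving the $L^2$-mass of $\psi_n$ in the annulus, both of which vanish. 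For the field energy, the key observation is that $\chi^{\mathrm{i}}(\cdot/\rho_n)\bm{A}_n$ and $\chi^{\mathrm{o}}(\cdot/\rho_n)\bm{A}_n$ have disjoint supports, so the cross terms $\bigl(\nabla(\chi^{\mathrm{i}} \bm{A}_n), \nabla(\chi^{\mathrm{o}} \bm{A}_n)\bigr)_{L^2}$ reduce to boundary contributions of order $\rho_n^{-1}$ times the bounded norm $\|\bm{A}_n\|_{D^1}$, hence are $o(1)$; the error introduced by the Helmholtz projection is estimated by a Mikhlin-type bound. Passing to $n \to \infty$ yields $I_j^{\bm{v},\mu} + I_j^{\bm{v},\lambda-\mu} \leq I_j^{\bm{v},\lambda}$, contradicting \eqref{subadditivity}. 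The main obstacle I anticipate is bookkeeping the cross terms involving $\bm{A}_n^{\mathrm{i}}$ and $\bm{A}_n^{\mathrm{o}}$ and the Coulomb-type interaction $\bigl(\psi_n^{\mathrm{i}}, \bm{v}\cdot\bm{A}_n^{\mathrm{o}}\psi_n^{\mathrm{i}}\bigr)_{L^2}$ (and its counterpart), which must be shown to be $o(1)$ using the gap in supports together with the $L^6$-decay of $\bm{A}_n$ away from its sources.
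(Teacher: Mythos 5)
Your proposal follows the paper's overall strategy (Lions-type dichotomy: split into near and far pieces, show approximate superadditivity of $\mathscr{E}_j^{\bm{v}}$, contradict \eqref{subadditivity}), and the $\psi$-splitting step is essentially the paper's. The genuine gap is in the $\bm{A}$-splitting, specifically in the claim that the Helmholtz commutator is $O(\rho_n^{-1})$ and hence negligible.

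The commutator is
\begin{align*}
\bigl[P,\chi(\cdot/\rho)\bigr]\bm{A}_n \;=\; -\nabla\Delta^{-1}\bigl(\nabla\chi(\cdot/\rho)\cdot\bm{A}_n\bigr),
\end{align*}
and since $\nabla\otimes\nabla\Delta^{-1}$ is a bounded Fourier multiplier on $L^2$, the $D^1$-norm of the error is controlled by $\|\nabla\chi(\cdot/\rho)\cdot\bm{A}_n\|_{L^2}$. By H\"older this is at most $\|\nabla\chi(\cdot/\rho)\|_{L^3}\,\bigl\|1_{\rho\leq|\bm{x}|\leq 2\rho}\,\bm{A}_n\bigr\|_{L^6}$. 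But $\|\nabla\chi(\cdot/\rho)\|_{L^3}$ is scale-invariant, not $O(\rho^{-1})$, so the commutator is only bounded by $\|1_{\text{annulus}}\bm{A}_n\|_{L^6}$, which need not be small: there is no tightness for $\bm{A}_n$ (no conserved norm constraint), and the annulus that is light in $\psi$-mass may be heavy in $\bm{A}$-mass. So sending $\rho_n\to\infty$ does not, by itself, kill the Helmholtz error, the cross terms in the field energy, or the mismatch $\bm{A}_n^\ell-\bm{A}_n$ on the support of $\psi_n^\ell$.

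The paper addresses this with two devices your proposal omits. First, a pigeonhole argument over the dyadic shells $\mathcal{A}_n^m=\{2^{m-1}R\leq|\bm{x}-\bm{y}_n|<2^mR\}$ with $2\leq m\leq k_n-2$ produces an index $m_n$ for which $\bigl\|1_{\mathcal{A}_n^{m_n}}\bm{A}_n\bigr\|_{L^6}^6\leq(k_n-3)^{-1}\bigl\|1_{\mathcal{A}_n^2\cup\cdots\cup\mathcal{A}_n^{k_n-2}}\bm{A}_n\bigr\|_{L^6}^6\to 0$; the $\bm{A}$-cut is then made in that particular shell, which makes the correction $\nabla u_n^\ell$ (the explicit form of the Helmholtz defect) small in $D^1$ via \eqref{controlA}. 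Second, the $\bm{A}$-cut at scale $2^{m_n-1}R$ is nested \emph{strictly between} the inner $\psi$-cut (scale $R$) and the outer $\psi$-cut (scale $2^{k_n-1}R$), so that $\chi_n^{\ell,\bm{A}}\chi_n^{\ell,\psi}=\chi_n^{\ell,\psi}$; combined with the gauge factor $\mathrm{e}^{\frac{iQ}{\hbar c}u_n^\ell}$ inserted into $\psi_n^\ell$, this yields the exact identity $\mathrm{e}^{-\frac{iQ}{\hbar c}u_n^\ell}\nabla_{j,\bm{A}_n^\ell+\frac{mc}{Q}\bm{v}}\psi_n^\ell=\nabla_{j,\bm{0}}\chi_n^{\ell,\psi}\psi_n+\chi_n^{\ell,\psi}\nabla_{j,\bm{A}_n+\frac{mc}{Q}\bm{v}}\psi_n$, which makes the kinetic splitting exact up to an $O(R^{-1})$ IMS error. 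Your version, cutting $\psi$ and $\bm{A}$ at the same scale, leaves an uncontrolled difference $\bm{A}_n^\ell-\bm{A}_n$ on $\mathrm{supp}\,\psi_n^\ell$ and no gauge correction to absorb it. To make your argument go through, you must (i) replace ``$\rho_n\to\infty$ kills the commutator'' by a pigeonhole choice of a low-$L^6$ shell for $\bm{A}_n$, and (ii) separate the $\psi$- and $\bm{A}$-cut radii so the $\bm{A}$-cut transition region is disjoint from the support of $\psi_n^\ell$, paired with the corresponding gauge transformation of $\psi_n^\ell$.
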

\begin{proof}
Suppose that $\mu\in (0,\lambda)$. On the basis of $\psi_n$ we want to construct a function $\psi^{\mathrm{i}}_n$ whose $L^2$-norm squared is close to $\mu$, so to which region of space should we localize $\psi_n$? The answer is of course encoded in the concentration function of $\psi_n$, so more precisely: Given
\begin{align}
0<\varepsilon<\min\left\{\frac{2^53^4C^6K_S^6Q^4}{m^3c^4\bm{v}^2},4m\bm{v}^2\mu,4m\bm{v}^2(\lambda-\mu),\frac{2^{\frac{11}{4}}3C^{\frac{3}{2}}K_S^{\frac{3}{2}}|Q\bm{v}|}{c}(\lambda-\mu)^{\frac{3}{4}}\right\}\label{choiceofepsilon}
\end{align}
we choose (by the definition of $\mu$) a number
\begin{align}
R&>\max\left\{2^{\frac{3}{2}}\hbar\sqrt{\frac{\lambda}{\varepsilon m}}\|\nabla\chi^{\ell}\|_{L^\infty},\frac{16C\hbar\lambda^{\frac{1}{2}}}{\varepsilon m}\|\nabla\chi^{\ell}\|_{L^\infty}\,\middle|\, \ell\in\{\mathrm{i},\mathrm{o}\}\right\}\nonumber\\
&\hspace{2cm}\vee \ \frac{2^2 3^4K_S^2 C^2Q^2\lambda^2\bm{v}^2}{\varepsilon^2\sqrt{\pi}c^2}\bigl(\max\bigl\{\bigl\|\nabla \chi^{\mathrm{i}}\bigr\|_{L^{12}},\bigl\|\nabla\chi^{\mathrm{o}}\bigr\|_{L^3}\bigr\}\bigr)^2\label{Rest}
\end{align}
such that $\mu-\frac{\varepsilon^{4/3}c^{4/3}}{2^{11/3}3^{4/3}C^2K_S^2|Q\bm{v}|^{4/3}}<\mathscr{C}(R)$. As a first step we will consider $n$'s so large that
\begin{align}
\mu-\frac{\varepsilon^{\frac{4}{3}}c^{\frac{4}{3}}}{2^{\frac{11}{3}}3^{\frac{4}{3}}C^2K_S^2|Q\bm{v}|^{\frac{4}{3}}}<\mathscr{C}_n(R)<\mu+\frac{\varepsilon^{\frac{4}{3}}c^{\frac{4}{3}}}{2^{\frac{11}{3}}3^{\frac{4}{3}}C^2K_S^2|Q\bm{v}|^{\frac{4}{3}}}.\label{Qnlower}
\end{align}
Here, the upper bound on $\mathscr{C}_n(R)$ is strictly speaking redundant, since we will later obtain a better upper bound by considering even larger values of $n$ -- but already at this point it is advantageous to think of $\mathscr{C}_n(R)$ as being close to $\mu$. We should not just perceive $\mathscr{C}_n(R)$ as being an abstract supremum -- it is in fact the probability mass of the particle in the vicinity of some point in space. Because $\psi_n\in L^2$ the continuous function $\bm{y}\mapsto \int_{\mathcal{B}(\bm{y},R)}|\psi_n(\bm{x})|^2\,\mathrm{d}\bm{x}$ will namely approach zero as $|\bm{y}|\to \infty$, whereby we can choose a point $\bm{y}_n\in\mathbb{R}^3$ such that
\begin{align}
\mathscr{C}_n(R)=\int_{\mathcal{B}(\bm{y}_n,R)}|\psi_n(\bm{x})|^2\,\mathrm{d}\bm{x}.\label{Qn=}
\end{align}
So in the ball $\mathcal{B}(\bm{y}_n,R)$ we have found a $\psi_n$-lump whose probability mass is essentially $\mu$. The other lumps are expected to move away as $n$ increases, so for large $n$ there should be a large area around $\mathcal{B}(\bm{y}_n,R)$ where $\psi_n$ has essentially no probability mass. As a consequence we can construct the function $\psi_n^{\mathrm{o}}$ by cutting away the values of $\psi_n$ on a ball centered at $\bm{y}_n$ with quite a large radius. It turns out that we can in fact choose this radius on the form $2^{k_n}R$, where the sequence $(k_n)_{n\in\mathbb{N}}$ of integers satisfies
\begin{enumerate}
\item[(I)] $k_n\to \infty$ for $n\to\infty$,
\item[(II)] $\mathscr{C}_n(2^{k_n}R)\leq \mu+\frac{\varepsilon^{4/3}c^{4/3}}{2^{11/3}3^{4/3}C^2K_S^2|Q\bm{v}|^{4/3}}$ for all $n\in\mathbb{N}$.
\end{enumerate}
One can namely easily verify that the sequence of numbers
\begin{align*}
k_n=\left\lfloor \log_2\frac{\sup \mathscr{C}_n^{-1}\left(\left(0,\mu+\frac{\varepsilon^{4/3}c^{4/3}}{2^{11/3}3^{4/3}C^2K_S^2|Q\bm{v}|^{4/3}}\right]\right)}{2R}\right\rfloor
\end{align*}
has the desired properties, where $\lfloor \cdot\rfloor$ denotes the floor function and $\log_2$ denotes the binary logarithm\footnote{The floor function is $x\mapsto\max\{m\in\mathbb{Z}\mid m\leq x\}$ and the binary logarithm is $x\mapsto \frac{\log(x)}{\log(2)}$, where $\log$ denotes the natural logarithm.}. Thus, we will construct $\psi_n^{\mathrm{i}}$ and $\psi_n^{\mathrm{o}}$ by multiplication with the cut-off functions given by
\begin{align*}
\chi_n^{\mathrm{i},\psi}(\bm{x})=\chi^{\mathrm{i}}\Bigl(\frac{\bm{x}-\bm{y}_n}{R}\Bigr)\quad\textrm{respectively}\quad \chi_n^{\mathrm{o},\psi}(\bm{x})=\chi^{\mathrm{o}}\Bigl(\frac{\bm{x}-\bm{y}_n}{2^{k_n-1}R}\Bigr)
\end{align*}
for $\bm{x}\in\mathbb{R}^3$. Let us emphasize that we use the superscript $\psi$ because these functions will be used to cut the wave function $\psi$ into the two pieces $\psi_n^{\mathrm{i}}$ and $\psi_n^{\mathrm{o}}$ -- later we will define corresponding cut-off functions $\chi_n^{\mathrm{i},\bm{A}}$ and $\chi_n^{\mathrm{o},\bm{A}}$ to cut $\bm{A}$ into two pieces $\bm{A}_n^{\mathrm{i}}$ and $\bm{A}_n^{\mathrm{o}}$.

Let us now do this splitting of the $\bm{A}_n$-field into $\bm{A}_n^{\mathrm{i}}$- and $\bm{A}_n^{\mathrm{o}}$-fields. We will aim to make the cuts in the big gap between $\mathcal{B}(\bm{y}_n,2R)$ and $\mathbb{R}^3\setminus\mathcal{B}(\bm{y}_n,2^{k_n-1}R)$, where the functions $\psi_n^{\mathrm{i}}$ and $\psi_n^{\mathrm{o}}$ are guaranteed to vanish. So we decompose space into the disjoint union $\mathbb{R}^3=\mathcal{B}(\bm{y}_n,R)\cup\bigl(\bigcup_{m=1}^\infty \mathcal{A}_n^m\bigr)$, where
\begin{align*}
\mathcal{A}_n^m=\bigl\{\bm{x}\in\mathbb{R}^3\,\big|\, 2^{m-1}R\leq |\bm{x}-\bm{y}_n|< 2^{m} R\bigr\}
\end{align*}
for $m\in\mathbb{N}$ (see Figure \ref{fig:Shells}).
\begin{figure}[ht]
	\centering
		\includegraphics[width=0.6\textwidth]{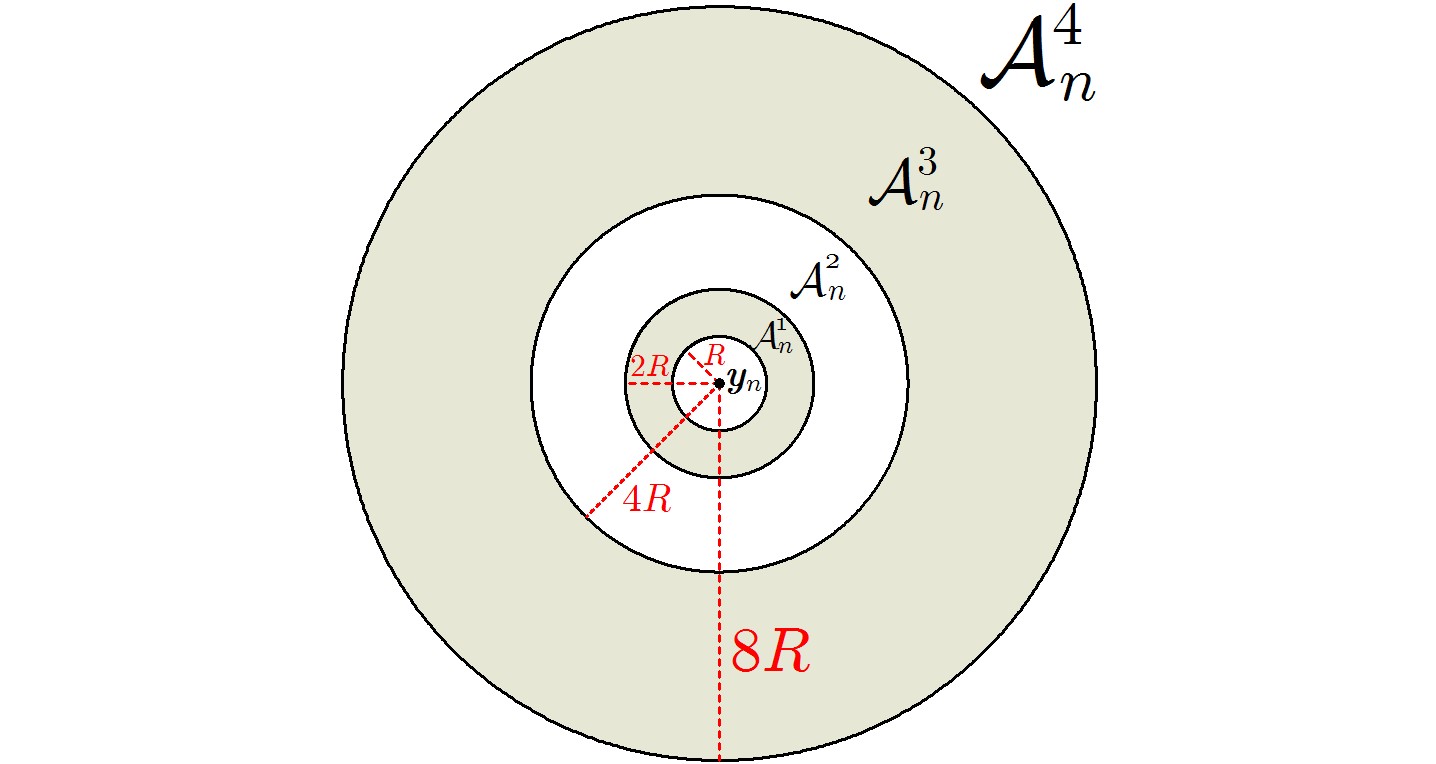}
	\caption{The two-dimensional analogues of the spherical shells $\mathcal{A}_n^0,\mathcal{A}_n^1,\ldots$.}
	\label{fig:Shells}
\end{figure}
By point (I) from above we have $k_n\geq 4$ for $n$ sufficiently large and for such $n$'s there must exist a number $m_n$ in the set $\{2,\ldots,k_n-2\}$ such that $\bigl\|1_{\mathcal{A}_n^{m_n}}\bm{A}_n\bigr\|_{L^6}^6\leq (k_n-3)^{-1}\bigl\|1_{\mathcal{A}_n^{2}\cup\cdots\cup\mathcal{A}_n^{k_n-2}}\bm{A}_n\bigr\|_{L^6}^6$ holds true, whereby we have for $n$ sufficiently large that
\begin{align}
&\bigl\|1_{\mathcal{A}_n^{m_n}}\bm{A}_n\bigr\|_{L^6}\nonumber\\
&<\min\Bigl\{\frac{1}{4\|\nabla\chi^{\ell}\|_{L^3}}\Bigl(\sqrt{C^2+\frac{\varepsilon\pi c^2}{c^2+\bm{v}^2}}-C\Bigr),\frac{1}{2\|\nabla\chi^{\ell}\|_{L^3}}\sqrt{\frac{\varepsilon\pi c^2}{c^2+\bm{v}^2}}\Big|\ell\in\{\mathrm{i},\mathrm{o}\}\Bigr\}.\label{controlA}
\end{align}
In this way we can control $\bm{A}_n$ on $\mathcal{A}_n^{m_n}$, so we will define $\bm{A}_n^{\mathrm{i}}$ and $\bm{A}_n^{\mathrm{o}}$ using the cut-off functions
\begin{align*}
\chi_{n}^{\mathrm{i},\bm{A}}(\bm{x})=\chi^{\mathrm{i}}\Bigl(\frac{\bm{x}-\bm{y}_n}{2^{{m_n-1}}R}\Bigr)\quad\textrm{and}\quad \chi_{n}^{\mathrm{o},\bm{A}}(\bm{x})=\chi^{\mathrm{o}}\Bigl(\frac{\bm{x}-\bm{y}_n}{2^{{m_n-1}}R}\Bigr).
\end{align*}
\begin{figure}[ht]
	\centering
		\includegraphics[width=0.9\textwidth]{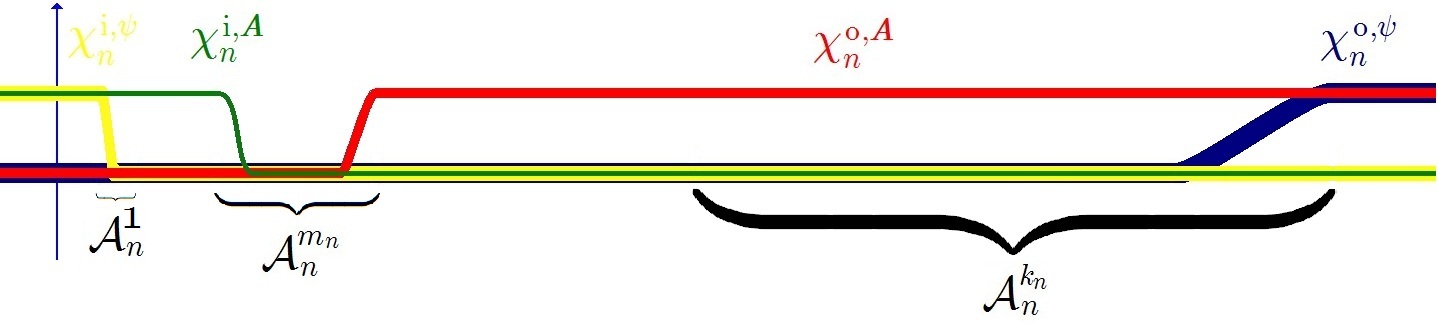}
	\caption{The distance to $\bm{y}_n$ is measured along the first axis.}
	\label{fig:Cut-off}
\end{figure}
More precisely, we will for $\ell\in\{\mathrm{i},\mathrm{o}\}$ introduce the mapping $u_n^\ell:\mathbb{R}^3\to\mathbb{R}$ given by
\begin{align*}
u_n^\ell(\bm{x})=\frac{1}{4\pi}\int_{\mathbb{R}^3}\frac{1}{|\bm{x}-\bm{y}|}\mathrm{div}\bigl(\chi_n^{\ell,\bm{A}}\bm{A}_n\bigr)(\bm{y})\,\mathrm{d}\bm{y}\textrm{ for almost every }\bm{x}\in\mathbb{R}^3
\end{align*}
and define $\psi_n^{\mathrm{i}}$, $\psi_n^{\mathrm{o}}$, $\bm{A}_n^{\mathrm{i}}$ and $\bm{A}_n^{\mathrm{o}}$ by
\begin{align*}
\psi_n^\ell = \mathrm{e}^{\frac{iQ}{\hbar c} u_n^{\ell}}\chi_n^{\ell,\psi}\psi_n\quad\textrm{and}\quad \bm{A}_n^\ell=\chi_n^{\ell,\bm{A}}\bm{A}_n+\nabla u_n^{\ell}.
\end{align*}
We observe that $\mathrm{div}\bigl(\chi_n^{\ell,\bm{A}}\bm{A}_n\bigr)=\nabla\chi_n^{\ell,\bm{A}}\cdot\bm{A}_n$ is contained in $H^1$ and has compact support so from Lemma \ref{Poissonlemma} we obtain that $\psi_n^\ell\in H^1$, $\bm{A}_n^\ell\in D^1$ with $\mathrm{div}\bm{A}_n^\ell=0$ and
\begin{align}
\nabla u_n^\ell(\bm{x})=-\frac{1}{4\pi}\int_{\mathbb{R}^3}\frac{\bm{x}-\bm{y}}{|\bm{x}-\bm{y}|^3}\mathrm{div}\bigl(\chi_n^{\ell,\bm{A}}\bm{A}_n\bigr)(\bm{y})\,\mathrm{d}\bm{y}\textrm{ for almost every }\bm{x}\in\mathbb{R}^3.\label{nablau}
\end{align}
Moreover, $\psi_n^{\mathrm{i}}$ and $\psi_n^{\mathrm{o}}$ satisfy
\begin{align}
\max\left\{\left|\mu-\int_{\mathbb{R}^3}|\psi_n^{\mathrm{i}}(\bm{x})|^2\,\mathrm{d}\bm{x}\right|,\left|\lambda-\mu-\int_{\mathbb{R}^3}|\psi_n^{\mathrm{o}}(\bm{x})|^2\,\mathrm{d}\bm{x}\right|\right\}<\frac{\varepsilon^{\frac{4}{3}}c^{\frac{4}{3}}}{2^{\frac{11}{3}}3^{\frac{4}{3}}C^2K_S^2|Q\bm{v}|^{\frac{4}{3}}}.\label{psiest}
\end{align}
which follows from \eqref{Qnlower}, (II) as well as the estimates
\begin{align*}
\mathscr{C}_n(R)\leq \int_{\mathbb{R}^3}|\psi_n^{\mathrm{i}}(\bm{x})|^2\,\mathrm{d}\bm{x}\leq\int_{\mathcal{B}(\bm{y}_n,2R)}|\psi_n(\bm{x})|^2\,\mathrm{d}\bm{x}\leq \mathscr{C}_n(2R)\leq \mathscr{C}_n(2^{k_n}R)
\end{align*}
and
\begin{align*}
\mathscr{C}_n(R)\leq \lambda-\int_{\mathbb{R}^3}|\psi_n^{\mathrm{o}}(\bm{x})|^2\,\mathrm{d}\bm{x}\leq\int_{\mathcal{B}(\bm{y}_n,2^{k_n}R)}|\psi_n(\bm{x})|^2\,\mathrm{d}\bm{x}\leq \mathscr{C}_n(2^{k_n}R).
\end{align*}
In the motivational remarks made above Lemma \ref{munotin0lambda} we mentioned the desire to construct $\psi_n^{\ell}$ and $\bm{A}_n^{\ell}$ in such a way that they `almost' satisfy $\bigl(\psi_n^{\mathrm{i}},\bm{A}_n^{\mathrm{i}}\bigr)\in \mathcal{S}_{\mu}$ and $\bigl(\psi_n^{\mathrm{o}},\bm{A}_n^{\mathrm{o}}\bigr)\in \mathcal{S}_{\lambda-\mu}$. The precise meaning of this informal statement is that the pairs $\bigl(\psi_n^{\ell},\bm{A}_n^{\ell}\bigr)\in H^1\times D^1$ have the properties $\mathrm{div}\bm{A}_n^{\ell}=0$ and \eqref{psiest}. 

The next step in our argument is to show that
\begin{align}
\mathscr{E}_j^{\bm{v}}(\psi_n,\bm{A}_n)\geq \mathscr{E}_j^{\bm{v}}(\psi_n^{\mathrm{i}},\bm{A}_n^{\mathrm{i}})+\mathscr{E}_j^{\bm{v}}(\psi_n^{\mathrm{o}},\bm{A}_n^{\mathrm{o}})-\varepsilon.\label{Energysubadditive}
\end{align}
We begin by estimating the $\frac{1}{2m}\|\nabla_{j,\bm{A}_n+\frac{mc}{Q}\bm{v}}\psi_n\|_{L^2}^2$-term on the right hand side of \eqref{Eexp2}. For this we observe that $\chi_n^{\ell,\bm{A}}\chi_n^{\ell,\psi}=\chi_n^{\ell,\psi}$, whereby we can rewrite
\begin{align*}
\mathrm{e}^{-\frac{iQ}{\hbar c} u_n^\ell}\nabla_{j,\bm{A}_n^{\ell}+\frac{mc}{Q}\bm{v}}\psi_n^{\ell}=\nabla_{j,\bm{0}}\chi_n^{\ell,\psi}\psi_n+\chi_n^{\ell,\psi}\nabla_{j,\bm{A}_n+\frac{mc}{Q}\bm{v}}\psi_n,
\end{align*}
which allows us to apply \eqref{sigmavector}, \eqref{Rest} and Remark \ref{bdn} to obtain
\begin{align*}
&\|\nabla_{j,\bm{A}_n^{\ell}+\frac{mc}{Q}\bm{v}}\psi_n^{\ell}\|_{L^2}^2\\
&\leq \frac{\hbar^2\lambda}{R^2}\|\nabla\chi^{\ell}\|_{L^\infty}^2+\|\chi_n^{\ell,\psi}\nabla_{j,\bm{A}_n+\frac{mc}{Q}\bm{v}}\psi_n\|_{L^2}^2+\frac{2C\hbar \lambda^{\frac{1}{2}}}{R}\|\nabla\chi^{\ell}\|_{L^\infty}\\
&\leq \|\chi_n^{\ell,\psi}\nabla_{j,\bm{A}_n+\frac{mc}{Q}\bm{v}}\psi_n\|_{L^2}^2+\frac{\varepsilon m}{4}.
\end{align*}
and consequently
\begin{align}
\frac{1}{2m}\|\nabla_{j,\bm{A}_n+\frac{mc}{Q}\bm{v}}\psi_n\|_{L^2}^2&\geq\frac{1}{2m}\sum_{\ell\in\{\mathrm{i},\mathrm{o}\}}\|\nabla_{j,\bm{A}_n^{\ell}+\frac{mc}{Q}\bm{v}}\psi_n^{\ell}\|_{L^2}^2-\frac{\varepsilon}{4}.\label{split1}
\end{align}
To treat the term $-\frac{Q}{c}(\psi_n,\bm{v}\cdot\bm{A}_n\psi_n)_{L^2}$ appearing on the right hand side of \eqref{Eexp2} we establish two auxiliary estimates: The first estimate
\begin{align*}
&\Bigl|\int_{\mathbb{R}^3}\bm{v}\cdot\bm{A}_n\bigl(|\psi_n|^2-|\psi_n^{\mathrm{i}}|^2-|\psi_n^{\mathrm{o}}|^2\bigr)\,\mathrm{d}\bm{x}\Bigr|\\
&\leq |\bm{v}|\|\bm{A}_n\|_{L^6}\left\|\sqrt{1-(\chi_n^{\mathrm{i},\psi})^2-(\chi_n^{\mathrm{o},\psi})^2}\psi_n\right\|_{L^6}^{\frac{1}{2}}\Bigl(\int_{\mathbb{R}^3}\bigl(|\psi_n|^2-|\psi_n^{\mathrm{i}}|^2-|\psi_n^{\mathrm{o}}|^2\bigr)\,\mathrm{d}\bm{x}\Bigr)^{\frac{3}{4}}\\
&\leq \frac{\varepsilon c}{12|Q|}
\end{align*}
follows from \eqref{psiest}, Hölder's and Sobolev's inequalities. By choosing $n$ large enough we previously made sure that $k_n\geq 4$ and $2\leq m_n\leq k_n-2$ whereby \eqref{nablau}, the Hölder inequality and \eqref{Rest} yield the second auxiliary estimate
\begin{align*}
&\Bigl|\int_{\mathbb{R}^3}\bm{v}\cdot\nabla u_n^{\ell}|\psi_n^{\ell}|^2\,\mathrm{d}\bm{x}\Bigr|\\
&\leq \frac{|\bm{v}|}{4\pi}\int_{\mathbb{R}^3}\int_{\mathcal{A}_n^{m_n}}\frac{1}{|\bm{x}-\bm{y}|^2}|\nabla\chi_n^{\ell,\bm{A}}(\bm{y})||\bm{A}_n(\bm{y})|\,\mathrm{d}\bm{y}\,|\psi_n^\ell(\bm{x})|^2\,\mathrm{d}\bm{x}\\
&\leq
\begin{dcases}
\frac{|\bm{v}|}{4\pi}\left\|1_{\mathcal{B}(\bm{0},(2+2^{m_n})R)}\frac{1}{|\cdot|}\right\|_{L^{\frac{8}{3}}}^2\|\nabla\chi_n^{\mathrm{i},\bm{A}}\|_{L^{12}}\|\bm{A}_n\|_{L^6}\|\psi_n^{\mathrm{i}}\|_{L^2}^2&\textrm{for }\ell=\mathrm{i}\\
\frac{|\bm{v}|}{4\pi}\left\|1_{\mathbb{R}^3\setminus\mathcal{B}(\bm{0},(2^{k_n-1}-2^{m_n})R)}\frac{1}{|\cdot|}\right\|_{L^4}^2\|\nabla\chi_n^{\mathrm{o},\bm{A}}\|_{L^3}\|\bm{A}_n\|_{L^6}\|\psi_n^{\mathrm{o}}\|_{L^2}^2&\textrm{for }\ell=\mathrm{o}
\end{dcases}\\
&\leq \frac{3|\bm{v}|K_S C\lambda}{2\pi^{\frac{1}{4}} R^{\frac{1}{2}}}\max\bigl\{\|\nabla\chi^{\mathrm{i}}\|_{L^{12}},\|\nabla\chi^{\mathrm{o}}\|_{L^3}\bigr\}\\
&<\frac{\varepsilon c}{12|Q|}.
\end{align*}
By combining the two previous estimates with the identity $\chi_n^{\ell,\bm{A}}\chi_n^{\ell,\psi}=\chi_n^{\ell,\psi}$ we obtain
\begin{align}
&|(\psi_n,\bm{v}\cdot\bm{A}_n\psi_n)_{L^2}-(\psi_n^{\mathrm{i}},\bm{v}\cdot\bm{A}_n^{\mathrm{i}}\psi_n^{\mathrm{i}})_{L^2}-(\psi_n^{\mathrm{o}},\bm{v}\cdot\bm{A}_n^{\mathrm{o}}\psi_n^{\mathrm{o}})_{L^2}|\nonumber\\
&=\Bigl|\int_{\mathbb{R}^3}\bm{v}\cdot\bm{A}_n\bigl(|\psi_n|^2-|\psi_n^{\mathrm{i}}|^2-|\psi_n^{\mathrm{o}}|^2\bigr)\,\mathrm{d}\bm{x}-\sum_{\ell\in\{\mathrm{i},\mathrm{o}\}}\int_{\mathbb{R}^3}\bm{v}\cdot\nabla u_n^{\ell}|\psi_n^{\ell}|^2\,\mathrm{d}\bm{x}\Bigr|\nonumber\\
&<\frac{\varepsilon c}{4|Q|}.\label{split2}
\end{align}
Finally, we estimate the $\frac{1}{8\pi}\bigl(\|\nabla\otimes\bm{A}_n\|_{L^2}^2-\bigl\|\bigl(\frac{\bm{v}}{c}\cdot\nabla\bigr)\bm{A}_n\bigr\|_{L^2}^2\bigr)$-term on the right hand side of \eqref{Eexp2} by noting that
\begin{align}
&\|\nabla\otimes\bm{A}_n\|_{L^2}^2-\Bigl\|\Bigl(\frac{\bm{v}}{c}\cdot\nabla\Bigr)\bm{A}_n\Bigr\|_{L^2}^2\nonumber\\
&\geq \sum_{\ell\in\{\mathrm{i},\mathrm{o}\}}\Bigl(\|\chi_n^{\ell,\bm{A}}\nabla\otimes\bm{A}_n\|_{L^2}^2-\Bigl\|\chi_n^{\ell,\bm{A}}\Bigl(\frac{\bm{v}}{c}\cdot\nabla\Bigr)\bm{A}_n\Bigr\|_{L^2}^2\Bigr)\nonumber\\
&\geq\sum_{\ell\in\{\mathrm{i},\mathrm{o}\}}\Bigl(\|\nabla\otimes \bm{A}_n^\ell\|_{L^2}^2-\Bigl\|\Bigl(\frac{\bm{v}}{c}\cdot\nabla\Bigr)\bm{A}_n^\ell\Bigr\|_{L^2}^2\nonumber\\
&\hspace{1.5cm}-2\Bigl(1+\frac{\bm{v}^2}{c^2}\Bigr)\|\nabla\otimes\bm{A}_n^\ell\|_{L^2}\bigl(\|\nabla\chi_n^{\ell,\bm{A}}\otimes \bm{A}_n\|_{L^2}+\|\nabla\otimes \nabla u_n^\ell\|_{L^2}\bigr)\nonumber\\
&\hspace{4.25cm}-2\Bigl(1+\frac{\bm{v}^2}{c^2}\Bigr)\|\nabla\chi_n^{\ell,\bm{A}}\otimes \bm{A}_n\|_{L^2}\|\nabla\otimes\nabla u_n^{\ell}\|_{L^2}\Bigr),\label{Atermenergy}
\end{align}
where we at the second step use the identities
\begin{align}
\chi_n^{\ell,\bm{A}}\nabla\otimes\bm{A}_n&=\nabla\otimes\bm{A}_n^\ell-(\nabla\chi_n^{\ell,\bm{A}}\otimes\bm{A}_n+\nabla\otimes\nabla u_n^\ell),\label{firstnablaidentity}\\
\chi_n^{\ell,\bm{A}}\Bigl(\frac{\bm{v}}{c}\cdot\nabla\Bigr)\bm{A}_n&=\Bigl(\frac{\bm{v}}{c}\cdot\nabla\Bigr)\bm{A}_n^{\ell}-\Bigl(\Bigl(\frac{\bm{v}}{c}\cdot\nabla\Bigr)\chi_n^{\ell,\bm{A}}\bm{A}_n+\Bigl(\frac{\bm{v}}{c}\cdot\nabla\Bigr)\nabla u_n^\ell\Bigr)\nonumber
\end{align}
and the nonnegativity of $\bigl(1-\frac{\bm{v}^2}{c^2}\bigr)(\|\nabla\chi_n^{\ell,\bm{A}}\otimes\bm{A}_n\|^2+\|\nabla\otimes\nabla u_n^\ell\|_{L^2}^2)$. As can be seen by approximating $\nabla u_n^\ell$ in $D^1$ by $C_0^\infty$-functions, applying the Plancherel theorem and using the general vector identity $|\bm{F}|^2|\bm{G}|^2=|\bm{F}\cdot\bm{G}|^2+|\bm{F}\times\bm{G}|^2$ we have $\|\nabla \otimes \nabla u_n^\ell\|_{L^2}^2= \|\mathrm{div}\nabla u_n^\ell\|_{L^2}^2+\|\nabla\times \nabla u_n^\ell\|_{L^2}^2=\|\nabla\chi_n^{\ell,\bm{A}}\cdot\bm{A}_n\|_{L^2}^2$. Consequently, $\|\nabla \otimes \nabla u_n^\ell\|_{L^2}$ and $\|\nabla\chi_n^{\ell,\bm{A}}\cdot\bm{A}_n\|_{L^2}$ have the common upper bound $\|\nabla\chi^{\ell}\|_{L^3}\bigl\|1_{\mathcal{A}_n^{m_n}}\bm{A}_n\bigr\|_{L^6}$ that is small by \eqref{controlA}. Moreover, $\|\nabla\otimes \bm{A}_n^{\ell}\|_{L^2}$ is according to \eqref{firstnablaidentity} bounded from above by $C+2\|\nabla\chi^{\ell}\|_{L^3}\bigl\|1_{\mathcal{A}_n^{m_n}}\bm{A}_n\bigr\|_{L^6}$ and so we can use \eqref{controlA} to continue \eqref{Atermenergy} and get
\begin{align}
&\frac{1}{8\pi}\Bigl(\|\nabla\otimes\bm{A}_n\|_{L^2}^2-\Bigl\|\Bigl(\frac{\bm{v}}{c}\cdot\nabla\Bigr)\bm{A}_n\Bigr\|_{L^2}^2\Bigr)\nonumber\\
&\geq \sum_{\ell\in\{\mathrm{i},\mathrm{o}\}}\frac{1}{8\pi}\Bigl(\|\nabla\otimes\bm{A}_n^\ell\|_{L^2}^2-\Bigl\|\Bigl(\frac{\bm{v}}{c}\cdot\nabla\Bigr)\bm{A}_n^\ell\Bigr\|_{L^2}^2\Bigr)-\frac{\varepsilon}{4}.\label{split3}
\end{align}
Now, \eqref{Energysubadditive} is an immediate consequence of \eqref{split1}, \eqref{split2}, \eqref{split3} and the inequality
\begin{align}
\max\left\{\left|\mu-\int_{\mathbb{R}^3}|\psi_n^{\mathrm{i}}(\bm{x})|^2\,\mathrm{d}\bm{x}\right|,\left|\lambda-\mu-\int_{\mathbb{R}^3}|\psi_n^{\mathrm{o}}(\bm{x})|^2\,\mathrm{d}\bm{x}\right|\right\}<\frac{\varepsilon}{4m\bm{v}^2}\label{psiest2}
\end{align}
that follows from \eqref{choiceofepsilon} and \eqref{psiest}.

Finally, Remark \ref{Idecreasing} and \eqref{psiest2} give
\begin{align*}
\mathscr{E}_j^{\bm{v}}(\psi_n,\bm{A}_n)\geq I_j^{\|\psi_n^{\mathrm{i}}\|_{L^2}^2}+I_j^{\|\psi_n^{\mathrm{o}}\|_{L^2}^2}-\varepsilon\geq I_j^{\mu+\frac{\varepsilon}{4m\bm{v}^2}}+I_j^{\lambda-\mu+\frac{\varepsilon}{4m\bm{v}^2}}-\varepsilon,
\end{align*}
so letting $n$ diverge to infinity produces the estimate
\begin{align*}
I_j^\lambda\geq I_j^{\mu+\frac{\varepsilon}{4m\bm{v}^2}}+I_j^{\lambda-\mu+\frac{\varepsilon}{4m\bm{v}^2}}-\varepsilon.
\end{align*}
By Lemma \ref{Icontinuous} we can therefore take the limit $\varepsilon\to 0^+$ and obtain the inequality $I_j^\lambda\geq I_j^{\mu}+I_j^{\lambda-\mu}$ contradicting \eqref{subadditivity}.
\end{proof}
Combining the Lemmas \ref{not0} and \ref{munotin0lambda} allows us to reach the conclusion that $\lim_{r\to\infty}\mathscr{C}(r)$ is equal to $\lambda$. This is exactly what we need to break the translation invariance of our problem.
\begin{proposition}\label{tightnesssaetning}
Given $j\in\{\mathrm{S},\mathrm{P}\}$, $\bm{v}\in\mathbb{R}^3$ with $0<|\bm{v}|<c$ and $\lambda\in\Lambda_j^{\bm{v}}$ consider a minimizing sequence $\bigl((\psi_n,\bm{A}_n)\bigr)_{n\in\mathbb{N}}\subset \mathcal{S}_{\lambda}$ for $\mathscr{E}_j^{\bm{v}}$. Then there exists a sequence $(\bm{y}_n)_{n\in\mathbb{N}}$ of points in $\mathbb{R}^3$ with the following property: For every $\varepsilon>0$ there exists an $R>0$ such that for all $n\in\mathbb{N}$
\begin{align*}
\bigl\|1_{\mathcal{B}(\bm{y}_n,R)}\psi_n\bigr\|_{L^2}^2\geq \lambda-\varepsilon.
\end{align*}
This property is sometimes expressed by saying that the maps $\bm{x}\mapsto |\psi_n(\bm{x}+\bm{y}_n)|^2$ are \emph{tight}.
\end{proposition}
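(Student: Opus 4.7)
The plan is to exploit the conclusion $\mu = \lim_{r\to\infty}\mathscr{C}(r) = \lambda$ that follows immediately from combining Lemmas \ref{not0} and \ref{munotin0lambda}. First I choose once and for all a fixed radius $R_0 > 0$ with $\mathscr{C}(R_0) > \lambda/2$, which is possible since $\mathscr{C}$ is nondecreasing with limit $\lambda$ at infinity. For each $n\in\mathbb{N}$ I then select the center $\bm{y}_n\in\mathbb{R}^3$ as a point where the continuous, decaying-at-infinity map $\bm{y}\mapsto\int_{\mathcal{B}(\bm{y},R_0)}|\psi_n(\bm{x})|^2\,\mathrm{d}\bm{x}$ attains its supremum $\mathscr{C}_n(R_0)$, exactly as in the construction of $\bm{y}_n$ in the proof of Lemma \ref{munotin0lambda}. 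This defines the desired sequence $(\bm{y}_n)_{n\in\mathbb{N}}$ once and for all, independently of $\varepsilon$.

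Next I prove the tightness around these centers. Given $\varepsilon\in(0,\lambda/4)$, I use $\mathscr{C}(r)\to\lambda$ to choose $R_1 > R_0$ with $\mathscr{C}(R_1) > \lambda - \varepsilon/2$. Pointwise convergence of $(\mathscr{C}_n)_{n\in\mathbb{N}}$ to $\mathscr{C}$ at the two points $R_0$ and $R_1$ then yields $N\in\mathbb{N}$ such that $\mathscr{C}_n(R_0) > \lambda/3$ and $\mathscr{C}_n(R_1) > \lambda - \varepsilon$ for all $n\geq N$. For each such $n$ let $\bm{z}_n$ be a point attaining $\mathscr{C}_n(R_1)$. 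The heart of the argument is a disjoint-ball pigeonhole: if one had $|\bm{y}_n - \bm{z}_n| > R_0 + R_1$, then $\mathcal{B}(\bm{y}_n,R_0)$ and $\mathcal{B}(\bm{z}_n,R_1)$ would be disjoint, so
\begin{align*}
\lambda = \|\psi_n\|_{L^2}^2 \geq \int_{\mathcal{B}(\bm{y}_n,R_0)}|\psi_n|^2\,\mathrm{d}\bm{x} + \int_{\mathcal{B}(\bm{z}_n,R_1)}|\psi_n|^2\,\mathrm{d}\bm{x} > \tfrac{\lambda}{3} + \lambda - \varepsilon,
\end{align*}
contradicting $\varepsilon < \lambda/4 < \lambda/3$. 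Hence $|\bm{y}_n-\bm{z}_n|\leq R_0 + R_1$ and the inclusion $\mathcal{B}(\bm{z}_n,R_1)\subset \mathcal{B}(\bm{y}_n,R_0+2R_1)$ gives $\bigl\|1_{\mathcal{B}(\bm{y}_n,R_0+2R_1)}\psi_n\bigr\|_{L^2}^2 \geq \mathscr{C}_n(R_1) > \lambda-\varepsilon$ for every $n\geq N$.

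To absorb the finitely many indices $n < N$, I observe that $|\psi_n|^2\in L^1(\mathbb{R}^3)$ so dominated convergence yields an individual radius $R_n$ with $\bigl\|1_{\mathcal{B}(\bm{y}_n,R_n)}\psi_n\bigr\|_{L^2}^2 \geq \lambda - \varepsilon$, and then $R := \max\{R_0+2R_1,R_1,\dots,R_{N-1}\}$ does the job for all $n$ simultaneously; the case $\varepsilon\geq \lambda/4$ is handled by reusing the $R$ obtained at $\varepsilon = \lambda/4$. The main obstacle is not any analytic estimate but the geometric step of ruling out that the $\bm{y}_n$ (chosen at the fixed small scale $R_0$) and the $\bm{z}_n$ (capturing nearly all the mass at the larger scale $R_1$) drift arbitrarily far apart; once $\mu = \lambda$ is in hand, this reduces to the clean mass-conservation pigeonhole above.
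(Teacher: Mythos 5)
Your proof is correct and takes essentially the same approach as the paper: both extract $\mu=\lambda$ from Lemmas \ref{not0} and \ref{munotin0lambda} and then run a mass-conservation pigeonhole to show that the centers chosen at a small fixed scale cannot drift far from centers capturing almost all of the mass at a larger $\varepsilon$-dependent scale, with the finitely many initial indices absorbed separately. The paper merely organizes the bookkeeping slightly differently (it first constructs an $n$-uniform radius $r^\nu$ and then sets $\bm{y}_n=\bm{y}_n^{\lambda/2}$, so both balls carry mass $>\lambda/2$ and the intersection is immediate), but the disjoint-balls contradiction and the final inclusion $\mathcal{B}(\bm{z}_n,R_1)\subset\mathcal{B}(\bm{y}_n,R_0+2R_1)$ are the same argument.
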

\begin{proof}
Given $\nu>0$ we will first argue that it is possible to find an $r^\nu>0$ such that
\begin{align}
\mathscr{C}_n(r^\nu)>\lambda-\nu\quad\textrm{for all }n\in\mathbb{N}.\label{greaterthannu}
\end{align}
By the identity $\lim_{r\to\infty}\mathscr{C}(r)=\lambda$ we can namely consider a $\rho>0$ such that $\mathscr{C}(\rho)>\lambda-\nu$ and then we can choose an $N\in\mathbb{N}$ such that $\mathscr{C}_n(\rho)>\lambda-\nu$ for $n>N$, simply because $\lim_{n\to\infty}\mathscr{C}_n(\rho)=\mathscr{C}(\rho)$. Finally, we can use the fact that $\lim_{r\to\infty}\mathscr{C}_n(r)=\lambda$ for each of the finitely many $n$'s in the set $\{1,\ldots,N\}$ to find a $\rho_n>0$ satisfying $\mathscr{C}_n(\rho_n)>\lambda-\nu$. Then because $\mathscr{C}_n$ is a nondecreasing function the inequality \eqref{greaterthannu} holds true with $r^\nu=\max\{\rho,\rho_1,\ldots,\rho_{N}\}$.

Now, the definition of $\mathscr{C}_n(r^\nu)$ guarantees the existence of a sequence $(\bm{y}^\nu_n)_{n\in\mathbb{N}}$ of points in $\mathbb{R}^3$ satisfying
\begin{align*}
\bigl\|1_{\mathcal{B}(\bm{y}^\nu_n,r^\nu)}\psi_n\bigr\|_{L^2}^2>\lambda-\nu\quad\textrm{for all }n\in\mathbb{N}.
\end{align*}
Our aim will be to prove that by setting $\bm{y}_n=\bm{y}_n^{\frac{\lambda}{2}}$ for $n\in\mathbb{N}$ we obtain a sequence $(\bm{y}_n)_{n\in\mathbb{N}}$ with properties as stated in the proposition. To achieve this goal consider an arbitrary $\varepsilon$ in the interval $\bigl(0,\frac{\lambda}{2}\bigr)$. Then for every $n\in\mathbb{N}$ both of the integrals $\int_{\mathcal{B}(\bm{y}_n,r^{\lambda/2})}|\psi_n(\bm{x})|^2\,\mathrm{d}\bm{x}$ and $\int_{\mathcal{B}(\bm{y}_n^\varepsilon,r^{\varepsilon})}|\psi_n(\bm{x})|^2\,\mathrm{d}\bm{x}$ must be strictly larger than $\frac{\lambda}{2}$, which together with the fact that $\int_{\mathbb{R}^3}|\psi_n(\bm{x})|^2\,\mathrm{d}\bm{x}=\lambda$ gives that the balls $\mathcal{B}(\bm{y}_n,r^{\frac{\lambda}{2}})$ and $\mathcal{B}(\bm{y}_n^\varepsilon,r^{\varepsilon})$ have a nonempty intersection.
\begin{figure}
	\centering
		\includegraphics[width=0.6\textwidth]{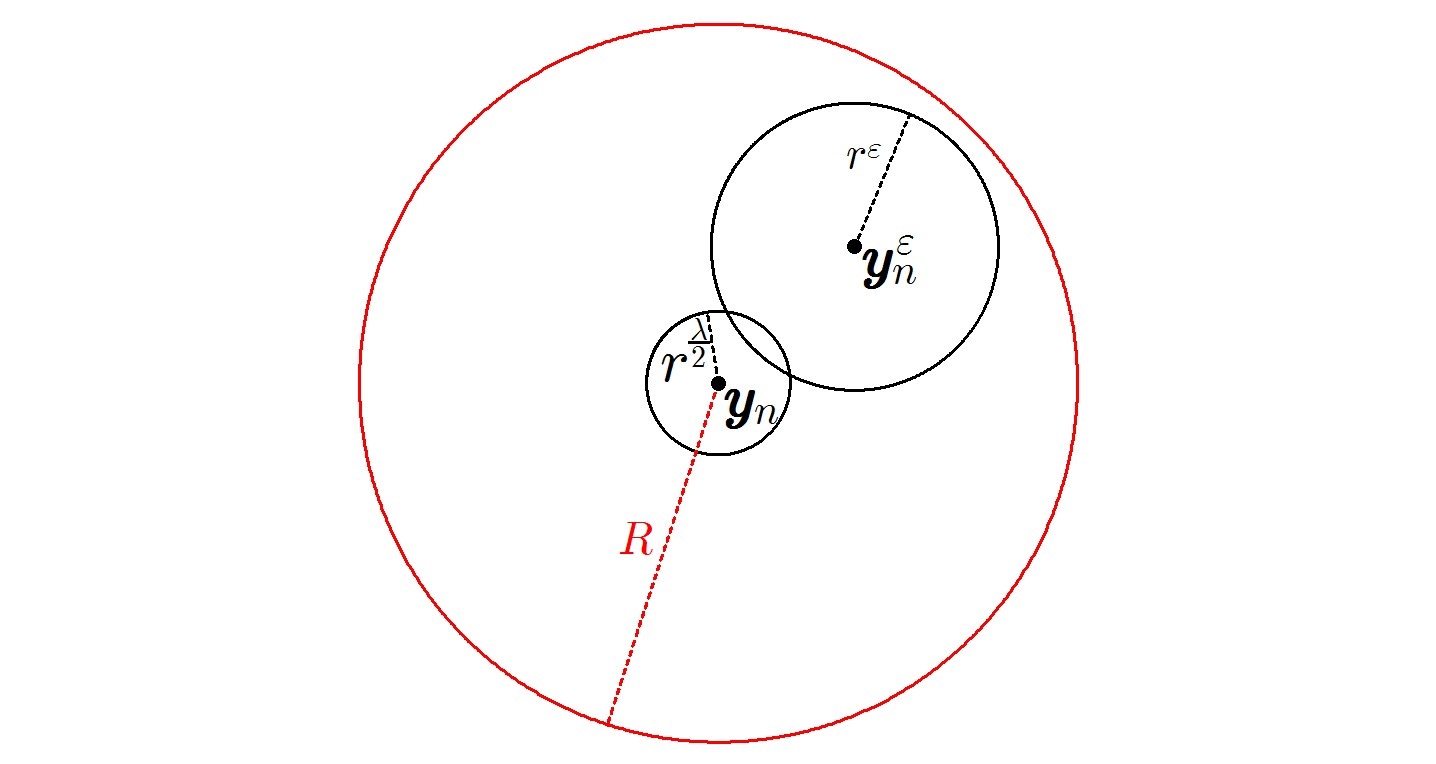}
	\caption{By setting $\textcolor{red}{R=r^{\frac{\lambda}{2}}+2r^{\varepsilon}}$ we get that $\mathcal{B}(\bm{y}_n^{\varepsilon},r^\varepsilon)\subset \mathcal{B}(\bm{y}_n,\textcolor{red}{R})$ because the balls $\mathcal{B}(\bm{y}_n,r^{\frac{\lambda}{2}})$ and $\mathcal{B}(\bm{y}_n^\varepsilon,r^{\varepsilon})$ are guaranteed not to be disjoint.}
	\label{fig:Ballstight}
\end{figure}
This enables us to define $R=r^{\frac{\lambda}{2}}+2r^{\varepsilon}$ and thereby obtain
\begin{align*}
\bigl\|1_{\mathcal{B}(\bm{y}_n,R)}\psi_n\bigr\|_{L^2}^2\geq \bigl\|1_{\mathcal{B}(\bm{y}_n^{\varepsilon},r^{\varepsilon})}\psi_n\bigr\|_{L^2}^2>\lambda-\varepsilon,
\end{align*}
for any $n\in\mathbb{N}$, which is the desired result.
\end{proof}

\section{The Lower Semicontinuity Argument}
We began by considering an arbitrary minimizing sequence $\bigl((\psi_n,\bm{A}_n)\bigr)_{n\in\mathbb{N}}$ for $\mathscr{E}_j^{\bm{v}}$. As we will see below our efforts in the previous section enable us to apply the direct method in the calculus of variations to the sequence of translated pairs
\begin{align*}
(\psi_n',\bm{A}_n')=(\psi_n\circ \tau_{\bm{y}_n},\bm{A}_n\circ \tau_{\bm{y}_n}),
\end{align*}
where $(\bm{y}_n)_{n\in\mathbb{N}}$ denotes the sequence whose existence is guaranteed in Proposition \ref{tightnesssaetning}. Due to $\mathscr{E}_j^{\bm{v}}$'s translation invariance $\bigl((\psi_n',\bm{A}_n')\bigr)_{n\in\mathbb{N}}$ will namely be a minimizing sequence for $\mathscr{E}_j^{\bm{v}}$ and by Proposition \ref{tightnesssaetning} we have
\begin{align}
\forall \varepsilon>0\,\exists R>0\forall n\in\mathbb{N}:\bigl\|1_{\mathcal{B}(\bm{0},R)}\psi_n'\bigr\|_{L^2}^2=\bigl\|1_{\mathcal{B}(\bm{y}_n,R)}\psi_n\bigr\|_{L^2}^2\geq \lambda-\varepsilon.\label{tightness}
\end{align}
This enables us to show the existence of a minimizer for $\mathscr{E}_j^{\bm{v}}$ on $\mathcal{S}_{\lambda}$.
\begin{saetning}\label{minimization}
For every choice of $j\in\{\mathrm{S},\mathrm{P}\}$, $\bm{v}\in\mathbb{R}^3$ with $0<|\bm{v}|<c$ and $\lambda\in \Lambda_j^{\bm{v}}$ there exists a pair $(\psi,\bm{A})\in \mathcal{S}_{\lambda}$ such that $\mathscr{E}_j^{\bm{v}}(\psi,\bm{A})=I_j^\lambda$.
\end{saetning}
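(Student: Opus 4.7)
The plan is to apply the direct method in the calculus of variations to the translated minimizing sequence $(\psi_n',\bm{A}_n')=(\psi_n\circ\tau_{\bm{y}_n},\bm{A}_n\circ\tau_{\bm{y}_n})$, where $((\psi_n,\bm{A}_n))_{n\in\mathbb{N}}$ is any minimizing sequence and $(\bm{y}_n)$ is the translation sequence furnished by Proposition \ref{tightnesssaetning}. By translation invariance of $\mathscr{E}_j^{\bm{v}}$ the sequence $((\psi_n',\bm{A}_n'))_{n\in\mathbb{N}}$ is still minimizing and satisfies the tightness \eqref{tightness}, while Lemma \ref{minbdd} gives that $(\psi_n')$ is bounded in $H^1$ and $(\bm{A}_n')$ is bounded in $D^1$. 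The Banach-Alaoglu theorem therefore produces a subsequence along which $\psi_n'\rightharpoonup\psi$ in $H^1$ and $\bm{A}_n'\rightharpoonup\bm{A}$ in $D^1$.

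I would next verify that $(\psi,\bm{A})\in\mathcal{S}_{\lambda}$. The condition $\mathrm{div}\bm{A}=0$ is preserved under weak convergence in $D^1$ since $\mathrm{div}$ is a bounded linear map into $H^{-1}$. For the normalization, the Rellich-Kondrachov theorem yields $\psi_n'\to\psi$ strongly in $L^p_{\mathrm{loc}}$ for every $p\in[2,6)$, and combining this with the tightness \eqref{tightness} upgrades to global strong convergence in $L^2$, so $\|\psi\|_{L^2}^2=\lambda$. Interpolation $\|\psi_n'-\psi\|_{L^p}\leq\|\psi_n'-\psi\|_{L^2}^\theta\|\psi_n'-\psi\|_{L^6}^{1-\theta}$ together with the uniform $L^6$-bound from the Sobolev inequality extends this to strong convergence in every $L^p$ with $p\in[2,6)$.

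The crux is the lower semicontinuity estimate $\mathscr{E}_j^{\bm{v}}(\psi,\bm{A})\leq\liminf_n\mathscr{E}_j^{\bm{v}}(\psi_n',\bm{A}_n')=I_j^\lambda$. The term $(\psi_n',i\hbar\bm{v}\cdot\nabla\psi_n')_{L^2}$ passes to the limit by pairing the strongly $L^2$-convergent sequence $\psi_n'$ with the weakly $L^2$-convergent sequence $\nabla\psi_n'$. The Maxwell piece equals $\frac{1}{8\pi}\int_{\mathbb{R}^3}(|\bm{k}|^2-(\bm{v}\cdot\bm{k})^2/c^2)|\hat{\bm{A}}(\bm{k})|^2\,\mathrm{d}\bm{k}$ by the Plancherel theorem, and the pointwise bound $|\bm{k}|^2-(\bm{v}\cdot\bm{k})^2/c^2\geq(1-\bm{v}^2/c^2)|\bm{k}|^2$ shows that this is a coercive quadratic form on $D^1$ defining an equivalent inner product; it is therefore weakly lower semicontinuous. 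For the kinetic term I would use $D^1\hookrightarrow L^6$ to get $\bm{A}_n'\rightharpoonup\bm{A}$ weakly in $L^6$, and combine this with $\psi_n'\to\psi$ strongly in $L^3$: for any $\phi\in L^2$, H\"older's inequality gives $\psi_n'\bar{\phi}\to\psi\bar{\phi}$ strongly in $L^{6/5}=(L^6)^*$, hence $\bm{A}_n'\psi_n'\rightharpoonup\bm{A}\psi$ weakly in $L^2$. Consequently $\nabla_{\mathrm{S},\bm{A}_n'}\psi_n'\rightharpoonup\nabla_{\mathrm{S},\bm{A}}\psi$ weakly in $L^2$, and weak lower semicontinuity of the $L^2$-norm squared settles the Schr\"odinger case. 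For the Pauli case I would invoke the Lichnerowicz formula \eqref{Pauli} to reduce to the Schr\"odinger kinetic energy plus the correction $-\frac{\hbar Q}{c}\int\langle\psi,\bm{\sigma}\cdot(\nabla\times\bm{A})\psi\rangle\,\mathrm{d}\bm{x}$; the strong $L^4$-convergence of $\psi_n'$ makes the pointwise bilinear expression $\langle\psi_n',\bm{\sigma}\psi_n'\rangle$ converge strongly in $L^2$, and pairing with the weak $L^2$-convergence of $\nabla\times\bm{A}_n'$ yields full continuity of this correction. Combining these bounds gives $\mathscr{E}_j^{\bm{v}}(\psi,\bm{A})\leq I_j^\lambda$, and the reverse inequality is immediate since $(\psi,\bm{A})\in\mathcal{S}_\lambda$.

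The main obstacle is the fully nonlinear kinetic term $\|\nabla_{j,\bm{A}}\psi\|_{L^2}^2$: it couples the two variables through the product $\bm{A}\psi$, where each factor converges only weakly, and in different $L^p$-spaces. Its handling rests on the weak-strong pairing $\bm{A}_n'\rightharpoonup\bm{A}$ in $L^6$ versus $\psi_n'\to\psi$ strongly in $L^3$, which in turn crucially depends on the global (rather than merely local) strong $L^2$-convergence of $\psi_n'$ obtained from the concentration-compactness work of the previous section; without that tightness, mass could escape to infinity and the product $\bm{A}_n'\psi_n'$ would only converge weakly in a weaker sense insufficient for the argument.
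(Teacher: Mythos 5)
Your proposal is correct and follows the same overall architecture as the paper: translate the minimizing sequence using Proposition \ref{tightnesssaetning}, extract weak limits via Banach--Alaoglu, use tightness plus Rellich--Kondrashov to show $\|\psi\|_{L^2}^2=\lambda$ (hence strong $L^2$-convergence), and then prove lower semicontinuity term by term. The differences are purely in the technical machinery used for three of those semicontinuity steps. For the product $\bm{A}'_n\psi'_n$, the paper passes to a further subsequence converging pointwise a.e.\ and then appeals to Egorov's theorem; you instead interpolate to get $\psi'_n\to\psi$ strongly in $L^3$ and pair this against $\bm{A}'_n\rightharpoonup\bm{A}$ weakly in $L^6$, which is a slicker weak--strong pairing and avoids the extra subsequence extraction. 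For the Maxwell term, the paper applies Mazur's theorem to the convex, $D^1$-continuous functional $\mathscr{H}$, while you observe via Plancherel that the quadratic form is an equivalent inner product on $D^1$ (using $|\bm{v}|<c$ for coercivity) and hence its norm is automatically weakly lsc; both are valid, yours is arguably more transparent. For the Pauli kinetic term, the paper notes that $\nabla_{\mathrm{P},\bm{A}}=\bm{\sigma}\cdot\nabla_{\mathrm{S},\bm{A}}$, so $\nabla_{j,\bm{A}'_n}\psi'_n\rightharpoonup\nabla_{j,\bm{A}}\psi$ in $L^2$ for both $j$ simultaneously and one single invocation of weak lsc of $\|\cdot\|_{L^2}^2$ suffices; you split off the $\bm{\sigma}\cdot\nabla\times\bm{A}$ correction via Lichnerowicz and handle it by a second weak--strong pairing. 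This works, but it is an unnecessary detour: since multiplication by the constant matrix $\bm{\sigma}$ preserves weak $L^2$-convergence, the two cases $j=\mathrm{S}$ and $j=\mathrm{P}$ can be handled in one stroke as the paper does. None of these variations affects correctness; the concentration-compactness input and the strategy are the same.
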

\begin{proof}
According to Lemma \ref{minbdd} (together with the Sobolev inequality) the sequences $(\psi_n')_{n\in\mathbb{N}}$, $(\bm{A}_n')_{n\in\mathbb{N}}$ and $(\nabla\otimes\bm{A}_n')_{n\in\mathbb{N}}$ are bounded in the reflexive Banach spaces $H^1$, $L^6$ respectively $L^2$. Thus, the Banach-Alaoglu theorem gives the existence of functions $\psi\in H^1$ and $\bm{A}\in L^6$ with square integrable derivatives such that (passing to subsequences)
\begin{align}
\psi_n'\xrightharpoonup[n\to\infty]{}\psi\textrm{ in }H^1,\quad \bm{A}_n'\xrightharpoonup[n\to\infty]{}\bm{A}\textrm{ in }L^6\quad \textrm{and}\quad \partial_\ell\bm{A}_n'\xrightharpoonup[n\to\infty]{}\partial_\ell\bm{A}\textrm{ in }L^2\label{weakconvergence}
\end{align}
for $\ell\in\{1,2,3\}$. 
Observe that we can (after passing to yet another subsequence) assume that
\begin{align}
\psi_n'\xrightarrow[n\to\infty]{}\psi\textrm{ and }\bm{A}_n'\xrightarrow[n\to\infty]{}\bm{A}\textrm{ pointwise almost everywhere in }\mathbb{R}^3\label{pointwisepsiA}
\end{align}
as a consequence of \eqref{weakconvergence} and the result \cite[Corollary 8.7]{LL} about weak convergence implying a.e. convergence of a subsequence.

The pair $(\psi,\bm{A})$ is our candidate for a minimizer, so we begin by showing that $(\psi,\bm{A})\in \mathcal{S}_{\lambda}$. In this context the only nontrivial condition to check is that the identity $\|\psi\|_{L^2}^2=\lambda$ holds true. The inequality $\|\psi\|_{L^2}^2\leq \lambda$ follows immediately from \eqref{weakconvergence} and weak lower semicontinuity of the norm $\|\cdot\|_{L^2}$ (as expressed in \cite[Theorem 2.11]{LL}). To prove the opposite inequality we let $\varepsilon>0$ be given and use \eqref{tightness} to choose an $R>0$ such that
\begin{align}
\bigl\|1_{\mathcal{B}(\bm{0},R)}\psi_n'\bigr\|_{L^2}^2\geq \lambda-\varepsilon\quad\textrm{for all }n\in\mathbb{N}.\label{interball}
\end{align}
By \eqref{weakconvergence} and the Rellich-Kondrashov theorem \cite[Theorem 8.6]{LL} the left hand side of \eqref{interball} converges to $\bigl\|1_{\mathcal{B}(\bm{0},R)}\psi\bigr\|_{L^2}^2$ as $n$ tends to infinity. Hence we have $\|\psi\|_{L^2}^2\geq \lambda-\varepsilon$ for any $\varepsilon>0$ and consequently $\|\psi\|_{L^2}^2\geq \lambda$. Besides giving the desired conclusion that $(\psi,\bm{A})\in \mathcal{S}_{\lambda}$ this also enables us to deduce that
\begin{align}
\psi_n'\xrightarrow[n\to\infty]{}\psi\textrm{ in }L^2,\label{strongL2psi}
\end{align}
simply because the recently gained knowledge that $\|\psi\|_{L^2}^2=\|\psi_n'\|_{L^2}^2=\lambda$ gives together with \eqref{weakconvergence} that
\begin{align*}
\|\psi-\psi_n'\|_{L^2}^2=\|\psi\|_{L^2}^2+\|\psi_n'\|_{L^2}^2-2\mathrm{Re}\bigl(\psi,\psi_n'\bigr)_{L^2}\xrightarrow[n\to\infty]{}0.
\end{align*}
Finally, we observe that by \eqref{pointwisepsiA} the sequence $((A_n')^\ell\psi_n')_{n\in\mathbb{N}}$ converges pointwise almost everywhere to $A^\ell\psi$ and by the Hölder inequality it is bounded by $K_S^{\frac{3}{2}}C^{\frac{3}{2}}\lambda^{\frac{1}{4}}$ in $L^2$ for each $\ell\in\{1,2,3\}$. We now use that a bounded sequence of functions converging pointwise a.e. to some $L^2$-limit also converges weakly in $L^2$ to the same limit -- to prove the weak convergence it suffices namely to test against $C_0^\infty$-functions by \cite[Theorem V.1.3]{Yo} and thus the result follows from Egorov's theorem \cite[Section 0.3]{Yo}. This yields
\begin{align*}
(A'_n)^\ell\psi_n'\xrightharpoonup[n\to\infty]{}A^\ell\psi\textrm{ in }L^2,
\end{align*}
which together with \eqref{weakconvergence} implies that
\begin{align}
\nabla_{j,\bm{A}_n'}\psi_n'\xrightharpoonup[n\to\infty]{}\nabla_{j,\bm{A}}\psi\textrm{ in }L^2.\label{kineticliminf}
\end{align}

The remaining task to overcome is proving that $I_j^\lambda=\mathscr{E}_j^{\bm{v}}(\psi,\bm{A})$ -- or rather that $I_j^\lambda\geq \mathscr{E}_j^{\bm{v}}(\psi,\bm{A})$ since the opposite inequality is trivially true. By superadditivity of $\liminf$ and \eqref{Eexp1} we get
\begin{align}
I_j^\lambda&\geq\frac{1}{2m}\liminf_{n\to\infty}\|\nabla_{j,\bm{A}_n'}\psi_n'\|_{L^2}^2+\hbar\liminf_{n\to\infty}(\psi_n',i\bm{v}\cdot\nabla\psi_n')_{L^2}\nonumber\\
&+\frac{1}{8\pi}\liminf_{n\to\infty}\Bigl(\|\nabla\otimes\bm{A}_n'\|_{L^2}^2-\Bigl\|\Bigl(\frac{\bm{v}}{c}\cdot\nabla\Bigr)\bm{A}_n'\Bigr\|_{L^2}^2\Bigr),\label{liminfestimates}
\end{align}
so we will have to estimate each of the terms on the right hand side of \eqref{liminfestimates}. That
\begin{align}
\liminf_{n\to\infty}\|\nabla_{j,\bm{A}_n'}\psi_n'\|_{L^2}^2\geq \|\nabla_{j,\bm{A}}\psi\|_{L^2}^2\label{liminf1}
\end{align}
follows immediately from \eqref{kineticliminf} and the weak lower semicontinuity of $\|\cdot\|_{L^2}$. Using \eqref{strongL2psi}, \eqref{weakconvergence} and Lemma \ref{minbdd} gives
\begin{align*}
&\bigl|(\psi_n',i\bm{v}\cdot\nabla\psi_n')_{L^2}-(\psi,i\bm{v}\cdot\nabla\psi)_{L^2}\bigr|\\
&\leq |v|C\|\psi_n'-\psi\|_{L^2}+\sum_{\ell=1}^3|v^\ell||(\psi,\partial_\ell\psi_n'-\partial_\ell\psi)_{L^2}|\\
&\xrightarrow[n\to\infty]{} 0
\end{align*}
and therefore
\begin{align}
\liminf_{n\to\infty}(\psi_n',i\bm{v}\cdot\nabla\psi_n')_{L^2}=(\psi,i\bm{v}\cdot\nabla\psi)_{L^2}.\label{liminf2}
\end{align}
To treat the last term on the right hand side of \eqref{liminfestimates} we imagine that $\bm{v}$ points in the direction of the first axis, whereby the functional
\begin{align*}
\mathscr{H}:D^1\ni \bm{B}\mapsto \sqrt{\|\nabla\otimes\bm{B}\|_{L^2}^2-\Bigl\|\Bigl(\frac{\bm{v}}{c}\cdot\nabla\Bigr)\bm{B}\Bigr\|_{L^2}^2}\in \mathbb{R}
\end{align*}
simply reduces to
\begin{align*}
\mathscr{H}(\bm{B})=\left\|\begin{pmatrix}\sqrt{1-\frac{\bm{v}^2}{c^2}}\partial_1\\\partial_2\\\partial_3\end{pmatrix}\otimes\bm{B}\right\|_{L^2}.
\end{align*}
It is immediately apparent that $\mathscr{H}$ is a convex functional (by the triangle inequality) and $\mathscr{H}$ is continuous $D^1\to\mathbb{R}$ since $\mathscr{H}(\bm{B})\leq\|\nabla\otimes\bm{B}\|_{L^2}$ for all $\bm{B}\in D^1$. Therefore we get from Mazur's theorem \cite[Corollary 3.9]{HB} that $\mathscr{H}$ is weakly lower semicontinuous, which together with \eqref{weakconvergence} implies that
\begin{align}
\liminf_{n\to\infty}\Bigl(\|\nabla\otimes\bm{A}_n'\|_{L^2}^2-\Bigl\|\Bigl(\frac{\bm{v}}{c}\cdot\nabla\Bigr)\bm{A}_n'\Bigr\|_{L^2}^2\Bigr)\geq \|\nabla\otimes\bm{A}\|_{L^2}^2-\Bigl\|\Bigl(\frac{\bm{v}}{c}\cdot\nabla\Bigr)\bm{A}\Bigr\|_{L^2}^2,\label{liminf3}
\end{align}
since weak convergence of a sequence $(f_n)_{n\in\mathbb{N}}$ to some function $f$ in the Hilbert space $D^1$ by the Riesz representation theorem is characterized by the limit $\bigl(\nabla f_n,\nabla g\bigr)_{L^2}\xrightarrow[n\to\infty]{}\bigl(\nabla f,\nabla g\bigr)_{L^2}$ holding true for every choice $g\in D^1$. Finally, combining \eqref{liminfestimates}, \eqref{liminf1}, \eqref{liminf2} and \eqref{liminf3} results in the inequality $I_j^\lambda\geq \mathscr{E}_j^{\bm{v}}(\psi,\bm{A})$.
\end{proof}
Combining Theorem \ref{minimization} with Lemma \ref{Mintosolve} now gives the main result. 

\chapter{Behavior of the Energy for small velocities of the Particle}\label{behaviorofenergy}
In this final section we estimate the energy of our travelling wave solutions. The next Theorem \ref{energyact} shows that to leading order for small $|\bm{v}|$ the energy behaves like $\frac{m\bm{v}^2}{2}\lambda$. We interpret this as saying that there is no change in effective mass due to the electromagnetic field.
\begin{saetning}\label{energyact}
Let $j\in\{\mathrm{S},\mathrm{P}\}$ and $\lambda>0$ be given. Then there exist $\theta_j,\kappa_j>0$ \emph{(}only depending on $j,\lambda,\hbar,c,Q$ and $m$\emph{)} such that
\begin{align*}
\Bigl|E_j(\bm{v},\psi,\bm{A})-\frac{m\bm{v}^2}{2}\lambda\Bigr|\leq \kappa_j|\bm{v}|^3
\end{align*}
for any $\bm{v}\in\mathbb{R}^3$ with $0<|\bm{v}|<\theta_j$ and any minimizer $(\psi,\bm{A})$ of $\mathscr{E}_j^{\bm{v}}$ on $\mathcal{S}_{\lambda}$.
\end{saetning}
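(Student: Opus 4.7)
The plan is to express $E_j(\bm v,\psi,\bm A)$ in terms of $\mathscr{E}_j^{\bm v}(\psi,\bm A)$, which equals $I_j^\lambda$ at any minimizer, plus a few explicit correction terms. Using that $\mathrm{div}\,\bm A=0$ gives $\|\nabla\times\bm A\|_{L^2}^2=\|\nabla\otimes\bm A\|_{L^2}^2$ and comparing the definitions of $E_j$ and $\mathscr{E}_j^{\bm v}$ from \eqref{Eexp1}, we obtain
\begin{align*}
E_j(\bm v,\psi,\bm A)=\mathscr{E}_j^{\bm v}(\psi,\bm A)+\frac{\lambda-1}{8\pi}\|\nabla\otimes\bm A\|_{L^2}^2+\frac{\lambda+1}{8\pi c^2}\|(\bm v\cdot\nabla)\bm A\|_{L^2}^2-\hbar(\psi,i\bm v\cdot\nabla\psi)_{L^2}.
\end{align*}
Thus at a minimizer, understanding $E_j-\frac{m\bm v^2}{2}\lambda$ reduces to controlling $I_j^\lambda+\frac{m\bm v^2}{2}\lambda$, the magnetic gradient $\|\nabla\otimes\bm A\|_{L^2}$, and the awkward expectation $\hbar(\psi,i\bm v\cdot\nabla\psi)_{L^2}$.

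The key new ingredient is the virial-type identity
\begin{align*}
\hbar(\psi,i\bm v\cdot\nabla\psi)_{L^2}+\frac{Q}{c}(\psi,\bm v\cdot\bm A\psi)_{L^2}=-m\bm v^2\lambda,
\end{align*}
valid at every minimizer. To prove it I would test minimality against the one-parameter family $s\mapsto(\mathrm{e}^{is\bm v\cdot\bm x}\psi,\bm A)$, which sits in $\mathcal{S}_\lambda$ for every $s\in\mathbb{R}$ because $|\mathrm{e}^{is\bm v\cdot\bm x}|=1$ preserves the $L^2$-norm and $\mathrm{e}^{is\bm v\cdot\bm x}\psi\in H^1$ whenever $\psi\in H^1$. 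Gauge covariance yields $\nabla_{j,\bm A}(\mathrm{e}^{is\bm v\cdot\bm x}\psi)=\mathrm{e}^{is\bm v\cdot\bm x}\nabla_{j,\bm A-\frac{\hbar sc}{Q}\bm v}\psi$, and since $\nabla\times(s\bm v)=0$ the Lichnerowicz formula \eqref{Pauli} gives $\|\nabla_{j,\bm A}(\mathrm{e}^{is\bm v\cdot\bm x}\psi)\|_{L^2}^2=\|\nabla_{j,\bm A-\frac{\hbar sc}{Q}\bm v}\psi\|_{L^2}^2$ for both $j\in\{\mathrm{S},\mathrm{P}\}$. A direct computation then writes $\mathscr{E}_j^{\bm v}(\mathrm{e}^{is\bm v\cdot\bm x}\psi,\bm A)$ as a polynomial in $s$ of degree two, and vanishing of its derivative at the minimum $s=0$ is precisely the stated identity.

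Substituting the virial identity into the previous display gives
\begin{align*}
E_j(\bm v,\psi,\bm A)-\frac{m\bm v^2}{2}\lambda=\Bigl(I_j^\lambda+\tfrac{m\bm v^2}{2}\lambda\Bigr)+\frac{\lambda-1}{8\pi}\|\nabla\otimes\bm A\|_{L^2}^2+\frac{\lambda+1}{8\pi c^2}\|(\bm v\cdot\nabla)\bm A\|_{L^2}^2+\frac{Q}{c}(\psi,\bm v\cdot\bm A\psi)_{L^2},
\end{align*}
and each of the four summands can then be bounded by a power of $|\bm v|$ that is at least $3$. The scaling test pair of Lemma \ref{negativeI}, with $a$ of order $|\bm v|^3$ and $R=R_a$ of order $|\bm v|^{-2}$, yields the upper bound $I_j^\lambda+\frac{m\bm v^2}{2}\lambda\leq -c_1|\bm v|^4$, while the lower bound in \eqref{estimateonenergyFCase}--\eqref{estimateonenergySCase} gives $I_j^\lambda+\frac{m\bm v^2}{2}\lambda\geq -C_1|\bm v|^4$. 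Feeding this into \eqref{boundonA} at the minimizer forces $\|\nabla\otimes\bm A\|_{L^2}\leq C|\bm v|^2$, so the second summand is $O(|\bm v|^4)$ and the third is $O(|\bm v|^6)$. For the last summand, H\"older's inequality, Sobolev's inequality, and the interpolation $\|\psi\|_{L^{12/5}}^2\leq\lambda^{3/4}\|\psi\|_{L^6}^{1/2}$, combined with the uniform $H^1$-bound on $\psi$ from Lemma \ref{minbdd}, give $|(\psi,\bm v\cdot\bm A\psi)_{L^2}|\leq C|\bm v|\,\|\nabla\otimes\bm A\|_{L^2}\leq C|\bm v|^3$. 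This is the dominant contribution and pins the exponent at $3$. The main obstacle is the virial identity itself, in particular checking that the Pauli correction in \eqref{Pauli} is insensitive to the shift $\bm A\mapsto\bm A-\frac{\hbar sc}{Q}\bm v$ (which is exactly the statement $\nabla\times(s\bm v)=0$), so that the same derivation applies uniformly to $j=\mathrm{S}$ and $j=\mathrm{P}$.
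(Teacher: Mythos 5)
Your proof is correct but takes a genuinely different route from the paper's. The paper's proof bounds $\|\nabla_{j,\bm{A}+\frac{mc}{Q}\bm{v}}\psi\|_{L^2}^2=O(\bm{v}^4)$ directly from \eqref{Eexp2}, \eqref{boundonA}, \eqref{boundonpsifirst} and Lemma~\ref{negativeI}, then inserts into exact identities rewriting $E_j$ in terms of $\|\nabla_{j,\bm{A}+\frac{mc}{Q}\bm{v}}\psi\|_{L^2}^2$, $\|\nabla\otimes\bm{A}\|_{L^2}^2$, and a cross term $(\psi,\bm{v}\cdot\nabla_{j,\bm{A}+\frac{mc}{Q}\bm{v}}\psi)_{L^2}$ which is then estimated by Cauchy--Schwarz to be $O(|\bm{v}|^3)$. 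You instead introduce a new ingredient: the virial identity $\hbar(\psi,i\bm v\cdot\nabla\psi)_{L^2}+\frac{Q}{c}(\psi,\bm v\cdot\bm A\psi)_{L^2}=-m\bm v^2\lambda$, obtained by testing minimality along the unit-modulus phase family $s\mapsto\mathrm{e}^{is\bm{v}\cdot\bm{x}}\psi$ (which stays in $\mathcal{S}_\lambda$) and noting that the Lichnerowicz correction in \eqref{Pauli} is insensitive to the constant gauge shift $\bm{A}\mapsto\bm{A}-\frac{\hbar sc}{Q}\bm{v}$. This is a valid argument and I have checked that for $j=\mathrm{P}$ the Hermiticity of $\bm{\sigma}$ and \eqref{sigmavector} kill the extra spin cross-term, so both cases give the same identity.

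A few observations. First, your virial identity is equivalent to saying $(\psi,\bm{v}\cdot\nabla_{j,\bm{A}+\frac{mc}{Q}\bm{v}}\psi)_{L^2}=0$ exactly, so in fact it is strictly stronger than the paper's Cauchy--Schwarz bound on that cross term: the paper's proof only needs $O(|\bm{v}|^3)$ there, but your identity gives the vanishing. Second, the final term $\frac{Q}{c}(\psi,\bm{v}\cdot\bm{A}\psi)_{L^2}$ that you call the ``dominant contribution'' is actually $O(|\bm{v}|^4)$ rather than $O(|\bm{v}|^3)$ if, instead of the uniform $H^1$-bound from Lemma~\ref{minbdd}, you use the paper's minimizer bound $\|\psi\|_{L^6}^2=O(\bm{v}^4)$ (which follows from \eqref{boundonpsifirst} and Lemma~\ref{negativeI} once you note that a minimizer must satisfy \eqref{FirstcasePaulilow}, as the paper observes); your route therefore actually yields the better estimate $|E_j-\frac{m\bm{v}^2}{2}\lambda|\leq\kappa_j'\bm{v}^4$, though $|\bm{v}|^3$ of course suffices for the statement. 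Third, the upper bound $I_j^\lambda+\frac{m\bm{v}^2}{2}\lambda\leq -c_1|\bm{v}|^4$ via the scaling trial pair is not actually needed: Lemma~\ref{negativeI} already supplies $I_j^\lambda+\frac{m\bm{v}^2}{2}\lambda<0$, and together with the one-sided bound $\geq -C_1|\bm{v}|^4$ from \eqref{estimateonenergyFCase}--\eqref{estimateonenergySCase} this gives the two-sided $O(\bm{v}^4)$ control you need. None of these affect correctness; they only show you have slack to spare.
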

\begin{proof}
Let $j\in\{\mathrm{S},\mathrm{P}\}$, $\lambda>0$ as well as $\bm{v}\in\mathbb{R}^3$ with $0<|\bm{v}|<\Theta_{j,+}^{\lambda}$ be given and consider an arbitrary minimizer $(\psi,\bm{A})$ of $\mathscr{E}_j^{\bm{v}}$ on $\mathcal{S}_{\lambda}$. Then according to Lemma \ref{negativeI} and \eqref{boundonpsisecond} the pair $(\psi,\bm{A})$ must satisfy \eqref{FirstcasePaulilow}. Thereby \eqref{boundonA}, \eqref{boundonpsifirst} and Lemma \ref{negativeI} give that
\begin{align*}
\|\psi\|_{L^6}^2\leq \frac{2^6\pi^2 K_S^8Q^4m^2\lambda^3}{\hbar^4}\frac{\bm{v}^4}{(\Theta_{j,+}^{\lambda}-|\bm{v}|)^2(|\bm{v}|-\Theta_{j,-}^{\lambda})^2}
\end{align*}
and
\begin{align*}
\|\nabla\otimes\bm{A}\|_{L^2}^2\leq \frac{2^8\pi^3 K_S^6c^2Q^4m\lambda^3}{\hbar^2}\frac{\bm{v}^4}{(c^2-\bm{v}^2)^2(\Theta_{j,+}^{\lambda}-|\bm{v}|)(|\bm{v}|-\Theta_{j,-}^{\lambda})}.
\end{align*}
Using these estimates together with \eqref{Eexp2}, Lemma \ref{negativeI}, Hölder's and Sobolev's inequalities results in the inequality
\begin{align*}
\|\nabla_{j,\bm{A}+\frac{mc}{Q}\bm{v}}\psi\|_{L^2}^2\leq \frac{2^6\pi^2K_S^6Q^4m^2\lambda^3}{\hbar^2}\frac{\bm{v}^4\bigl(c^2\bigl(1+\sqrt{2}\bigr)+\bm{v}^2\bigl(1-\sqrt{2}\bigr)\bigr)}{(c^2-\bm{v}^2)^2(\Theta_{j,+}^{\lambda}-|\bm{v}|)(|\bm{v}|-\Theta_{j,-}^{\lambda})}
\end{align*}
and so the desired result follows immediately from the identities
\begin{align*}
E_{\mathrm{S}}(\bm{v},\psi,\bm{A})=\frac{1}{2m}\|\nabla_{\mathrm{S},\bm{A}+\frac{mc}{Q}\bm{v}}\psi\|_{L^2}^2+\frac{m\bm{v}^2}{2}\lambda-(\psi,\bm{v}\cdot\nabla_{\mathrm{S},\bm{A}+\frac{mc}{Q}\bm{v}}\psi)_{L^2}\\
+\frac{1}{8\pi}\int_{\mathbb{R}^3}\Bigl(\Bigl|\Bigl(\frac{v}{c}\cdot\nabla\Bigr)\bm{A}\Bigr|^2+|\nabla\times \bm{A}|^2\Bigr)\,\mathrm{d}\bm{x}\lambda
\end{align*}
and
\begin{align*}
E_{\mathrm{P}}(\bm{v},\psi,\bm{A})=\frac{1}{2m}\|\nabla_{\mathrm{P},\bm{A}+\frac{mc}{Q}\bm{v}}\psi\|_{L^2}^2+\frac{m\bm{v}^2}{2}\lambda-\mathrm{Re}(\bm{\sigma}\cdot\bm{v}\psi,\nabla_{\mathrm{P},\bm{A}+\frac{mc}{Q}\bm{v}}\psi)_{L^2}\\
+\frac{1}{8\pi}\int_{\mathbb{R}^3}\Bigl(\Bigl|\Bigl(\frac{v}{c}\cdot\nabla\Bigr)\bm{A}\Bigr|^2+|\nabla\times \bm{A}|^2\Bigr)\,\mathrm{d}\bm{x}\lambda.
\end{align*}
\end{proof}

\appendix

\chapter{The Poisson Equation}\label{PoissonAppendix}
Given some function $f$ the corresponding Poisson equation reads
\begin{align}
-\Delta u=f.\label{Poisson}
\end{align}
Let us briefly recall the contents of \cite[Theorem 6.21]{LL} and \cite[Remark 6.21(2)]{LL}: If
\begin{align}
f\in L^1_{\mathrm{loc}}(\mathbb{R}^3)\quad\textrm{and}\quad\int_{\mathbb{R}^3} \frac{|f(\bm{y})|}{1+|\bm{y}|}\,\mathrm{d}\bm{y}<\infty\label{integrabilityf}
\end{align}
then defining $u:\mathbb{R}^3\to\mathbb{C}$ by
\begin{align}
u(\bm{x})=\frac{1}{4\pi}\int_{\mathbb{R}^3}\frac{1}{|\bm{x}-\bm{y}|}f(\bm{y})\,\mathrm{d}\bm{y}\label{Poissonsol}
\end{align}
for almost every $\bm{x}\in\mathbb{R}^3$ results in a locally integrable solution of \eqref{Poisson}. Moreover, the distributional gradient $\nabla u$ can be identified with the function given by
\begin{align}
\nabla u(\bm{x})=-\frac{1}{4\pi}\int_{\mathbb{R}^3}\frac{\bm{x}-\bm{y}}{|\bm{x}-\bm{y}|^3}f(\bm{y})\,\mathrm{d}\bm{y}
\label{derivPoisson}
\end{align}
for almost every $\bm{x}\in\mathbb{R}^3$. We will need the following result.
\begin{lemma}\label{Poissonlemma}
If $f\in L^1\cap L^3$ and $\nabla f\in L^1\cap L^{\frac{5}{4}}$ then $u$ defined by \eqref{Poissonsol} is a $D^1$-function solving \eqref{Poisson} in the distribution sense. Likewise, if $f\in H^1$ has compact support then $u$ solves \eqref{Poisson} and $\nabla u\in D^1$.
\end{lemma}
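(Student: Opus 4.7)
The plan is to treat the two claims in turn, both relying on standard potential-theoretic estimates. For the first claim the strategy is to appeal to \cite[Theorem 6.21]{LL} for local integrability of $u$ together with the distributional Poisson identity, then verify the remaining two $D^1$-requirements---namely $\nabla u\in L^2$ and the vanishing-at-infinity property for $u$---by controlling $u$ and $\nabla u$ pointwise by Riesz potentials of $|f|$ and applying the Hardy--Littlewood--Sobolev inequality. Since $f\in L^1$ trivially satisfies \eqref{integrabilityf}, \cite[Theorem 6.21]{LL} yields that $u$ is locally integrable, solves $-\Delta u=f$ in the distributional sense, and has distributional gradient represented by \eqref{derivPoisson}. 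Interpolating the hypothesis $f\in L^1\cap L^3$ gives $f\in L^{6/5}$; combined with the pointwise bound $|\nabla u|\leq (4\pi)^{-1}|\cdot|^{-2}*|f|$, HLS applied to the Riesz potential of order $1$ then yields $\nabla u\in L^2$, while the analogous bound for $u$ and HLS applied to the Riesz potential of order $2$ yields $u\in L^6$. The latter gives $|\{|u|>t\}|\leq t^{-6}\|u\|_{L^6}^6<\infty$ for every $t>0$, so $u\in D^1$.

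For the second claim, suppose $f\in H^1$ has compact support $K$. By the Sobolev embedding $H^1(\mathbb{R}^3)\hookrightarrow L^6$ and H\"older's inequality with the indicator $1_K$, we have $f\in L^p$ for every $p\in[1,6]$ and $\nabla f\in L^q$ for every $q\in[1,2]$, so both hypothesis sets of the first claim are satisfied, giving $u\in D^1$ solving \eqref{Poisson}. To upgrade this to $\nabla u\in D^1$ it remains only to show that each second distributional derivative $\partial_i\partial_j u$ lies in $L^2$, since then Chebyshev's inequality combined with $\nabla u\in L^2$ from the first claim provides the vanishing-at-infinity property for $\nabla u$. This $L^2$-bound follows either by writing $\partial_i\partial_j u=R_iR_j f$, where $R_i$ denotes the $i$-th Riesz transform (which is bounded on $L^2$), or equivalently by noting that the candidate Fourier transform $\xi_i\xi_j\widehat{f}(\xi)/|\xi|^2$ is dominated pointwise by $|\widehat{f}(\xi)|$, with $\widehat{f}\in L^2$ since $f\in L^2$.

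The main place where care is needed is in making the identity $\partial_i\partial_j u=R_iR_jf$ rigorous, given that $u$ is known a priori only to lie in $L^1_{\mathrm{loc}}\cap L^6$ rather than in a space where the Fourier transform acts classically. I would handle this by approximating $f$ in $H^1$ by a sequence in $C_0^\infty$, verifying the identity for the smooth approximants by a direct Fourier computation where no distributional subtleties arise, and then passing to the limit using $L^2$-continuity of the double Riesz transforms together with the continuity of the Newton potential mapping $f\mapsto u$ in the relevant norms established in the first claim. Every other step of the argument is a routine application of a classical estimate.
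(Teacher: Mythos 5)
Your proof is correct but follows a genuinely different route from the paper's. For the first claim, where the paper directly estimates $u$ and $\nabla u$ by splitting the convolution integrals into near- and far-field contributions and applying H\"older, Tonelli and dominated convergence, you instead interpolate to obtain $f\in L^{6/5}$ and then invoke the Hardy--Littlewood--Sobolev inequality twice (Riesz potentials of orders $2$ and $1$) to get $u\in L^6$ and $\nabla u\in L^2$, from which both the local integrability and the vanishing-at-infinity property of $u$ follow by Chebyshev. That is shorter and cleaner, at the cost of relying on HLS for the full first claim rather than only in one sub-estimate. For the second claim the difference is more pronounced: the paper differentiates under the integral sign once, moves the derivative onto $f$ to obtain the representation $\partial_j\partial_k u(\bm{x})=-\tfrac{1}{4\pi}\int\tfrac{x_k-y_k}{|\bm{x}-\bm{y}|^3}\partial_j f(\bm{y})\,\mathrm{d}\bm{y}$, and then estimates this in $L^2$ by hand via H\"older, Jensen, HLS and the geometric observations on the support of $f$; you instead appeal to the Fourier-multiplier identity $\widehat{\partial_i\partial_j u}=-(\xi_i\xi_j/|\xi|^2)\widehat{f}$ and the $L^2$-boundedness of the double Riesz transform, with a density/continuity argument to justify the identity rigorously. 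Your version is conceptually simpler and more symmetric in $i,j$, but does lean on the Calder\'on--Zygmund machinery where the paper stays closer to elementary convolution estimates. Both arguments establish exactly what the lemma claims; the one point to spell out if writing yours up in full is the approximation step: choose $f_n\in C_0^\infty$ supported in a fixed slightly larger ball with $f_n\to f$ in $H^1$, note this gives $f_n\to f$ in $L^{6/5}$ as well, so the Newton potentials and their first derivatives converge in $L^6$ and $L^2$ respectively by the first-claim estimates, and then pass the identity $\partial_i\partial_j u_n=-R_iR_jf_n$ to the limit using $L^2$-continuity of $R_iR_j$.
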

\begin{proof}
Verifying the condition \eqref{integrabilityf} in each of the two scenarios outlined in the statement of the lemma is an easy task, which is left for the reader -- in this context it is useful to note that $\bm{y}\mapsto \frac{1}{1+|\bm{y}|}$ is e.g. an $L^6$-function. Thus, the function defined almost everywhere by \eqref{Poissonsol} is indeed a solution of the Poisson equation in those two cases.

Suppose that $f\in L^1\cap L^3$ and $\nabla f\in L^1\cap L^{\frac{5}{4}}$. Then we first show that $u$ vanishes at infinity: For this let $\mathcal{N}$ denote the null set on which the identities \eqref{Poissonsol} and \eqref{derivPoisson} do not hold true. Then for each sequence $(\bm{x}_k)_{k\in\mathbb{N}}$ of elements in $\mathbb{R}^3\setminus \mathcal{N}$ with $|\bm{x}_k|\xrightarrow[k\to\infty]{} \infty$ we have
\begin{align*}
|u(\bm{x}_k)|&\leq\frac{1}{4\pi}\bigl\|1_{\mathcal{B}(\bm{x}_k,1)}f\bigr\|_{L^3}\left\|\frac{1_{\mathcal{B}(\bm{0},1)}}{|\cdot|}\right\|_{L^{\frac{3}{2}}}+\frac{1}{4\pi}\left\|\frac{(1_{\mathbb{R}^3\setminus\mathcal{B}(\bm{x}_k,1)}f)(\cdot)}{|\bm{x}_k-\cdot\,|}\right\|_{L^1}\\
&\xrightarrow[k\to\infty]{} 0,
\end{align*}
where we split the integral involved in the expression for $u(\bm{x}_k)$ into a contribution from $\mathcal{B}(\bm{x}_k,1)$ as well as a contribution from $\mathbb{R}^3\setminus\mathcal{B}(\bm{x}_k,1)$ and treat these by means of the Hölder inequality and Lebesgue's dominated convergence theorem. 
In order to prove that $\nabla u$ is square integrable we use the Hölder inequality and Tonelli's theorem to get
\begin{align*}
&\int_{\mathbb{R}^3}|\nabla u(\bm{x})|^2\,\mathrm{d}\bm{x}\\
&\leq \frac{1}{8\pi^2}\int_{\mathbb{R}^3}\Bigl(\int_{\mathcal{B}(\bm{x},1)}\frac{|f(\bm{y})|}{|\bm{x}-\bm{y}|^2}\,\mathrm{d}\bm{y}\Bigr)^2\,\mathrm{d}\bm{x}+\frac{1}{8\pi^2}\int_{\mathbb{R}^3}\Bigl(\int_{\mathbb{R}^3\setminus\mathcal{B}(\bm{x},1)}\frac{|f(\bm{y})|}{|\bm{x}-\bm{y}|^2}\,\mathrm{d}\bm{y}\Bigr)^2\,\mathrm{d}\bm{x}\\
&\leq\frac{1}{8\pi^2}\int_{\mathbb{R}^3}\int_{\mathcal{B}(\bm{x},1)}\frac{|f(\bm{y})|}{|\bm{x}-\bm{y}|^{\frac{5}{2}}}\,\mathrm{d}\bm{y}\,\mathrm{d}\bm{x}\left\|\frac{1_{\mathcal{B}(\bm{0},1)}}{|\cdot|}\right\|_{L^{\frac{5}{2}}}^{\frac{3}{2}}\|f\|_{L^1}^{\frac{1}{10}}\|f\|_{L^3}^{\frac{9}{10}}\\
&+\frac{1}{8\pi^2}\int_{\mathbb{R}^3}\int_{\mathbb{R}^3\setminus\mathcal{B}(\bm{x},1)}\frac{|f(\bm{y})|}{|\bm{x}-\bm{y}|^{\frac{7}{2}}}\,\mathrm{d}\bm{y}\,\mathrm{d}\bm{x}\left\|\frac{1_{\mathbb{R}^3\setminus\mathcal{B}(\bm{0},1)}}{|\cdot|}\right\|_{L^{\frac{7}{2}}}^{\frac{1}{2}}\|f\|_{L^3}^{\frac{3}{14}}\|f\|_{L^1}^{\frac{11}{14}}\\
&\leq \frac{1}{8\pi^2}\left\|\frac{1_{\mathcal{B}(\bm{0},1)}}{|\cdot|}\right\|_{L^{\frac{5}{2}}}^{4}\|f\|_{L^1}^{\frac{11}{10}}\|f\|_{L^3}^{\frac{9}{10}}+\frac{1}{8\pi^2}\left\|\frac{1_{\mathbb{R}^3\setminus\mathcal{B}(\bm{0},1)}}{|\cdot|}\right\|_{L^{\frac{7}{2}}}^{4}\|f\|_{L^3}^{\frac{3}{14}}\|f\|_{L^1}^{\frac{25}{14}}
\end{align*}
and so we conclude that $u\in D^1$.

Assume now that $f\in H^1$ has support in some ball $\mathcal{B}(\bm{0},r_f)$. Then for all $\bm{x}\in\mathbb{R}^3\setminus \mathcal{N}$ with $|\bm{x}|>r_f$ we have
\begin{align*}
|\nabla u(\bm{x})|\leq \frac{1}{4\pi(|\bm{x}|-r_f)^2}\Bigl(\frac{4}{3}\pi r_f^3\Bigr)^{\frac{1}{2}}\|f\|_{L^2}
\end{align*}
whereby we deduce that $\nabla u$ vanishes at infinity. Combining the change of variables $\bm{z}=\bm{x}-\bm{y}$ with a naive differentiation under the integral sign in \eqref{derivPoisson} suggests that
\begin{align}
\partial_j\partial_k u(\bm{x})=-\frac{1}{4\pi}\int_{\mathbb{R}^3}\frac{x_k-y_k}{|\bm{x}-\bm{y}|^3} \partial_j f(\bm{y})\,\mathrm{d}\bm{y}\label{secondorderderiv}
\end{align}
for $j,k\in\{1,2,3\}$ and almost every $\bm{x}\in\mathbb{R}^3$. Under the assumption that $f$ has square integrable first derivatives and compact support, the right hand side of \eqref{secondorderderiv} is indeed well defined almost everywhere in $\mathbb{R}^3$: The function $\bm{y}\mapsto \frac{|\partial_j f(\bm{y})|}{|\bm{x}-\bm{y}|^2}$ (and thereby also $\bm{y}\mapsto \frac{x_k-y_k}{|\bm{x}-\bm{y}|^3} \partial_j f(\bm{y})$) must namely be integrable for almost all $\bm{x}\in\mathcal{B}(\bm{0},2r_f)$, since Tonelli's theorem, the Cauchy-Schwarz inequality and the basic observation stated on Figure \ref{fig:Integralestimatea} give that
\begin{align}
\int_{\mathcal{B}(\bm{0},2r_f)}\int_{\mathcal{B}(\bm{0},r_f)}\frac{|\partial_j f(\bm{y})|}{|\bm{x}-\bm{y}|^2}\,\mathrm{d}\bm{y}\,\mathrm{d}\bm{x}
&\leq \Bigl(\frac{4}{3}\pi r_f^3\Bigr)^{\frac{1}{2}}\left\|1_{\mathcal{B}(\bm{0},3r_f)}\frac{1}{|\cdot|}\right\|_{L^2}^2\|\partial_j f\|_{L^2}.\label{locLaplace}
\end{align}
On the other hand $\bm{y}\mapsto \frac{|\partial_j f(\bm{y})|}{|\bm{x}-\bm{y}|^2}$ is majorized by the integrable function $\frac{|\partial_j f|}{r_f^2}$ for all $\bm{x}\in\mathbb{R}^3\setminus\mathcal{B}(\bm{0},2r_f)$, as shown on Figure \ref{fig:Integralestimateb}.
\begin{figure}
\centering
\subbottom[$\textcolor{red}{|\bm{x}-\bm{y}|< 3r_f}$]{\includegraphics[width=0.45\textwidth]{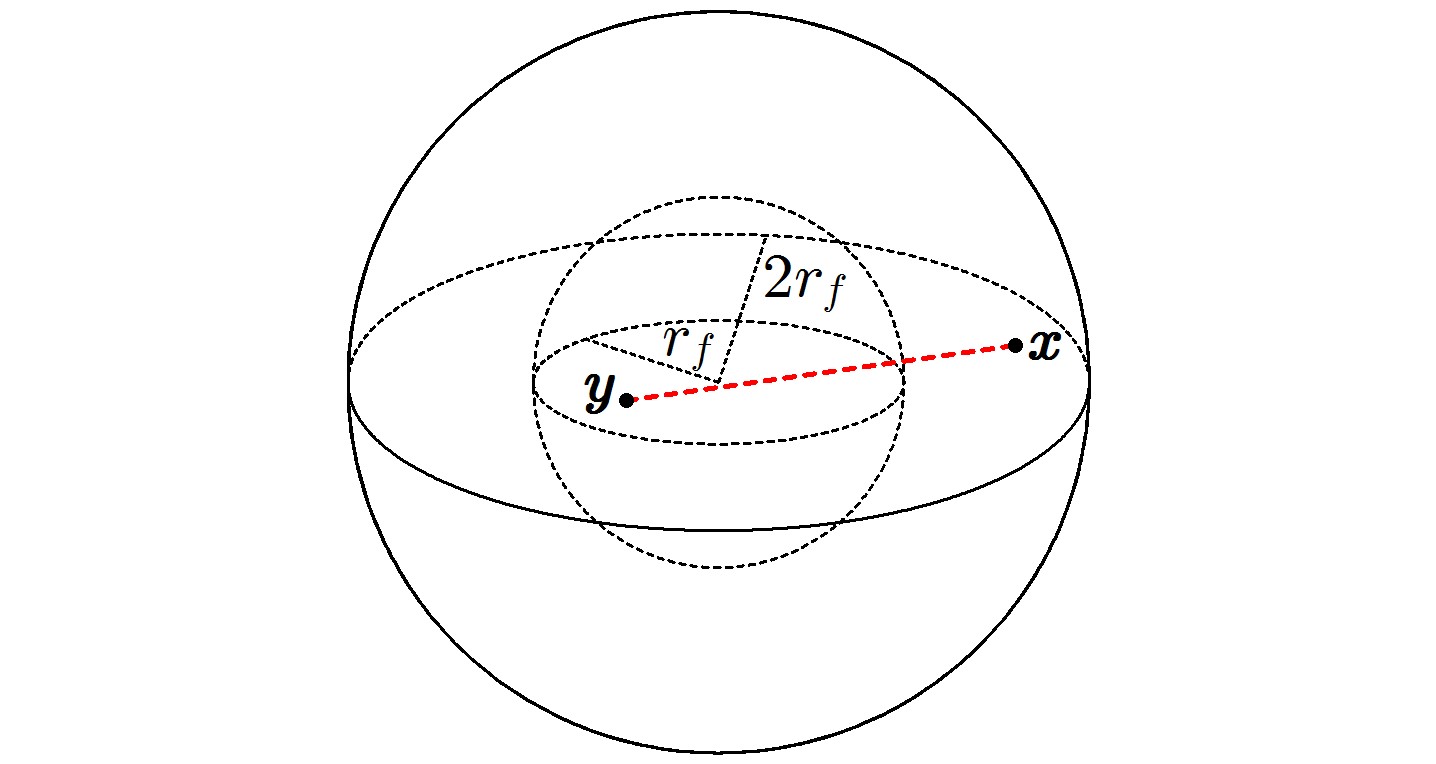}\label{fig:Integralestimatea}}
\hfill
\subbottom[$\textcolor{red}{|\bm{x}-\bm{y}|> r_f}$]{\includegraphics[width=0.45\textwidth]{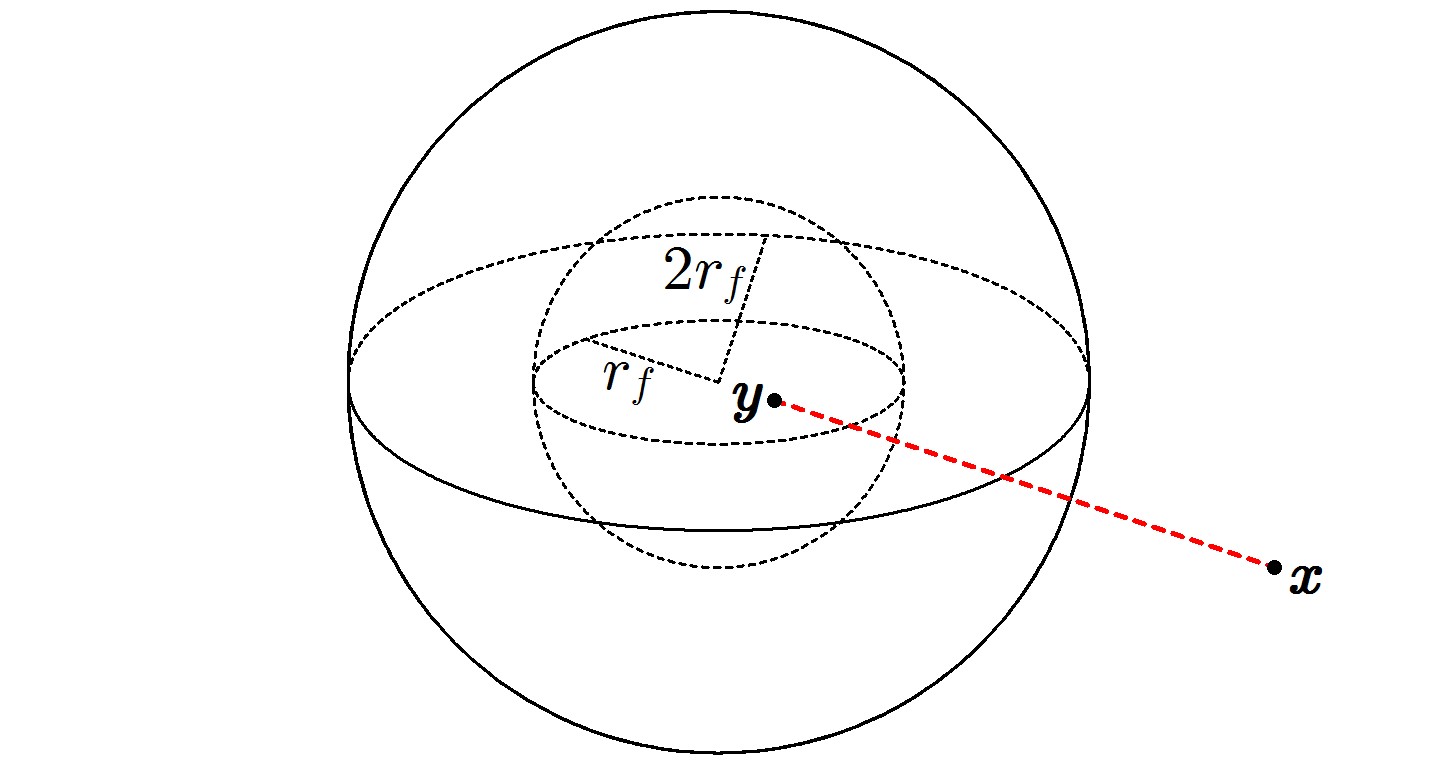}\label{fig:Integralestimateb}}
\caption{Estimates on the distance from $\bm{x}$ to any $\bm{y}\in \mathrm{supp}f\subset \mathcal{B}(\bm{0},r_f)$ in the cases $\bm{x}\in\mathcal{B}(\bm{0},2r_f)$ respectively $\bm{x}\in\mathbb{R}^3\setminus\mathcal{B}(\bm{0},2r_f)$. Both estimates follow from the triangle inequality in $\mathbb{R}^3$.}
\label{fig:Integralestimate}
\end{figure}
Thus, it makes sense to consider the function given (a.e.) by the expression on the right hand side of \eqref{secondorderderiv} -- the identification of this function with the distribution $\partial_j\partial_k u$ then follows from a standard argument utilizing Fubini's theorem (which is outlined in the proof of \cite[Theorem 6.21]{LL}). It just remains to be proven that the functions $\partial_j\partial_k u$ are square integrable -- we first verify this square integrability on the ball $\mathcal{B}(\bm{0},2r_f)$. This is done by using the Hölder inequality, the Jensen inequality and the Hardy-Littlewood-Sobolev inequality:
\begin{align*}
&\int_{\mathcal{B}(\bm{0},2r_f)}|\partial_j\partial_k u(\bm{x})|^2\,\mathrm{d}\bm{x}\\
&\leq \!\frac{1}{16\pi^2}\!\!\int_{\mathcal{B}(\bm{0},2r_f)}\!\Bigl(\int_{\mathcal{B}(\bm{0},r_f)}\!\!\frac{|\partial_j f(\bm{y})|^{\frac{6}{5}}}{|\bm{x}-\bm{y}|^{\frac{9}{4}}}\mathrm{d}\bm{y}\Bigr)^{\frac{10}{9}}\!\Bigl(\int\frac{1_{\mathcal{B}(\bm{0},3r_f)}(\bm{x}-\bm{y})}{|\bm{x}-\bm{y}|^{\frac{27}{10}}}\mathrm{d}\bm{y}\Bigr)^{\frac{5}{9}}\!\|\partial_jf\|_{L^2}^{\frac{2}{3}}\mathrm{d}\bm{x}\\
&\leq \frac{\bigl(\frac{4}{3}\pi r_f^3\bigr)^{\frac{1}{9}}}{16\pi^2}\int_{\mathcal{B}(\bm{0},2r_f)}\int_{\mathcal{B}(\bm{0},r_f)}\frac{|\partial_j f(\bm{y})|^{\frac{4}{3}}}{|\bm{x}-\bm{y}|^{\frac{5}{2}}}\,\mathrm{d}\bm{y}\,\mathrm{d}\bm{x}\left\|1_{\mathcal{B}(\bm{0},3r_f)}\frac{1}{|\cdot|}\right\|_{L^{\frac{27}{10}}}^{\frac{3}{2}}\|\partial_jf\|_{L^2}^{\frac{2}{3}}\\
&\leq \frac{\bigl(\frac{4}{3}\pi r_f^3\bigr)^{\frac{1}{9}}}{8\pi^2} \Bigl(\frac{4}{3}\pi\Bigr)^{\frac{5}{6}}\Bigl(\Bigl(\frac{5}{3}\Bigr)^{\frac{5}{6}}+\Bigl(\frac{5}{2}\Bigr)^{\frac{5}{6}}\Bigr)\Bigl(\frac{4}{3}\pi (2r_f)^3\Bigr)^{\frac{1}{2}}\left\|1_{\mathcal{B}(\bm{0},3r_f)}\frac{1}{|\cdot|}\right\|_{L^{\frac{27}{10}}}^{\frac{3}{2}}\|\partial_jf\|_{L^2}^{2}.
\end{align*}
Finally, the Cauchy-Schwarz inequality, Tonelli's theorem and the observation on Figure \ref{fig:Integralestimateb} give
\begin{align*}
\int_{\mathbb{R}^3\setminus\mathcal{B}(\bm{0},2r_f)}|\partial_j\partial_k u(\bm{x})|^2\,\mathrm{d}\bm{x}
&\leq \frac{1}{16\pi^2}\Bigl(\frac{4}{3}\pi r_f^3\Bigr)\left\|1_{\mathbb{R}^3\setminus\mathcal{B}(\bm{0},r_f)}\frac{1}{|\cdot|}\right\|_{L^4}^4\|\partial_j f\|_{L^2}^2,
\end{align*}
whereby $\partial_j\partial_k u$ is also square integrable on $\mathbb{R}^3\setminus\mathcal{B}(\bm{0},2r_f)$. Consequently, $\nabla u$ is a $D^1$-function.
\end{proof}
\begin{bemaerkning}\label{uniquePoisson}
Consider a locally integrable, harmonic function $u$ with square integrable first derivatives. The harmonicity of $\nabla u$ ensures the existence of vector fields $\bm{p}_m$ on $\mathbb{R}^3$ with homogeneous harmonic polynomials of degree $m$ as coordinates such that
\begin{align*}
\nabla u(\bm{x})=\sum_{m=0}^\infty \bm{p}_m(\bm{x})
\end{align*}
for all $\bm{x}\in\mathbb{R}^3$ (see \cite[Corollary 5.34 and Proposition 1.30]{ABR}). The series even converges absolutely and uniformly on compact subsets of $\mathbb{R}^3$ so for an arbitrary given $R>0$ we have the series representation $\nabla u=\sum_{m=0}^\infty \bm{p}_m$ in $\bigl[L^2\bigl(\overline{\mathcal{B}}(\bm{0},R)\bigr)\bigr]^3$. Integrating in polar coordinates and using the homogeneity of the functions $\bm{p}_m$ as well as the spherical harmonic decomposition \cite[Theorem 5.12]{ABR} of $L^2(\partial \mathcal{B}(\bm{0},1))$ now gives
\begin{align}
\bigl\|1_{\overline{\mathcal{B}}(\bm{0},R)}\nabla u\bigr\|_{L^2}^2&=\sum_{m=0}^\infty \sum_{\ell=0}^\infty \int_0^R r^{m+\ell+2}\,\mathrm{d}r\bigl(\bm{p}_m,\bm{p}_\ell\bigr)_{L^2(\partial \mathcal{B}(\bm{0},1))}\nonumber\\
&=\sum_{m=0}^\infty \frac{R^{2m+3}}{2m+3}\|\bm{p}_m\|_{L^2(\partial\mathcal{B}(\bm{0},1))}^2.\label{harmonicR}
\end{align}
By Lebesgue's dominated convergence theorem the left hand side of \eqref{harmonicR} converges to $\|\nabla u\|_{L^2}^2$ as $R\to \infty$ so the same must be true for the right hand side. But the right hand side simply can not converge as $R\to\infty$ unless $\bm{p}_m\equiv\bm{0}$ for all $m\in\mathbb{N}_0$ -- so we conclude that $\nabla u\equiv\bm{0}$. Therefore $u$ is a constant function -- consequently, the Poisson equation can at most have one solution in the space $D^1$.
\end{bemaerkning}

\renewcommand{\bibname}{References}
\bibliographystyle{plain}
\bibliography{Bibliography}
\end{document}